\documentclass[11pt]{article}

\usepackage{amsfonts, amsthm}
\usepackage{thmtools}
\usepackage{thm-restate}
\usepackage{amssymb}
\usepackage{amsmath}  
\usepackage{graphicx, comment, xcolor}
\usepackage{authblk}
\usepackage{mathrsfs}
\usepackage[colorlinks=true, allcolors=blue]{hyperref}
\usepackage{cleveref}

\usepackage[sort&compress, numbers]{natbib}

\usepackage[T1]{fontenc}

\usepackage[shortlabels]{enumitem}
\usepackage{tikz}
\usetikzlibrary{positioning,calc}
\usetikzlibrary{intersections}
\tikzset{every picture/.style={line width=0.75pt}} 

\usepackage{fullpage}

\newtheorem{theorem}{Theorem}
\newtheorem{lemma}{Lemma}
\newtheorem{corollary}{Corollary}
\newtheorem{proposition}{Proposition}
\newtheorem{definition}{Definition}
\theoremstyle{remark}

\newtheorem{claim}{Claim}

\newcommand{\tr}{\mathrm{tr}}
\newcommand{\x}{\mathbf{x}}
\newcommand{\p}{\mathbf{p}}
\newcommand{\ac}{\mathbf{a}}

\newcommand{\R}{\mathbf{R}}

\newcommand{\dd}{\text{\rm{d}}}
\newcommand{\poly}{\text{\rm{poly}}}

\newcommand{\diag}{\text{{\rm diag}}}

\newcommand{\ket}[1]{|#1\rangle}
\newcommand{\bra}[1]{\langle#1|}
\newcommand{\ketbra}[2]{|#1\rangle\langle#2|}
\newcommand{\bfa}{\mathbf{a}}
\newcommand{\braket}[2]{\langle#1|#2\rangle }

\title{Optimal convergence rates in trace distance and relative entropy for the quantum central limit theorem}

\author{Salman Beigi$^{1}$, Milad M. Goodarzi$^{1,2}$, Hami Mehrabi$^3$}
\affil{\it \footnotesize $^1$School of Mathematics, Institute for Research in Fundamental Sciences (IPM), P.O. Box 19395-5746, Tehran, Iran \\
\it \footnotesize $^2$Centre for Quantum Technologies, National University of Singapore, Singapore 117543, Singapore \\
\it \footnotesize $^3$School of Electrical and Computer Engineering, Cornell University, Ithaca, New York 14850, USA\\}

\date{}

\begin{document}
\maketitle

\begin{abstract}
A quantum analogue of the Central Limit Theorem (CLT) for bosonic system, first introduced by Cushen and Hudson (1971), states that the $n$-fold convolution $\rho^{\boxplus n}$ of an $m$-mode quantum state $\rho$—with zero first moments and finite second moments—converges weakly, as $n$ increases, to a Gaussian state $\rho_G$ with the same first and second moments as those of $\rho$, called its Gaussification. Recently, this result has been extended with estimates of the convergence rate in various distance measures. In this paper, we establish optimal rates of convergence in both the trace distance and quantum relative entropy. Specifically, we show that for a centered $m$-mode quantum state with finite third-order moments, the trace distance between $\rho^{\boxplus n}$ and $\rho_G$ decays at the optimal rate of $\mathcal{O}(n^{-1/2})$. Furthermore, for states with finite fourth-order moments (order $4+\delta$ for an arbitrary small $\delta>0$ if $m>1$), we prove that the relative entropy between $\rho^{\boxplus n}$ and $\rho_G$ decays at the optimal rate of $\mathcal{O}(n^{-1})$. Both of these rates are proven to be optimal, even when assuming the finiteness of all moments of $\rho$. These results relax previous assumptions on higher-order moments, yielding convergence rates that match the best known results in the classical setting. By giving explicit examples we also show that our moment assumptions are essentially minimal. We show that for any $\theta>0$, there exists a quantum state $\rho$ with finite moments of order less than $3-\theta$, such that the convergence rate of $\rho^{\boxplus n}$ to $\rho_G$ in trace distance is not $\mathcal O(n^{-1/2})$. Similarly, we show that for any $\theta>0$, there exists a quantum state $\rho$ with finite moments of order less than $4-\theta$, such that the relative entropy between $\rho^{\boxplus n}$ to $\rho_G$ does not decay at the rate $\mathcal O(n^{-1})$. Our proofs draw on techniques from the classical literature, including Edgeworth-type expansions of quantum characteristic functions, adapted to the quantum context. A key technical step in the proof of our entropic CLT is establishing an upper bound on the relative entropy distance between a general quantum state and its Gaussification, which is of independent interest.
\end{abstract}

{\footnotesize
\tableofcontents
}

\section{Introduction}

\subsection{Classical central limit theorems}

As a landmark in probability theory, the Central Limit Theorem (CLT) states that the normalized sum $S_n = \frac{X_1 + \cdots + X_n}{\sqrt{n}}$ of independent and identically distributed random variables $X_1, X_2, X_3, \dots$ with zero mean, converges to a Gaussian random variable $Z$ with the same mean and variance as that of $X_1$~\cite{Feller}. Depending on the structure of the underlying distribution, this convergence can be characterized using various topologies or metrics. Two notable such CLTs are the Lindberg--Levy CLT~\cite{lindeberg1922}, which asserts convergence in distribution under the assumption of a finite second moment of $X_1$, and the Berry--Esseen CLT~\cite{berry_accuracy_1941}, which asserts the uniform convergence of cumulative distribution functions at a rate of $\mathcal O(1/\sqrt{n})$, assuming the finiteness of the third absolute moment.  

Subsequently, several authors have sought stronger forms of the CLT. When measured in terms of the total variation distance ($1$-norm), and assuming the finiteness of the third absolute moment along with the presence of an absolutely continuous part in the density of $X_1$, the convergence rate of $\mathcal O(1/\sqrt{n})$ has been established in~\cite{MaSi}, specifically expressed as
\begin{align}\label{eq:Best-Trace}
	\| S_n - Z\|_1 = \mathcal{O}\Big(\frac{1}{\sqrt{n}}\Big) \quad \text{as} \quad n\rightarrow\infty.
\end{align}
See~\cite{BhRa} and references therein for a generalization of this result to random vectors in $\mathbb{R}^d$.  

A more recent advancement in this area was made by Bobkov, Chistyakov, and G\"{o}tze~\cite{BCG}, who determined the optimal convergence rate for the entropic CLT. They demonstrated that if $X_1$ has a finite absolute moment of order $4$ in addition to a finite (differential) entropy, then 
\begin{equation}\label{eq:Best-Relative}
	D(S_n\| Z) = \mathcal{O}\Big(\frac{1}{n}\Big) \quad \text{as} \quad n\rightarrow\infty,
\end{equation}
where $D(\cdot\| \cdot)$ denotes the Kullback--Leibler divergence. In the multi-dimensional case, i.e., for random vectors in $\mathbb{R}^d$, the aforementioned result is implied by the findings of~\cite{bobkov2013rate}, assuming the finiteness of the absolute moment of order $4+\delta$ for any $\delta>0$.

\subsection{Cushen--Hudson quantum CLT}

Quantum analogues of the classical CLT have been developed in a wide variety of settings over the past decades. The earliest results are due to Cushen and Hudson~\cite{CH} and Hepp and Lieb~\cite{hepp1973phase,hepp1973superradiant}, with a later algebraic variant by Giri and von Waldenfels~\cite{giri1978algebraic}. Since then, several quantum CLTs have appeared in diverse contexts such as quantum statistical mechanics~\cite{goderis1989central,matsui2002bosonic,cramer2010quantum,jakvsic2009central,arous2013central}, quantum field theory~\cite{derezinski1985boson,streater1987entropy,michoel2004central}, von Neumann algebras~\cite{goderis1989non,jakvsic2010quantum}, free probability theory~\cite{voiculescu1992free}, non-commutative stochastic processes~\cite{accardi1994quantum}, and quantum information theory~\cite{hayashi2006quantum,hayashi2009quantum,campbell2013continuous}. These results differ in several structural aspects, including the notion of independence they rely on, the topology or mode of convergence in which the limit is formulated, and types of limiting objects. They are also tailored to fundamentally different applications. For more details on these formulations of quantum CLT we refer to \cite{jakvsic2010quantum,lenczewski1995quantum} and references therein.

In this work, we focus on the framework of Cushen and Hudson~\cite{CH}, where the central limit behavior for quantum bosonic systems is established. Their formulation is physically relevant particularly in quantum optics where beam splitter networks implement the averaging process, and the limiting Gaussian state is fully characterized by its covariance matrix. Moreover, in the Cushen--Hudson framework one can meaningfully ask for quantitative rates in operationally relevant metrics such as trace distance and relative entropy, which are the focus of this work. 

Cushen and Hudson demonstrated that if $\rho$ is a centered $m$-mode quantum state (i.e., its first-order moments vanish) with a finite covariance matrix, then the sequence of its $n$-fold symmetric convolutions $\rho^{\boxplus n}$ converges to a centered Gaussian state $\rho_G$ that has the same covariance matrix as $\rho$, which is referred to as the Gaussification of $\rho$. The definition of $\rho^{\boxplus n}$ will be presented in Section~\ref{secND}, but for now, we note that physically, the state $\rho^{\boxplus n}$ arises from the action of a sequence of beam splitters on $n$ identical copies of $\rho$, followed by tracing out all but the first $m$ output modes.

This quantum CLT was initially expressed in terms of convergence in the weak operator topology of the Canonical Commutation Relations (CCR) algebra, which corresponds to pointwise convergence of the associated quantum characteristic functions. For quantum states, this topology coincides with the one induced by the trace distance (see~\cite[Section 3]{CRLimitTheorem} and ~\cite[Lemma 4]{lami_all_2018}). However, this theorem does not provide any bounds on the rate of convergence.

It is worth mentioning that the setting of Cushen and Hudson fits into the abstract algebraic CLT of Giri and von Waldenfels~\cite{giri1978algebraic}, which establishes the convergence of moments for general non-commutative $*$-algebras. Here, we work in the more specific CCR algebra in the framework of Cushen and Hudson because their theorem concerns convergence of the states themselves, and therefore has a genuine analytic structure. This analytic structure is essential for obtaining quantitative bounds in trace distance and relative entropy, which is the focus of this work. We emphasize that such quantitative bounds are not implied solely by the convergence of moments that is established in~\cite{giri1978algebraic}. 
Therefore, the Cushen--Hudson framework is not only operationally relevant as discussed above, but also the appropriate setting for establishing quantitative convergence bounds.

\subsection{Main results: optimal convergence rates under minimal assumptions}

Very little was known about convergence rates in the Cushen--Hudson CLT until recently, when Becker, Datta, Lami, and Rouzé~\cite{CRLimitTheorem} established the asymptotic bound $\|\rho^{\boxplus n} - \rho_G\|_2 = \mathcal{O}(1/\sqrt{n})$ for the convergence rate of the quantum CLT in terms of the Hilbert–Schmidt norm $\|\cdot\|_2$, assuming that the third-order moments of $\rho$ are finite. Building on this result, they also obtained the asymptotic bounds $\|\rho^{\boxplus n} - \rho_G\|_1 = \mathcal{O}\big(n^{-1/2(m+1)}\big)$ and $D(\rho^{\boxplus n}\|\rho_G) = \mathcal{O}\big(n^{-1/2(m+1)} \ln(n)\big)$ for the trace norm $\|\cdot\|_1$ and the quantum relative entropy $D(\cdot\|\cdot)$, under the same moment assumption. In these bounds, and throughout the paper, $m$ stands for the number of modes.

Subsequently, two authors of the current paper derived the asymptotic bounds $\|\rho^{\boxplus n} - \rho_G\|_1 = \mathcal{O}(1/\sqrt{n})$ and $D(\rho^{\boxplus n} \|\rho_G) = \mathcal{O}(1/n)$ in~\cite{beigi2023towards}. While these bounds align with the corresponding convergence rates in the classical setting and are shown to be optimal in~\cite{beigi2023towards}, they are established under stronger assumptions. Specifically, the former bound is derived under the condition that the moments of order $\max\{3, 2m\}$ is finite, whereas the latter relies on a Poincaré-type inequality in the quantum context, which is sufficiently strong to ensure the finiteness of all moments. When compared to classical results (with the exception of the single-mode case $m=1$), these assumptions appear quite stringent. Therefore, it is both important and highly desirable to relax these conditions.

In this paper, we overcome the limitations of the previous contributions described above and prove optimal rates of convergence for the quantum CLT with minimal assumptions. Our first main result concerns the convergence rate in the trace distance.

\begin{restatable}{theorem}{MainTheoremTrace} 
\label{MainTheoremTrace}
Let $\rho$ be a centered $m$-mode quantum state with finite moments of order $3$. Let $\rho_{G}$ be the Gaussification of $\rho$. Then we have
	\begin{equation}\label{Result2}
		\big\| \rho^{\boxplus n} -  \rho_G \big\|_1 = \mathcal{O} \Big(\frac{1}{\sqrt n}\Big) \quad \text{as} \quad n\rightarrow\infty.
	\end{equation}
\end{restatable}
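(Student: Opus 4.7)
The plan is to work at the level of the quantum characteristic function $\chi_{\rho}(\xi)=\tr[\rho\,D(\xi)]$, where $D(\xi)$ is the displacement operator. Convolution tensorizes this function as $\chi_{\rho^{\boxplus n}}(\xi)=\chi_{\rho}(\xi/\sqrt{n})^{n}$, while the Gaussification's characteristic function is a Gaussian in $\xi$ with the same covariance as $\rho$. Under the finite third-moment hypothesis, I would Taylor-expand $\chi_{\rho}(\xi)$ to second order with a controlled cubic remainder, substitute this expansion, and analyze the $n$-th power to arrive at a quantum Edgeworth-type formula
\[
\chi_{\rho^{\boxplus n}}(\xi)=\chi_{\rho_{G}}(\xi)\Bigl(1+\tfrac{1}{\sqrt n}\,P(\xi)+o(n^{-1/2})\Bigr),
\]
valid uniformly on a ball $|\xi|\le c\sqrt{n}$, with $P$ a homogeneous cubic polynomial encoding the third cumulants of $\rho$.

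Next I would translate this pointwise characteristic-function control into a trace-norm bound on the states. Because displacement operators are unbounded in trace norm, direct Fourier inversion is not viable; instead I would apply a Gaussian-noise channel $\mathcal{N}_{\epsilon}$ to both states, which multiplies every characteristic function by a Gaussian factor of the form $e^{-\epsilon|\xi|^{2}/2}$ and thus supplies the decay needed to make the inversion integral trace class. Combining the Edgeworth formula above with the known Hilbert--Schmidt estimate $\|\rho^{\boxplus n}-\rho_{G}\|_{2}=\mathcal{O}(n^{-1/2})$ from \cite{CRLimitTheorem}, a Cauchy--Schwarz argument against the smoothed kernel should yield
\[
\|\mathcal{N}_{\epsilon}(\rho^{\boxplus n})-\mathcal{N}_{\epsilon}(\rho_{G})\|_{1}=\mathcal{O}(n^{-1/2})
\]
for a suitable choice of $\epsilon$.

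To close the argument I would estimate the desmoothing error $\|\rho^{\boxplus n}-\mathcal{N}_{\epsilon}(\rho^{\boxplus n})\|_{1}$ together with its Gaussian analogue. Since $\rho^{\boxplus n}$ has the same covariance as $\rho$ for every $n$, these errors depend only on $\epsilon$ and the second moments of $\rho$---not on $n$---and vanish as $\epsilon\to 0$ at an $n$-independent rate. Balancing the smoothed and desmoothing contributions by choosing $\epsilon=\epsilon(n)$ appropriately would then yield the desired $\mathcal{O}(n^{-1/2})$ bound.

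The main obstacle is keeping every estimate dimension-free in the mode number $m$. The earlier bound of \cite{CRLimitTheorem} loses a factor $n^{-1/2(m+1)}$ because it passes from Hilbert--Schmidt to trace norm through an $m$-dependent energy cut-off, and the approach of \cite{beigi2023towards} removes this loss only at the price of assuming finite moments of order $2m$. The crucial technical step will therefore be a careful region decomposition in $\xi$-space---handling the bulk $|\xi|\lesssim\sqrt{n}$ via the Edgeworth expansion and the tail $|\xi|\gtrsim\sqrt{n}$ via direct second-moment estimates of the smoothed characteristic function $\chi_{\rho}(\xi/\sqrt n)^n$---so that the only higher moment required to achieve the optimal $\mathcal{O}(n^{-1/2})$ rate is the third one, and the bound stays uniform in $m$.
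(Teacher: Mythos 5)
There is a genuine gap, and it sits exactly at the step you flag as the ``main obstacle'': your smoothing/desmoothing scheme cannot reach the rate $\mathcal{O}(n^{-1/2})$ for $m\geq 1$, no matter how carefully you decompose $\xi$-space, because the loss is not in the characteristic-function estimates but in the conversion from Hilbert--Schmidt to trace norm itself. Quantitatively: the desmoothing error under a second-moment assumption behaves like $\|\rho^{\boxplus n}-\mathcal{N}_{\epsilon}(\rho^{\boxplus n})\|_1 \lesssim \sqrt{\epsilon}$ (it is uniform in $n$, as you say, but not better than a power of $\epsilon$), while the smoothed inversion picks up a volume factor from the Gaussian kernel in $2m$ real dimensions,
\begin{equation*}
\|\mathcal{N}_{\epsilon}(\rho^{\boxplus n})-\mathcal{N}_{\epsilon}(\rho_G)\|_1 \;\lesssim\; \Big(\int e^{-\epsilon\|\xi\|^2}\,\dd^{2m}\xi\Big)^{1/2}\,\big\|\rho^{\boxplus n}-\rho_G\big\|_2 \;\sim\; \epsilon^{-m/2}\, n^{-1/2}.
\end{equation*}
Optimizing $\epsilon^{1/2}+\epsilon^{-m/2}n^{-1/2}$ over $\epsilon$ gives $\epsilon \sim n^{-1/(m+1)}$ and a total error of order $n^{-1/(2(m+1))}$ --- which is precisely the suboptimal rate of Becker--Datta--Lami--Rouz\'e that the theorem is meant to improve. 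Your closing hope that a bulk/tail split at $\|\xi\|\sim\sqrt{n}$ rescues this is unfounded: the Edgeworth expansion already gives the correct bulk behavior in both approaches, and the tail is exponentially small by~\eqref{CharSupOrigin}; the $m$-dependent power is created by the $\epsilon$-balancing, not by the $\xi$-integrals.

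The paper escapes this trap with two ideas your proposal is missing. First, instead of smoothing, it uses Lemma~\ref{lem:norm1-norm2}: sandwiching $T=\rho^{\boxplus n}-\rho_G$ between annihilation operators $A=\bfa_1\cdots\bfa_m$ yields $\|T\|_1^2 \leq (\pi^2/6)^m \int |\chi_{A^\dagger T A}(z)|^2\,\dd^{2m}z$ with \emph{no} $\epsilon$ to balance; the price is that $\chi_{A^\dagger T A}$ involves derivatives of $\chi_T$ up to order $2m$, whose control via the Edgeworth expansion (Lemma~\ref{BRMethod} with $\kappa=2m$) would naively require moments of order $2m$ --- exactly the assumption of~\cite{beigi2023towards} that must be removed. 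Second, the Fock-truncation of Section~\ref{Sec:State-Truncation} resolves this: the truncated, recentered states $\tilde\sigma_n$ have all moments finite with the controlled growth $M_{2m}(\tilde\sigma_n)=\mathcal{O}(n^{(2m-3)/2})$ (Lemma~\ref{lem:BindMoments-Truncation} with $s=3$), so the Edgeworth remainder $n^{-(2m-2)/2}M_{2m}(\tilde\sigma_n)$ is still $\mathcal{O}(n^{-1/2})$, while Lemma~\ref{Aux1} guarantees $\|\rho^{\boxplus n}-\sigma_n^{\boxplus n}\|_1=\mathcal{O}(n^{-1/2})$ under third moments alone. Without some analogue of this pair of devices --- an $\epsilon$-free trace-norm bound plus an $n$-dependent regularization whose moment growth is absorbed by the Edgeworth prefactors --- your outline, as written, proves only the known $\mathcal{O}(n^{-1/(2(m+1))})$ rate (or the optimal rate in the single-mode case $m=1$, where the balancing loss is mildest and which was already settled in~\cite{beigi2023towards}).
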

We note that this theorem for the single-mode case ($m=1$) has already been established in~\cite{beigi2023towards}. However, it is stronger than the corresponding result for the multi-mode case presented in that work. Furthermore, as illustrated by the example in~\cite{beigi2023towards}, the convergence rate of $\mathcal{O}(1/\sqrt{n})$ mentioned above is tight, even under the assumption that all moments of $\rho$ are finite. Additionally, the assumptions made here aligned with the minimal moment conditions required to demonstrate the convergence rate of $\mathcal{O}(1/\sqrt{n})$ in the classical CLT regarding total variation distance.

The proof of Theorem \ref{MainTheoremTrace} is based on the classical method of normal approximation, as detailed in~\cite{BhRa}, and incorporates ideas from~\cite{CRLimitTheorem, beigi2023towards}. To estimate ${\big\| \rho^{\boxplus n} -  \rho_G \big\|}_1$, we begin by projecting $\rho$ onto the subspace spanned by Fock states with a total particle number not exceeding $n$. This results in a truncated state of finite rank, where all moments remain finite. We then carry out a careful analysis of the moments and covariance matrix of this truncated state and find controls on their asymptotic behavior as $n$ increases. Another key aspect of the proof involves extending the concept of Edgeworth-type expansions from the classical literature to the quantum realm by analyzing an Edgeworth-type expansion for the quantum characteristic function. Ultimately, we leverage an estimate of the trace norm in terms of the Hilbert--Schmidt norm, along with the quantum Plancherel identity and Hölder's inequality, to utilize the previously mentioned expansion and derive the desired bound.

In our second main result, we consider a stronger mode of convergence, namely, convergence in relative entropy distance.

\begin{restatable}{theorem}{MainTheorem}
\label{MainTheorem}
Let $\rho$ be a centered $m$-mode quantum state with finite moments of order $4+\delta$, where $\delta=0$ if $m=1$, and $\delta>0$ otherwise. Let $\rho_{G}$ be the Gaussification of $\rho$. Then we have
	\begin{equation}\label{AsympBoundEntropy}
		D\big(\rho^{\boxplus n} \big\| \rho_{G}\big) = \mathcal{O}\Big(\frac{1}{n}\Big) \quad \text{as} \quad n\rightarrow\infty.
	\end{equation}
\end{restatable}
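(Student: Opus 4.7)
The plan is to reduce the entropic bound to an $L^2$-type control of quantum characteristic functions, and then extract the $\mathcal{O}(1/n)$ rate via an Edgeworth-type expansion, in the spirit of the classical Bobkov--Chistyakov--G\"otze argument \cite{BCG} but adapted to density operators. A convenient starting observation is that, since $\rho^{\boxplus n}$ and $\rho_G$ share the same first and second moments, one has the identity $D(\rho^{\boxplus n}\,\|\,\rho_G) = S(\rho_G) - S(\rho^{\boxplus n})$, so the theorem is equivalent to an entropy gap estimate.

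The first main step, which is the technical lemma of independent interest flagged in the abstract, is a general chi-squared--type upper bound of the form
\[
    D(\sigma \,\|\, \sigma_G) \;\lesssim\; \tr\!\bigl[\sigma_G^{-1}(\sigma-\sigma_G)^2\bigr] \;+\; \text{lower-order corrections},
\]
valid for any centered $m$-mode state $\sigma$ that is close enough to its Gaussification $\sigma_G$ and has enough finite moments. I would prove this by representing the logarithm via an integral (e.g.\ $\log X - \log Y = \int_0^\infty[(Y{+}t)^{-1}-(X{+}t)^{-1}]\,\dd t$) and performing a second-order operator-Taylor expansion of $S(\sigma)$ around $\sigma_G$, using the Williamson symplectic normal form to diagonalize $\sigma_G$ and to keep the unbounded operator $\sigma_G^{-1}$ under control.

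Applying this bound with $\sigma = \rho^{\boxplus n}$, a Gaussian-weighted quantum Plancherel identity rewrites the right-hand side as a weighted $L^2$ integral of $\chi_{\rho^{\boxplus n}} - \chi_{\rho_G}$. Since $\chi_{\rho^{\boxplus n}}(\xi) = \chi_\rho(\xi/\sqrt n)^n$, an Edgeworth-type expansion
\[
    \chi_{\rho^{\boxplus n}}(\xi) = \chi_{\rho_G}(\xi)\bigl(1 + \tfrac{1}{\sqrt n}P_3(\xi) + \mathcal{R}_n(\xi)\bigr),
\]
with $P_3$ a cubic polynomial determined by the third moments of $\rho$, allows me to split the integral into a central region, where the expansion is valid, and a tail region, controlled by moments of order $4+\delta$ (order $4$ when $m=1$, following \cite{BhRa,bobkov2013rate}); the $L^2$ contribution of each piece is $\mathcal{O}(1/n)$. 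As in the proof of Theorem~\ref{MainTheoremTrace}, I would first project $\rho$ onto the subspace of Fock states with total particle number at most $n$ and then track the dependence on the truncation level.

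The main obstacle is Step~1: producing a chi-squared bound that is simultaneously sharp (so that $\mathcal{O}(1/n)$ on the right transfers to $\mathcal{O}(1/n)$ on the left) and applicable under mere $(4+\delta)$-moment assumptions. The delicacy comes from $\sigma_G^{-1}$ being unbounded and from the non-commutative second-order Taylor remainder, which must be controlled in trace norm (rather than the naive operator norm) via fine Hilbert--Schmidt-type estimates weighted by moments of $\sigma$. The multi-mode tail analysis in Step~2, which forces $\delta > 0$ when $m > 1$, is a secondary obstacle already familiar from the classical multivariate entropic CLT.
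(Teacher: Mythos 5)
Your overall architecture — Fock-basis truncation, reduction of the relative entropy to a weighted Hilbert--Schmidt quantity, quantum Plancherel, and an Edgeworth expansion split into a central and a tail region — matches the paper's proof, but your Step~1 fails as stated, and it fails precisely where you yourself locate the main obstacle. The naive quadratic bound $D(\sigma\|\sigma_G)\lesssim \tr\big[\sigma_G^{-1}(\sigma-\sigma_G)^2\big]$ carries the weight $\sigma_G^{-1}$, which in the Fock basis grows like $e^{\beta\cdot k}$, i.e.\ \emph{exponentially} in the total particle number, whereas finiteness of moments of order $4+\delta$ gives only polynomial control of the Fock-basis tail of $\rho^{\boxplus n}$. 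Consequently the chi-squared quantity is in general infinite under the theorem's hypotheses — already in the classical case the normalized-sum density need not have sub-Gaussian tails under fourth moments, so $\int (f_n-g)^2/g$ diverges, which is exactly why Bobkov--Chistyakov--G\"otze truncate rather than use the global bound $D\le \chi^2$. Relatedly, there is no ``Gaussian-weighted quantum Plancherel identity'': Plancherel converts \emph{unweighted} Hilbert--Schmidt norms into $L^2$ norms of characteristic functions; a polynomial weight such as $(N_m+m)^{(m+3)/2}$ can be absorbed through the derivative identities \eqref{eq:characterisitc-derivative-0}--\eqref{eq:characterisitc-derivative-1}, but the exponential weight $\sigma_G^{-1}$ cannot. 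Your hedges (``close enough,'' ``lower-order corrections'') do not repair this, because the failure is quantitative, not merely technical.

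The paper's actual key lemma (Theorem~\ref{boundEntropy-Char}, built from Lemmas~\ref{PreliminaryLemma} and~\ref{TruncationLemma}) applies the scalar inequality $t\ln t\le (t-1)+(t-1)^2$ entrywise in the Fock basis and truncates the chi-squared sum at $\beta\cdot k\le t$, paying controllable tail terms; optimizing over $t$ produces a bound that is \emph{linear}, not quadratic, in the weighted norm $\big\|(\rho-\tau_\alpha)(N_m+m)^{(m+3)/2}\big\|_2$, plus a term $\alpha^2$. This linearity overturns your accounting in Step~2: the uncorrected difference $\rho^{\boxplus n}-\rho_G$ is only $\mathcal{O}(n^{-1/2})$ in this norm (the cubic Edgeworth term dominates), so feeding it into the linear bound yields only $\mathcal{O}(n^{-1/2})$ for the entropy. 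This is why the paper compares $\tilde\sigma_n^{\boxplus n}$ not with its Gaussification but with the Edgeworth-corrected operator $\tau_{n,\alpha}$ defined by $\chi_{\tau_{n,\alpha}}(z)=\chi_{\tau_n}(z)\big(1+\alpha E_n(z)\big)$, with $\alpha=n^{-1/2}$ and $E_n=E_{\tilde\sigma_n,1}$ the odd cubic polynomial from the expansion: the leading correction is subtracted at the operator level, the remainder has weighted Hilbert--Schmidt size $\mathcal{O}(1/n)$, and the correction's own cost enters only as $\alpha^2=1/n$. Your plan keeps the cubic term $P_3$ only inside the squared $L^2$ integrand — fine for a quadratic bound, but incompatible with the linear bound that actually survives the $(4+\delta)$-moment assumptions — so without the $\tau_\alpha$ device (or a genuinely provable quadratic bound in this moment regime, which you have not supplied) your argument does not reach $\mathcal{O}(1/n)$.
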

It is important to point out that, by virtue of the quantum Pinsker inequality, the bound in~\eqref{AsympBoundEntropy} implies the one in~\eqref{Result2}, though it is established under a stricter regularity assumption. Additionally, this bound aligns with the classical case in terms of both the convergence rate and conditions. Furthermore, based on the example in~\cite{beigi2023towards}, our convergence rate remains optimal, even when assuming the finiteness of all moments of $\rho$.

The proof of Theorem \ref{MainTheorem} draws inspiration from the approach developed by Bobkov, Chistyakov, and Götze in \cite{BCG}, where they link the relative entropy distance to a Sobolev norm of the characteristic functions. Their analysis involves examining higher-order derivatives of the characteristic function through Edgeworth-type expansions, yielding refined approximations of the norm. We follow a similar strategy. A key technical element in our proof is establishing an upper bound on the relative entropy distance between a general quantum state and its Gaussification, achieved at the operator level. We remark in passing that as an immediate corollary of this result, a general bound on the relative entropy of non-Gaussianity \cite{genoni2008quantifying,marian2013relative} follows, which is of independent interest. Finally, by leveraging the quantum Plancherel identity and employing the truncation technique discussed earlier, we are able to perform a semiclassical analysis of the quantum characteristic function to obtain the desired bound.

We also by giving explicit examples show that the moment assumptions in Theorems~\ref{MainTheoremTrace} and~\ref{MainTheorem} are essentially minimal and cannot be weakened. 

\begin{theorem}\label{thm:examples}
For any $0<\theta<1$ the following hold:
\begin{enumerate}[label=(\roman*)]
\item  There is a single-mode quantum state $\rho$ whose moments of order $\kappa$, for any $\kappa<3-\theta$, are finite, while
\[
\sup_{n} \sqrt n \big\| \rho^{\boxplus n } - \rho_G \big\|_1 = + \infty.
\]
\item  There is a single-mode quantum state $\rho$ whose moments of order $\kappa$, for any $\kappa<4-\theta$, are finite, while
\[
\sup_{n} n  D\big(\rho^{\boxplus n} \big\| \rho_G\big)= + \infty.
\]
\end{enumerate}
\end{theorem}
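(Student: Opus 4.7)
The plan is to construct both examples by a single classical-to-quantum lift that reduces the quantum lower bounds to classical heavy-tail lower bounds for the CLT. Fix $0<\theta<1$ and let $p$ be a centered probability density on $\mathbb{R}^2$ with finite second moments and polynomial tail $p(z)\asymp |z|^{-\alpha}$ at infinity, where $\alpha=5-\theta$ for part (i) and $\alpha=6-\theta$ for part (ii). This choice yields $\int p(z)|z|^\kappa\,d^2z<\infty$ iff $\kappa<\alpha-2$, matching $\kappa<3-\theta$ and $\kappa<4-\theta$ respectively. Using this as a $P$-function, form the single-mode state
\begin{equation*}
\rho \;:=\; \int_{\mathbb{R}^2} p(z)\,D(z)\ketbra{0}{0}D(z)^\dagger\,d^2z,
\end{equation*}
where $D(z)$ is the displacement operator and $\ket{0}$ is the vacuum. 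Since $\bra{0}D(z)^\dagger(a^\dagger a)^k D(z)\ket{0}=|z|^{2k}+O(|z|^{2k-2})$, the moments of $\rho$ of order $\kappa$ are finite iff $\int p(z)|z|^\kappa\,d^2z<\infty$, placing $\rho$ in the target moment class.

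The structural reduction uses the fact that the balanced beam splitter acts on coherent states by $\ket{z_1}\otimes\ket{z_2}\mapsto\ket{(z_1+z_2)/\sqrt 2}\otimes\ket{(z_1-z_2)/\sqrt 2}$, so by linearity the $\boxplus$-convolution of coherent-state mixtures is again such a mixture whose $P$-function is the classical $1/\sqrt 2$-rescaled convolution. Iterating,
\begin{equation*}
\rho^{\boxplus n} \;=\; \int p_n(z)\,D(z)\ketbra{0}{0}D(z)^\dagger\,d^2z,
\end{equation*}
where $p_n$ is the density of $(Z_1+\cdots+Z_n)/\sqrt n$ for i.i.d.\ $Z_i\sim p$, and $\rho_G$ is the analogous integral with $\nu_G$, the classical Gaussian limit of $p_n$. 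Applying heterodyne detection (the POVM $\{\pi^{-1}\ketbra{\alpha}{\alpha}\}$) and invoking the data-processing inequality for both the trace distance and the relative entropy gives
\begin{equation*}
\big\|\rho^{\boxplus n}-\rho_G\big\|_1 \;\geq\; \big\|(p_n-\nu_G)*g\big\|_{L^1},\qquad D\big(\rho^{\boxplus n}\big\|\rho_G\big)\;\geq\; D\big(p_n*g\,\big\|\,\nu_G*g\big),
\end{equation*}
with $g(z)=\pi^{-1}e^{-|z|^2}$ the vacuum Husimi kernel. The quantum problem thereby reduces to a classical one.

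The classical lower bound is a single-big-jump estimate. At scale $|z|\asymp t$ with $\sqrt{\log n}\lesssim t\ll\sqrt n$, the single-big-jump heuristic $p^{*n}(y)\asymp n\,p(y)$ (valid for $|y|\gg\sqrt{n\log n}$) combined with the Jacobian of the $1/\sqrt n$-rescaling yields $p_n(z)\asymp n^{2-\alpha/2}|z|^{-\alpha}$, while $\nu_G$ is superpolynomially small in the same region. The annulus mass is $\asymp n^{2-\alpha/2}t^{2-\alpha}$; optimizing $t\asymp\sqrt{\log n}$ gives $\|p_n-\nu_G\|_{L^1}\gtrsim n^{-(1-\theta)/2}/\poly(\log n)$ for $\alpha=5-\theta$, which exceeds $n^{-1/2}$. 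For the relative entropy, the tail region contributes an extra factor $-\log\nu_G\asymp t^2$ and one obtains $D(p_n\|\nu_G)\gtrsim n^{-(2-\theta)/2}/\poly(\log n)$ for $\alpha=6-\theta$, which exceeds $1/n$. Since $g$ has unit width, convolution by $g$ preserves these tail signatures at scales $t\gg 1$, transferring the estimates to the quantum side.

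The main obstacle is the classical pointwise lower bound on $p_n$ at the intermediate scale $|z|\asymp\sqrt{\log n}$, which sits at the crossover between the Gaussian bulk and the heavy-tail single-big-jump regime. The appropriate technology here is local limit theorems for distributions in the domain of normal attraction with regularly varying tails; the crossover must be handled carefully to ensure the heavy-tail term dominates at the chosen scale. A secondary but technical issue is that Gaussian smoothing by $g$ is a contraction for both TV and KL, so one must verify that the smoothing loss is only polylogarithmic at scale $\sqrt{\log n}$, which is straightforward since $g$ spreads mass by $O(1)$ while the tail annulus is $\Omega(\sqrt{\log n})$ wide.
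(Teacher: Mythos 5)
Your strategy is sound but genuinely different from the paper's. The paper also builds $\rho$ as a classical mixture, but of \emph{thermal} states, $\rho=\int w(s)\,\tau_s\,\dd s$ with a heavy-tailed mixing density $w$. Part (i) is then proved by contradiction: assuming the $\mathcal O(1/\sqrt n)$ trace-norm rate, one pairs $\rho^{\boxplus n}-\rho_G$ against a bounded Hermitian test operator whose characteristic function is $e^{2|z|^2}\mathbf 1_{\{|z|\le 1\}}$, and a Plancherel/Taylor analysis of $\chi_\rho$ forces finiteness of the moments of $w$ of order $3-\delta$ for all $0<\delta\le 1$, contradicting the tail. Part (ii) rests on an explicit operator-norm expansion $\rho^{\boxplus n}=\rho_G+n\int(\tau_{s_n}-\tau_1)w(s)\,\dd s+\mathcal O(n^{-(2-\delta)})$ (Lemma~\ref{lemrestatable}) followed by a direct computation of the relative entropy on the Fock-diagonal spectral distribution, split at photon number $10\ln n$. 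Your coherent-state ($P$-function) mixture instead makes the quantum convolution literally classical --- since $\chi_\rho(z)=e^{-|z|^2/2}\hat p(z)$, the identity $\chi_{\rho^{\boxplus n}}(z)=\chi_\rho(z/\sqrt n)^n$ shows in one line that $\rho^{\boxplus n}$ has $P$-function $p_n$ --- and heterodyne data processing reduces both lower bounds to classical statements about $p_n*g$ versus a Gaussian. That reduction is valid, your moment bookkeeping ($\kappa<\alpha-2$) is correct, and the smoothing by $g$ is indeed harmless at your working scale $t\asymp\sqrt{\log n}$. What each approach buys: the paper's thermal mixtures keep everything Fock-diagonal, which is what makes its exact entropy computation possible; your coherent mixtures buy a clean reduction to the classical CLT at the cost of a measurement, which here loses only an $O(1)$-width Gaussian smoothing.

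Two points need repair, though both are fixable. First, the obstacle you flag --- a pointwise local limit theorem for $p_n$ at the crossover scale --- is not actually needed. For the trace norm, test against $A_t=\{|z|\ge t\}$ so that $\|(p_n-\nu_G)*g\|_{L^1}\ge \int_{A_t}(p_n*g)-\int_{A_t}(\nu_G*g)$; for the relative entropy, apply binary data processing to the same two-outcome test, $D(p_n*g\,\|\,\nu_G*g)\ge d(P\|Q)$ with $P,Q$ the two tail masses. Then only the integrated single-big-jump \emph{lower} bound $\Pr(|S_n|\ge t\sqrt n)\ge (1-o(1))\,n\,\bar F(t\sqrt n)$ is required, and that direction is elementary (one summand makes the jump while the remaining sum stays at CLT scale); no Nagaev-type matching upper bound or local CLT is involved. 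Second, your claim that $\nu_G$ is ``superpolynomially small'' at $t\asymp\sqrt{\log n}$ is wrong as stated: the Gaussian tail at $t=C\sqrt{\log n}$ is $n^{-C^2/(2\sigma^2)}$, only polynomially small. But choosing the constant $C$ large enough --- so that $C^2/(2\sigma^2)$ exceeds $(1-\theta)/2$ for part (i), respectively $(2-\theta)/2$ for part (ii) --- makes $Q$ negligible against $P$ and gives $\log(P/Q)\asymp\log n$, yielding exactly your rate $(n\log n)^{-(2-\theta)/2}$ for the entropy, which matches the paper's final bound. With these repairs your argument closes.
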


To construct these examples, we employ ideas from the classical literature, particularly techniques from~\cite{ibragimov1975independent} for the trace distance, and from~\cite{bobkov2013rate} for the relative entropy.

\subsection{Structure of the paper}

In Section~\ref{secND}, we establish notation for bosonic quantum systems and phase space formalism and introduce the concept of quantum convolution in Subsection~\ref{subsec:convolution}. In Section~\ref{Sec:Edge} we prove some basic results on moments of a quantum state and introduce the Edgeworth-type expansion for quantum states. Section~\ref{Sec:State-Truncation} focuses on the truncation technique discussed earlier. In Section~\ref{upperbound}, we establish an upper bound on the relative entropy between a general quantum state and a Gaussian state in terms of the Hilbert--Schmidt norm. Sections~\ref{Sec:Proof-Handle} and~\ref{proofMain} provide the proofs of Theorems~\ref{MainTheoremTrace} and~\ref{MainTheorem}, respectively, drawing on results from the preceding sections. The examples showing the essential minimality of our assumptions on the moments are presented in Section~\ref{secMinAssumption}.
Section~\ref{secCo} concludes with final remarks, and additional technical details are provided in the appendices.


\section{Preliminaries} \label{secND}

In this section, we review essential definitions and concepts regarding continuous-variable bosonic quantum systems, which will be used throughout the paper. Our primary reference for this topic is~\cite{Serafini}.

For two self-adjoint operators $A$ and $B$ acting on a Hilbert space, we write $A \geq B$ if the operator $A - B$ is positive. For an operator $T$ and $p \geq 1$, the \emph{Schatten $p$-norm} of $T$ is defined by
\[
\| T \|_{p} := \Big(\tr\big(|T|^{p}\big)\Big)^{\frac{1}{p}},
\]
where $|T| := \sqrt{T^\dagger T}$. In particular, for $p=1$ and $p=2$, we obtain the \emph{trace norm} $\|\cdot\|_1$ and the \emph{Hilbert--Schmidt norm} $\|\cdot\|_2$, respectively. If $\| T \|_{1}$ is finite, $T$ is called a \emph{trace-class} operator.

The Hilbert space of an $m$-mode bosonic quantum system is given by $\mathcal{H}_m = L^2(\mathbb{R}^m)$. The state space for this quantum system consists of positive trace-class operators acting on $\mathcal{H}_m$ with $\tr(\rho) = 1$. The quadrature operators $\x_j$ and $\p_j$ act on $\mathcal{H}_m$ for $j = 1, \dots, m$, and they satisfy the canonical commutation relation
\begin{equation}\label{eq:CCR-Quad}
	[\R , \R^{\top}] = i \Omega_m,
\end{equation}
where $\R := \big( \x_1, \p_1, \cdots, \x_m, \p_m \big)^{\top}$, and $\Omega_m$ is the symplectic form defined as
\begin{equation}
	\Omega_m := \bigoplus_{j=1}^{m} \Omega_1, \qquad \quad \Omega_1 = \begin{pmatrix}
		0 & 1 \\
		-1 & 0 
	\end{pmatrix}.
\end{equation}
Here, $[\R , \R^{\top}]$ represents the coordinate-wise commutation defined by $[A, B] = AB - BA$. The associated annihilation and creation operators $\ac_j$ and $\ac_j^\dagger$ for the $j$-th mode are given by
\[
\ac_j := \frac{1}{\sqrt 2} \big( \x_j + i \p_j \big), \qquad \ac_j^\dagger := \frac{1}{\sqrt 2} \big( \x_j - i \p_j \big),
\]
and the canonical commutation relations can be expressed in terms of $\ac_j$ and $\ac_j^\dagger$ as
\begin{equation}\label{eq:CCR-AnnCre}
	[\ac_j, \ac_k^\dagger] = \delta_{jk} \mathbb{I}, \qquad  [\ac_j, \ac_k] = 0.
\end{equation}
Here, $\mathbb{I}$ denotes the identity operator on $\mathcal{H}_m$, and $\delta$ is the Kronecker delta function.

The number operator is defined by
\[
N_m := \sum_{j=1}^m \ac_j^\dagger \ac_j.
\]
The eigenstates of $N_m$ are the Fock states, denoted by $\ket{k_1, \dots, k_m} := \ket{k_1} \otimes \cdots \otimes \ket{k_m}$, which form a complete orthonormal basis for $\mathcal{H}_m$. These Fock states, also known as number states, satisfy
\[
\ac_j \ket{k_j} = \sqrt{k_j} \ket{k_{j}-1}, \qquad \ac_j^\dagger \ket{k_j} = \sqrt{k_j+1} \ket{k_{j}+1},
\]
and moreover, $N_m \ket{k_1, \dots, k_m} = (k_1 + \cdots + k_m) \ket{k_1, \dots, k_m}$.

The von Neumann entropy of a quantum state $\rho$ is defined as
\[
S(\rho) = -\tr \big( \rho \ln \rho\big),
\]
while the quantum relative entropy between two quantum states $\rho$ and $\sigma$ is given by
\[
D\big(\rho \| \sigma \big) = \tr \big( \rho (\ln \rho - \ln \sigma) \big).
\]

\subsection{Phase space formalism}\label{subsec:Phase space formalism}

For any $z = (z_1, z_2, \dots , z_m)^{\top} \in \mathbb{C}^m$, the $m$-mode displacement operator $D_z$ is defined by
\[
D_z := \bigotimes_{j=1}^m \exp \big( z_j \ac_j^\dagger - \bar{z}_j \ac_j \big).
\]
The term ``displacement'' highlights that conjugation by $D_z$ shifts the annihilation and creation operators. Specifically, for any $j \in \{ 1, \dots, m\}$ and $z \in \mathbb{C}^m$, the following relations hold:
\begin{align}\label{eq:Dz-displace} 
	D_z^\dagger \ac_j D_z = \ac_j + z_j, \qquad  D_z^\dagger \ac_j^\dagger D_z = \ac_j^\dagger + \bar{z}_j .
\end{align}
Using~\eqref{eq:CCR-AnnCre} and the Baker--Campbell--Hausdorff formula~\cite{Serafini}, for any $z , w \in \mathbb{C}^m$, we obtain
$$
D_z D_w = e^{\frac{1}{2} (z^{\top} \bar w - \bar z^{\top} w)} D_{z+w}.
$$
Additionally, we can express $D_z$ in the following alternative forms:
\begin{align}\label{eq:Dz-alt-form}
	D_z = \prod_{j=1}^m e^{- \frac 12|z_j|^{2}} e^{z_j \ac_j^{\dagger}} e^{-\bar z_j \ac_j} = \prod_{j=1}^m e^{\frac12 {|z_j|}^{2}} e^{-\bar z_j \ac_j} e^{z_j \ac_j^{\dagger}}.
\end{align}

The \emph{quantum characteristic function} of a trace-class operator $T$ acting on $\mathcal{H}_m$ is defined as
$$
\chi_T(z) := \tr \big( T D_z \big).
$$
Analogous to the classical case, the operator $T$ can be recovered from its characteristic function as
$$
T = \frac{1}{\pi^m} \int_{\mathbb{C}^m} \chi_T(z) D_{-z} \dd^{2m} z.
$$
The (complex) Fourier transform of $\chi_T(z)$, known as the Wigner function of $T$, is denoted by $W_{T}(z)$ and given by
\begin{equation}\label{eq:WignerFunc}
W_{T}(z) := \frac{1}{\pi^{2m}} \int \chi_{T}(w) e^{z^{\top} \bar w - \bar z^{\top} w} \dd^{2m} w.	
\end{equation}
Obviously, we can write the characteristic function using Wigner function as 
\begin{equation}\label{eq:InvWignerFunc}
    \chi_T(z) = \int W_T(u) e^{u^\dagger z - z^\dagger u} \dd^{2m} u
\end{equation}
It is straightforward to verify that $W_{\rho}(z) \in \mathbb{R}$ for any $m$-mode quantum state $\rho$, and
\begin{equation}\label{eq:WignerNormal}
\int W_{\rho}(z) \dd^{2m} w = \chi_{\rho}(0)= \tr(\rho) = 1.
\end{equation}
However, a key issue with $W_{\rho}(z)$ is that it can assume negative values, preventing it from serving as a genuine probability distribution function. Consequently, $W_{\rho}(z)$ is sometimes referred to as the \emph{Wigner quasiprobability distribution}. Another important property of the characteristic and Wigner functions is the quantum Plancherel identity:
\begin{equation} \label{eq:Plancherel}
	\| T\|^{2}_2 = \frac{1}{\pi^m} \int \big| \chi_{T}(z)\big|^{2} \dd^{2m}z=\pi^m \int W_{T}(z)^{2} \dd^{2m}z.
\end{equation}

We will use the following lemma to estimate the trace norm.
\begin{lemma} \emph{(\cite[Lemma 2]{beigi2023towards})} \label{lem:norm1-norm2}
Let $T$ be an operator such that $A^\dagger TA$ is trace-class, where $A= \bfa_1\cdots\bfa_m$. Then we have
$$\|T\|_1^2 \leq \Big(\frac{\pi^2}{6}\Big)^m \int |\chi_{A^\dagger TA}(z)|^2 \dd^{2m} z.$$
\end{lemma}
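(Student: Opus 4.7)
The plan is to prove the lemma by establishing a polar-type identity expressing $T$ in terms of $\T := A^\dagger T A$, and then invoking the noncommutative H\"{o}lder inequality together with the quantum Plancherel identity. The key observation is that the action of $A$ can be ``inverted'' modulo an index shift, using the number operators on the individual modes.

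First, I would introduce, for each mode $j$, the unilateral shift $V_j$ defined by $V_j|k\rangle = |k+1\rangle$, which satisfies $V_j^\dagger V_j = \mathbb{I}$ and $\|V_j\|_\infty = 1$. Combining the factorization $\bfa_j = \sqrt{N_j+1}\, V_j^\dagger$ with the fact that operators on distinct modes commute yields
\[
A \;=\; \sqrt{\mathcal{N}}\, U^\dagger, \qquad \mathcal{N} \;:=\; \prod_{j=1}^{m}(N_j+1), \qquad U \;:=\; V_1 V_2 \cdots V_m,
\]
so that $\T = A^\dagger T A = U \sqrt{\mathcal{N}}\, T\, \sqrt{\mathcal{N}}\, U^\dagger$. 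Multiplying on the left by $U^\dagger$ and on the right by $U$, invoking $U^\dagger U = \mathbb{I}$, and using the invertibility of $\mathcal{N}$ (its spectrum lies in $[1,\infty)$), I arrive at the crucial identity
\[
T \;=\; \mathcal{N}^{-1/2}\, U^\dagger\, \T\, U\, \mathcal{N}^{-1/2}.
\]

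Next, I would apply the three-factor noncommutative H\"{o}lder inequality $\|XYZ\|_1 \leq \|X\|_4 \|Y\|_2 \|Z\|_4$ to the right-hand side, with $X = \mathcal{N}^{-1/2} U^\dagger$, $Y = \T$, and $Z = U \mathcal{N}^{-1/2}$. Since $\|U\|_\infty \leq 1$, sub-multiplicativity of Schatten norms gives $\|X\|_4, \|Z\|_4 \leq \|\mathcal{N}^{-1/2}\|_4$, and a direct computation on the Fock basis yields
\[
\|\mathcal{N}^{-1/2}\|_4^{4} \;=\; \tr\bigl(\mathcal{N}^{-2}\bigr) \;=\; \prod_{j=1}^{m} \sum_{k \geq 0} \frac{1}{(k+1)^{2}} \;=\; \Big(\frac{\pi^{2}}{6}\Big)^{m}.
\]
Squaring the resulting inequality $\|T\|_1 \leq (\pi^2/6)^{m/2}\, \|\T\|_2$ and converting $\|\T\|_2^2$ via the quantum Plancherel identity~\eqref{eq:Plancherel} produces
\[
\|T\|_1^{2} \;\leq\; \Big(\frac{\pi^{2}}{6}\Big)^{m} \|\T\|_2^{2} \;=\; \Big(\frac{\pi}{6}\Big)^{m} \int |\chi_{\T}(z)|^{2}\, \dd^{2m}z,
\]
which is strictly tighter than the stated bound, since $\pi/6 < \pi^2/6$, and hence implies it \emph{a fortiori}.

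The main subtlety I anticipate is justifying that the displayed identity for $T$ is genuinely an operator equation rather than a formal relation on matrix elements. This can be settled by comparing entries in the Fock basis: the computation $\T_{\mathbf{k}+\mathbf{1},\mathbf{j}+\mathbf{1}} = \sqrt{\prod_{i}(k_{i}+1)(j_{i}+1)}\, T_{\mathbf{k},\mathbf{j}}$ inverts directly to recover $T_{\mathbf{k},\mathbf{j}}$ from the weighted matrix elements of $\T$, and these agree with the entries of $\mathcal{N}^{-1/2} U^\dagger \T U \mathcal{N}^{-1/2}$. Since $\T$ is trace-class (hence bounded) by hypothesis and the auxiliary operators $\mathcal{N}^{-1/2}$ and $U$ have operator norm at most $1$, the right-hand side of the identity is manifestly trace-class, and the entire argument reduces to a careful bookkeeping of Schatten exponents.
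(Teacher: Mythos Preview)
The paper does not provide its own proof of this lemma; it is simply cited to~\cite{beigi2023towards}. Your argument is correct and self-contained: the polar-type factorization $A=\sqrt{\mathcal N}\,U^\dagger$ with $U$ an isometry, the resulting identity $T=\mathcal N^{-1/2}U^\dagger\T U\mathcal N^{-1/2}$ (verified on Fock matrix elements), the three-term H\"older inequality with exponents $(4,2,4)$, and the Plancherel identity all combine cleanly. Every operator on the right-hand side of the identity for $T$ is bounded, so the domain concern you raise is indeed benign. Note also that your argument actually yields the sharper constant $(\pi/6)^m$ in place of $(\pi^2/6)^m$, so the stated lemma follows \emph{a fortiori}; the extra factor of $\pi^m$ in the paper's formulation is precisely the one coming from Plancherel, and the paper evidently absorbs it into the constant rather than optimizing.
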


\subsection{Gaussian states}\label{subsec:Gaussian-states}

The vector of the \emph{first-order moments} of an $m$-mode quantum state $\rho$ is defined by
$$ \mathbf{d}(\rho) := \tr\big(\rho \R\big) := \big(\tr(\rho\x_1), \tr(\rho\p_1), \dots ,  \tr(\rho\x_m) , \tr(\rho \p_m)\big)^{\top} \in \mathbb R^{2m}.$$
A quantum state $\rho$ is said to be centered if $\mathbf{d}(\rho)$ is the zero vector. The \emph{covariance matrix} of $\rho$ is defined by
$$
\boldsymbol{\gamma}(\rho) := \tr\Big(\rho \big\{ \R - \mathbf{d}(\rho), {(\R - \mathbf{d}(\rho))}^{\top} \big\}\Big),
$$
where $\{\cdot,\cdot\}$ denotes the coordinate-wise anti-commutation $\{A, B\} = AB+BA$. As a consequence of~\eqref{eq:CCR-Quad}, the covariance matrix always satisfies the \emph{Robertson-Schr\"{o}dinger uncertainty relation}~\cite{Serafini}
\begin{equation}\label{eq:Uncertainty}
\boldsymbol{\gamma}(\rho) + i \Omega_m \geq 0.
\end{equation}

An $m$-mode quantum state is called Gaussian if its characteristic function (or equivalently, its Wigner function) is a Gaussian function. Similarly as in the classical setting, a Gaussian state is fully determined by its first moments and covariance matrix. Specifically, for a Gaussian state $\sigma$, we have
\begin{equation}\label{eq:Char-NonWilliam}
	\chi_{\sigma}(z) = \exp\bigg(-\frac 14 \hat{z}^\dagger \Lambda_m^\dagger \boldsymbol{\gamma}(\sigma) \Lambda_m  \hat{z} + i\mathbf{d}(\sigma)^\top \Lambda_m \hat z\bigg),
\end{equation}
where $\hat{z} = (z_1, \bar{z}_1, \cdots, z_m, \bar{z}_m)^\top$ and
\begin{equation}\label{eq:Lambda_m}
\Lambda_m = \bigoplus_{j=1}^{m} \Lambda_1, \qquad \quad	\Lambda_1 = \frac{1}{\sqrt 2} \begin{pmatrix}
	  -i & i\\
	  -1 & -1 
	  \end{pmatrix}.
\end{equation}

\begin{definition}
For any $m$-mode quantum state $\rho$ with finite first-order moments and covariance matrix, the Gaussification of $\rho$, denoted by $\rho_G$, is defined as the Gaussian quantum state with the same first-order moments and covariance matrix as those of $\rho$. 
\end{definition}

A unitary $U$ is said to be Gaussian if it maps Gaussian states to Gaussian states under conjugation. It is well known that for any $m$-mode quantum state $\rho$, there exists a Gaussian unitary $U$ that both centers $\rho$ and diagonalizes its covariance matrix in terms of the symplectic eigenvalues and  unsqueezes it~\cite{Serafini}. That is, there exists a Gaussian unitary $U$ such that
	\begin{equation}\label{eq:Moments-Williamson}
		\mathbf{d}(U \rho U^{\dagger}) = 0, \quad  \boldsymbol{\gamma}(U \rho U^{\dagger}) = \diag(\nu_1, \nu_1, \dots , \nu_m, \nu_m).
	\end{equation}
We refer to $U \rho U^{\dagger}$ as the Williamson form of $\rho$. When $\rho$ is in the Williamson form and its covariance matrix is diagonal as in~\eqref{eq:Moments-Williamson}, $\rho_G$ is a thermal state and can be expressed as
$$
\rho_G = \tau_1 \otimes \cdots \otimes \tau_m,
$$
where
\[
	\tau_j = \big( 1 - e^{-\beta_j}) \exp\big(-\beta_j \ac_j^\dagger \ac_j\big),
\]
and $\beta_j >0$ is given by
\[
	\nu_j = \frac{1 + e^{-\beta_j}}{1 - e^{-\beta_j}} \geq 1.
\]
The condition $\nu_j \geq 1$ follows from~\eqref{eq:Uncertainty}, since $\boldsymbol{\gamma}(\rho)$ is always positive definite with eigenvalues greater than or equal to 1. Notably, if $\nu_j = 1$ (equivalently, $\beta_j = +\infty$), then $\tau_j = \ketbra{0}{0}$ is the \emph{vacuum state}. Specifically, $\nu_j = 1$ implies that $\operatorname{tr}(\rho \ac_j^\dagger \ac_j) = 0$, so the support of the $j$-th mode of $\rho$ lies in the kernel of $\ac_j$, which is spanned by the vacuum state. Thus, if $\rho$ is in the Williamson form and $\nu_j = 1$, then the $j$-th mode is in the pure state $\ket{0}$ and is decoupled from the other modes.

The characteristic and Wigner functions of the thermal state $\rho_G = \tau_1\otimes \cdots \otimes \tau_m$ are given by
\begin{align}\label{eq:Gaussian-characteristic-Wigner}
\chi_{\rho_G}(z) = \prod_{j=1}^m e^{-\frac 12 \nu_j|z_j|^2}, \qquad \quad W_{\rho_G}(z)= \prod_{j=1}^m \frac{2}{\pi \nu_j} e^{-\frac{2}{\nu_j}|z_j|^2}.
\end{align}

\subsection{Quantum convolution}\label{subsec:convolution}

We define an addition rule for two $m$-mode quantum states using a specific Gaussian unitary associated with a beam splitter (see~\cite{KS14} and references therein). Let $\rho$ and $\sigma$ be two $m$-mode quantum states, and let $0 \leq \eta \leq 1$. The quantum convolution of $\rho$ and $\sigma$ with parameter $\eta$ is defined as
\[
\rho \boxplus_\eta \sigma := \tr_2 \big( U_\eta \, (\rho \otimes \sigma) \, U_\eta^\dagger \big),
\]
where
\[
U_{\eta} := \exp\bigg(\arccos(\sqrt{\eta}) \sum_{j=1}^{m}\big(\ac_{j,1}^{\dagger} \ac_{j,2} - \ac_{j,1} \ac_{j,2}^{\dagger}\big)\bigg).
\]
Here, $U_\eta$ is a unitary operator acting on $2m$ modes, which we treat as two $m$-mode subsystems. The operators $\ac_{j,1}$ and  $\ac_{j,2}$ for $j = 1, \dots, m$ denote the annihilation operators on the first and second subsystems, respectively, and $\tr_2(\cdot)$ represents the partial trace over the second subsystem, consisting of $m$ modes. The Gaussian unitary operator $U_{\eta}$ is equivalently described by its action on the annihilation and creation operators:
\begin{align} 
	\begin{cases} \label{BeamSplitter-quadratures}
		U_{\eta} \ac_{j,1} U_{\eta}^{\dagger} = \sqrt{\eta} \ac_{j,1} - \sqrt{1 - \eta} \ac_{j,2},\\
		U_{\eta} \ac_{j,2} U_{\eta}^{\dagger} =  \sqrt{1 - \eta} \ac_{j,1} + \sqrt{\eta} \ac_{j,2}. 
	\end{cases}
\end{align}

A key property of quantum convolution is that it commutes with Gaussian unitaries~\cite{CRLimitTheorem}, meaning that for any Gaussian unitary $V$, we have
\begin{equation}\label{eq:Conv-Unit}
	V (\rho \boxplus_{\eta} \sigma) V^{\dagger} = (V \rho V^{\dagger}) \boxplus_{\eta} (V \sigma V^{\dagger}).
\end{equation}
Additionally, from~\eqref{BeamSplitter-quadratures}, it can be readily verified that
\begin{equation}\label{eq:characteristic-convolution}
	\chi_{\rho \boxplus_{\eta} \sigma}(z) = \chi_{\rho}(\sqrt{\eta} z) \, \chi_{\sigma}(\sqrt{1 - \eta} z).
\end{equation}
Thus, the characteristic function of the resulting quantum state retains a similar form to that seen in the classical case, where a new random variable is defined as the normalized sum of two independent random variables.

The quantum convolution for the parameter $\eta = 1/2$ is referred to as the symmetric convolution, and we denote it simply by $\rho \boxplus \sigma$. Extending this notion, we define the symmetric convolution of $\rho_1, \dots, \rho_n$ inductively by  
\[
\rho_1 \boxplus \cdots \boxplus \rho_n := \big( \rho_1 \boxplus \cdots \boxplus \rho_{n-1} \big) \boxplus_{1 - \frac{1}{n}} \rho_n.
\]
When $\rho_1 = \cdots = \rho_n = \rho$, we denote their symmetric convolution by $\rho^{\boxplus n}$, representing the $n$-fold convolution of $\rho$. From~\eqref{eq:characteristic-convolution}, it follows that
\begin{equation}\label{eq:char-symmetric}
	\chi_{\rho^{\boxplus n}}(z) = \chi_{\rho} \left( \frac{z}{\sqrt{n}} \right)^n.
\end{equation}

An important property of the symmetric convolution, shown in \cite[Lemma 16]{CRLimitTheorem}, is that for any two $m$-mode quantum states $\rho$ and $\sigma$, we have 
\begin{equation}\label{eq:positive-Wigner}
	W_{\rho \boxplus \sigma}(z) \geq 0, \quad \forall z \in \mathbb{C}^m,
\end{equation}
where $W_{\rho \boxplus \sigma}(z)$ denotes the Wigner function of $\rho \boxplus \sigma$. This property is essential for interpreting $W_{\rho \boxplus \sigma}(z)$ as a valid probability distribution, enabling the application of established results on the probability distribution $W_{\rho \boxplus \sigma}(z)$ and its corresponding characteristic function, which coincides with $\chi_{\rho \boxplus \sigma}(z)$.

We conclude this section by noting an interesting feature of the characteristic function of quantum states: for any nonzero $z$, the absolute value of $\chi_\rho(z)$ is strictly less than $1$.

\begin{proposition} \emph{(\cite[Proposition 15]{CRLimitTheorem})}
Let $\rho$ be a quantum state with finite covariance matrix. Then, for each $\epsilon > 0$, there exists $\delta > 0$ such that
\begin{equation}\label{CharSupOrigin}
	\sup_{z \in \mathbb{C}^m \setminus B(0, \epsilon)} |\chi_{\rho} (z)| \leq 1 - \delta,
\end{equation}
where $\delta$ depends on $\epsilon$ and the covariance matrix of $\rho$.
\label{prop:CharSupOrigin}
\end{proposition}

Interestingly, this property does not hold in such a general form for classical probability distributions. Using this property along with~\eqref{eq:char-symmetric}, we see that for $\|z\| > \epsilon \sqrt{n}$, $|\chi_{\rho^{\boxplus n}}(z)| \leq (1 - \delta)^n$ becomes exponentially small. This inequality is critical in our analysis of convergence rates in the quantum CLT.

\section{Edgeworth-type expansion}\label{Sec:Edge}

The Edgeworth expansion is a key tool in classical CLT analysis. In fact, it is often regarded as foundational to classical CLTs, underpinning early formulations such as the Lindeberg–Levy theorem~\cite{lindeberg1922} and the Berry–Esseen theorem~\cite{berry_accuracy_1941}. In this section, we extend the concept of Edgeworth expansion to the quantum setting and establish its properties. To proceed, however, we first need to define quantum state moments beyond the second order.

\subsection{Moments of a quantum state}

Let $\rho$ be a quantum state and $\kappa > 0$. Following~\cite{CRLimitTheorem}, we say that $\rho$ has finite moments up to order $\kappa$ if\footnote{The moment of order $\kappa$ is sometimes defined as $\tr \Big( \rho N_m^{\frac{\kappa}{2}}\Big)$ in the literature. We note, however, that the finiteness of moments of order $\kappa$ according to these two definitions coincide as we have $\tr \big( \rho N_m^{\frac{\kappa}{2}}\big) \leq M_\kappa (\rho) \leq (2m)^{\frac{\kappa}{2}} \Big(\tr \big( \rho N_m^{\frac{\kappa}{2}}\big) +1\Big) $.}
$$
M_\kappa (\rho) := \tr \Big( \rho (N_m + m)^{\frac{\kappa}{2}}\Big)<\infty.
$$
It is straightforward to verify that if $M_2(\rho)<+\infty$, the covariance matrix $\boldsymbol{\gamma}(\rho)$ is well-defined. Furthermore, as shown in the following lemma, the finiteness of the $\kappa$-th moment of $\rho$ implies the finiteness of all lower-order moments.
\begin{lemma}\label{lem:moments-monotone}
	For any quantum state $\rho$ and $0<\kappa_1<\kappa_2$, we have
	$$M_{\kappa_1}(\rho) \leq M_{\kappa_2}(\rho)^{\kappa_1/\kappa_2}.$$
\end{lemma}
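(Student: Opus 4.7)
The plan is to reduce the inequality to a standard Jensen's inequality for a scalar probability distribution, exploiting the fact that the operator $N_m + m$ is diagonal in the Fock basis and hence so is every positive power of it. The point is that $M_\kappa(\rho)$ can be rewritten as the classical expectation of a scalar function of the total photon number with respect to the distribution given by the diagonal entries of $\rho$ in the Fock basis.

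First, I would assume without loss of generality that $M_{\kappa_2}(\rho) < \infty$, as otherwise the inequality is trivial. Defining
\begin{equation*}
p_{\vec k} := \langle k_1, \ldots, k_m | \rho | k_1, \ldots, k_m \rangle, \qquad \vec k = (k_1, \ldots, k_m) \in \mathbb{Z}_{\geq 0}^m,
\end{equation*}
one has $p_{\vec k} \geq 0$ and $\sum_{\vec k} p_{\vec k} = \tr(\rho) = 1$, so $\{p_{\vec k}\}$ is a probability distribution. Since $N_m + m$ is diagonal in the Fock basis with eigenvalues $k_1 + \cdots + k_m + m$, its powers are diagonal with correspondingly taken scalar powers, yielding
\begin{equation*}
M_\kappa(\rho) = \sum_{\vec k} (k_1 + \cdots + k_m + m)^{\kappa/2} \, p_{\vec k}
\end{equation*}
for every $\kappa > 0$.

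The second and final step is to apply Jensen's inequality to the convex function $\phi(x) = x^{\kappa_2/\kappa_1}$ on $[0,\infty)$ (note $\kappa_2/\kappa_1 > 1$), with the random variable $x_{\vec k} := (k_1 + \cdots + k_m + m)^{\kappa_1/2}$ distributed according to $\{p_{\vec k}\}$. This gives
\begin{equation*}
\Big( \sum_{\vec k} x_{\vec k} \, p_{\vec k} \Big)^{\kappa_2/\kappa_1} \leq \sum_{\vec k} x_{\vec k}^{\kappa_2/\kappa_1} \, p_{\vec k},
\end{equation*}
which is precisely $M_{\kappa_1}(\rho)^{\kappa_2/\kappa_1} \leq M_{\kappa_2}(\rho)$. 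Raising both sides to the $\kappa_1/\kappa_2$ power yields the stated bound.

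I do not expect any serious obstacle here; the only mild subtlety is that $N_m$ is unbounded and so care is needed to make sense of $(N_m+m)^{\kappa/2}$, but this is handled cleanly by working directly with its spectral decomposition in the Fock basis (or equivalently by truncating to $N_m \leq K$ and invoking monotone convergence as $K \to \infty$), after which the argument is purely classical.
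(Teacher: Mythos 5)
Your proof is correct, and it takes a genuinely different route from the paper's. You exploit that $M_\kappa(\rho)$ depends only on the diagonal of $\rho$ in the Fock basis, which diagonalizes $N_m+m$: writing $p_k=\bra{k}\rho\ket{k}$, the quantity becomes the classical moment $\sum_k(|k|+m)^{\kappa/2}p_k$ of a probability distribution, and the claim is then Lyapunov's moment inequality, obtained via Jensen with the convex map $x\mapsto x^{\kappa_2/\kappa_1}$; your remark on handling the unbounded operator $(N_m+m)^{\kappa/2}$ through its spectral decomposition (or truncation plus monotone convergence) closes the only technical subtlety. The paper instead argues at the operator level: it applies the H\"older inequality for Schatten norms to peel off the factor $\big\|\rho^{(\kappa_2-\kappa_1)/2\kappa_2}\big\|^2_{2\kappa_2/(\kappa_2-\kappa_1)}=\tr(\rho)^{(\kappa_2-\kappa_1)/\kappa_2}=1$, and then invokes the Araki--Lieb--Thirring inequality to bound $\tr\big((\rho^{\kappa_1/2\kappa_2}(N_m+m)^{\kappa_1/2}\rho^{\kappa_1/2\kappa_2})^{\kappa_2/\kappa_1}\big)$ by $\tr\big(\rho(N_m+m)^{\kappa_2/2}\big)$ (the paper's display writes the exponents of $N_m+m$ without the factor $\tfrac12$, a harmless notational slip given the definition $M_\kappa(\rho)=\tr\big(\rho(N_m+m)^{\kappa/2}\big)$). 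Your reduction is more elementary: no noncommutative machinery is needed, precisely because the statement involves a single self-adjoint operator, so only the induced spectral probability measure matters, even though $\rho$ and $N_m$ need not commute. The paper's H\"older/ALT argument is basis-free and stays within the operator-inequality toolkit used elsewhere in the paper, but it buys no extra generality here, since your Jensen argument would run verbatim against the spectral measure of an arbitrary positive operator in place of $N_m+m$.
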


\begin{proof}
	Using H\"older's inequality, we find
	\begin{align*}
		M_{\kappa_1}(\rho) & = \tr\big(\rho (N_m+m)^{\kappa_1}\big) \\
		&\leq  \big \|\rho^{(\kappa_2-\kappa_1)/2\kappa_2}\big\|^2_{2\kappa_2/(\kappa_2-\kappa_1)}\cdot \big\|  \rho^{\kappa_1/2\kappa_2} (N_m+m)^{\kappa_1} \rho^{\kappa_1/2\kappa_2} \big \|_{\kappa_2/\kappa_1}\\
		& = \tr\Big(   \big(\rho^{\kappa_1/2\kappa_2} (N_m+m)^{\kappa_1} \rho^{\kappa_1/2\kappa_2} \big)^{\kappa_2/\kappa_1}  \Big)^{\kappa_1/\kappa_2}\\
		& \leq \tr\big( \rho (N_m+m)^{\kappa_2}\big)^{\kappa_1/\kappa_2}\\
		& = M_{\kappa_2}(\rho)^{\kappa_1/\kappa_2},
	\end{align*}
	where the second inequality follows from the Araki--Lieb--Thirring inequality~\cite[Theorem 1]{araki_inequality_1990}.
\end{proof}

The following lemma provides a sufficient condition to ensure that the operator $\rho A_1 \dots A_\kappa$ is trace-class, where $A_1, \dots, A_\kappa \in \big\{\bfa_1, \bfa_1^\dagger, \dots, \bfa_m, \bfa_m^\dagger, \mathbb{I} \big\}$.

\begin{lemma}\label{lem:tr-A-kappa-bounded}
If $M_\kappa(\rho)<+\infty$ for some integer $\kappa\geq 2$, then 
$$\big|\tr\big(\rho A_1 \cdots A_\kappa \big)\big|=\mathcal O(M_\kappa(\rho)),$$ 
for any $A_1, \dots, A_\kappa\in \big\{\bfa_1, \bfa_1^\dagger, \dots, \bfa_m, \bfa_m^\dagger,\mathbb I \big\}$. 
\end{lemma}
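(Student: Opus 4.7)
My plan is to expand $\tr(\rho A_1\cdots A_\kappa)$ directly in the Fock basis and exploit the positivity of $\rho$ via a carefully balanced Cauchy--Schwarz.

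First I would analyze the action of the product on Fock states. Since each $A_j$ raises, lowers, or preserves the occupation of a single mode, a straightforward induction gives
\[
A_1\cdots A_\kappa\ket{\nu} = c_\nu\ket{\nu+\Delta},
\]
where $\Delta\in\mathbb{Z}^m$ is a fixed shift (with $\sum_i|\Delta_i|\leq\kappa$) determined solely by which $A_j$'s are creation, annihilation, or identity operators on each mode, and the scalar $c_\nu$ is a product of $\kappa$ factors of the form $\sqrt{\nu^{(j)}_{i_j}}$ or $\sqrt{\nu^{(j)}_{i_j}+1}$ evaluated at the intermediate occupation vectors $\nu^{(j)}$. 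Because each $\nu^{(j)}$ differs from $\nu$ in total occupation by at most $\kappa$, every factor is at most $\sqrt{|\nu|+\kappa+1}$, so
\[
|c_\nu|\leq C_{\kappa,m}\,(|\nu|+m)^{\kappa/2}
\]
for an explicit constant $C_{\kappa,m}$ depending only on $\kappa$ and $m$.

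Expanding the trace in the Fock basis yields
\[
\tr(\rho A_1\cdots A_\kappa) \;=\; \sum_\nu c_\nu\,\rho_{\nu,\,\nu+\Delta}, \qquad \rho_{\mu,\nu}:=\bra{\mu}\rho\ket{\nu},
\]
and positivity of $\rho$ gives $|\rho_{\nu,\nu+\Delta}|\leq\sqrt{\rho_{\nu,\nu}\,\rho_{\nu+\Delta,\nu+\Delta}}$. The decisive step is then to write $|c_\nu|=|c_\nu|^{1/2}\cdot|c_\nu|^{1/2}$ and apply Cauchy--Schwarz over the lattice index $\nu$:
\[
\sum_\nu |c_\nu|\sqrt{\rho_{\nu,\nu}\,\rho_{\nu+\Delta,\nu+\Delta}}
\;\leq\; \Bigl(\sum_\nu|c_\nu|\,\rho_{\nu,\nu}\Bigr)^{\!1/2}\Bigl(\sum_\nu|c_\nu|\,\rho_{\nu+\Delta,\nu+\Delta}\Bigr)^{\!1/2}.
\]
The bound on $|c_\nu|$ gives $\sum_\nu|c_\nu|\rho_{\nu,\nu}\leq C_{\kappa,m}\,M_\kappa(\rho)$. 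After the substitution $\mu=\nu+\Delta$ the second factor is bounded the same way, using $(|\mu-\Delta|+m)^{\kappa/2}\leq C'_{\kappa,m}(|\mu|+m)^{\kappa/2}$ since $|\Delta|\leq\kappa$. Combining yields $|\tr(\rho A_1\cdots A_\kappa)|\leq C\cdot M_\kappa(\rho)$, and the absolute convergence of the Fock-basis sum simultaneously justifies interpreting this sum as the trace.

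The main obstacle is getting the moment order to come out to $\kappa$, not $2\kappa$: a naive application of Cauchy--Schwarz to the Hilbert--Schmidt inner product $\langle \rho^{1/2},\, A_1\cdots A_\kappa \rho^{1/2}\rangle$ would yield $\sqrt{\tr(\rho A_\kappa^\dagger\cdots A_1^\dagger A_1\cdots A_\kappa)}\leq C\sqrt{M_{2\kappa}(\rho)}$, which is the wrong moment and, crucially, would fail to be finite under the assumption $M_\kappa(\rho)<\infty$ alone. The symmetric split of $|c_\nu|$ between the two occurrences of $\rho$'s diagonal is exactly what rebalances the exponents.
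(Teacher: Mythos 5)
Your proof is correct and follows essentially the same route as the paper: both expand the trace in the Fock basis, reduce the off-diagonal matrix elements $\rho_{\nu,\nu+\Delta}$ to diagonal ones via positivity, bound the coefficient by $(|\nu|+m)^{\kappa/2}$, and re-index the shifted sum. The only cosmetic difference is that the paper applies the pointwise bound $|\rho_{\nu,\nu+\Delta}|\leq\tfrac12\bigl(\rho_{\nu,\nu}+\rho_{\nu+\Delta,\nu+\Delta}\bigr)$ (AM--GM) term by term, while you apply Cauchy--Schwarz over the summation index after splitting $|c_\nu|$ symmetrically---the two yield the same pair of diagonal sums---and the paper presents only an illustrative special case ($m=3$, $\kappa=3$) where you argue the general case.
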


\begin{proof}
Inspecting the action of annihilation and creation operators on number states, we find that there exist integers $-\kappa\leq  s_1, s_2, \dots, s_m\leq  \kappa$ such that the action of $A_1 \cdots A_\kappa$ on any Fock state $\ket{k_1, \dots, k_m}$ can be written as
\[
    A_1 \cdots A_\kappa \ket{k_1, \dots, k_m} = C(k_1, \dots, k_m)\ket{k_1+s_1, \dots, k_m+s_m},
\]
where $ C(k_1, \dots, k_m)\geq 0$, and it is equal to zero whenever $k_j + s_j$ is negative for some $j$. Moreover, using a simple AM--GM inequality, it is straightforward to see that $ C(k_1, \dots, k_m) \leq \sum_{j=1}^m (k_j + \kappa)^{\kappa/2}$. Now, applying the Cauchy--Schwarz inequality we can write
{\allowdisplaybreaks\begin{align*}
\big|\tr\big(\rho A_1 \cdots A_\kappa \big)\big| & = \bigg| \sum_{k_1, \dots, k_m=0}^\infty  C(k_1, \dots, k_m)  \bra{k_1, \dots, k_m} \rho \ket{k_1+s_1, \dots , k_m+s_m}    \bigg|\\
& \leq  \frac 12  \sum_{k_1, \dots , k_m=0}^\infty C(k_1, \dots, k_m)  \bra{k_1, \dots, k_m} \rho \ket{k_1, \dots, k_m}  \\
&\quad   +  \frac 12  \sum_{k_1, \dots , k_m=0}^\infty    C(k_1, \dots, k_m)  \bra{k_1+s_1, \dots , k_m+s_m} \rho \ket{k_1+s_1, \dots , k_m+s_m}  \\
&\leq  \sum_{k_1, \dots , k_m=0}^\infty     \bigg(\sum_{j=1}^m (k_j + 2\kappa)^{\kappa/2} \bigg)  \bra{k_1, \dots, k_m} \rho \ket{k_1, \dots, k_m} \\
&\leq  (2\kappa)^{\kappa/2} \sum_{k_1, \dots , k_m=0}^\infty     \bigg(\sum_{j=1}^m (k_j + 1)^{\kappa/2} \bigg)  \bra{k_1, \dots, k_m} \rho \ket{k_1, \dots, k_m} \\
&\leq  (2\kappa)^{\kappa/2} \sum_{k_1, \dots , k_m=0}^\infty      \bigg(\sum_{j=1}^m k_j + m\bigg)^{\kappa/2}  \bra{k_1, \dots, k_m} \rho \ket{k_1, \dots, k_m} \\
& = (2\kappa)^{\kappa/2} M_\kappa(\rho),
\end{align*}
as desired.
}
\end{proof}

In our analysis, we sometimes use conjugation with a displacement operator to center a quantum state. It is therefore essential to examine how this conjugation affects the moments of a quantum state.
\begin{lemma}\label{lem:moment-displaced}
For any integer $\kappa\geq2$, there is a polynomial $g_{m, \kappa}(\cdot)$  such that 
$$D_z^\dagger (N_m+m)^{\kappa/2} D_z \leq g_{m, \kappa}(\|z\|) (N_m+m)^{\kappa/2},$$
where $\|z\|^2 = \sum_{i=1}^m |z_i|^2$.
As a result, for any $m$-mode quantum state $\rho$, we have
$$ M_{\kappa}\big(D_z \rho D_z^\dagger\big)\leq g_{m, \kappa}(\|z\|) M_\kappa(\rho).$$
\end{lemma}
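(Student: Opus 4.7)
My plan is to establish the operator inequality (from which the moment bound follows at once by tracing against $\rho$). Write $\Lambda := N_m + m$. The starting point is $D_z^\dagger \Lambda D_z = \Lambda + S + \|z\|^2$, where $S := \sum_{j=1}^m (\bar z_j \bfa_j + z_j \bfa_j^\dagger)$; this follows directly from $D_z^\dagger \bfa_j D_z = \bfa_j + z_j$. Completing the square in $(\bfa_j \pm z_j)^\dagger (\bfa_j \pm z_j) \ge 0$ yields $\pm S \le N_m + \|z\|^2$ and hence
$$D_z^\dagger \Lambda D_z \le 2\Lambda + 2\|z\|^2 \le 2(1+\|z\|^2)\Lambda,$$
using $\Lambda \ge m \ge 1$. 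This settles $\kappa = 2$ with $g_{m,2}(\|z\|) = 2(1+\|z\|^2)$. For $\kappa = 1$, the identity $D_z^\dagger \Lambda^{1/2} D_z = (D_z^\dagger \Lambda D_z)^{1/2}$ (valid by functional calculus and unitarity of $D_z$) combined with the operator monotonicity of $t \mapsto t^{1/2}$ gives the bound.

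For even $\kappa = 2k$ with $k \ge 2$, the plan is to expand $D_z^\dagger \Lambda^k D_z = (\Lambda + S + \|z\|^2)^k$ into its $3^k$ non-commuting monomials. In each monomial we use the commutation rules $\bfa_j f(\Lambda) = f(\Lambda+1)\bfa_j$ and $\bfa_j^\dagger f(\Lambda) = f(\Lambda-1)\bfa_j^\dagger$ to gather all $\Lambda$-factors to one side, reducing the monomial to a (polynomially large) sum of normal-ordered terms of the form $T = (\bfa^\dagger)^{\vec p}\bfa^{\vec q}\Lambda^r$ with a scalar coefficient polynomial in $z,\bar z$ of degree $b+2c$ (where $a,b,c$ are the numbers of $\Lambda$'s, $S$'s, $\|z\|^2$'s in the monomial, so $a+b+c = k$). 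Here $r \le a$ and $|\vec p|+|\vec q| \le b$, so $r + (|\vec p|+|\vec q|)/2 \le a + b/2 \le k$; consequently the sandwich $\Lambda^{-k/2}T\Lambda^{-k/2}$ is a bounded operator on Fock space (its norm is easily read off from the action on Fock states $\ket{\vec n}$), giving $|\bra{\psi} T\ket{\psi}| \le C_T \bra{\psi}\Lambda^k \ket{\psi}$. Summing the polynomially many contributions, each carrying a prefactor polynomial in $\|z\|$, produces a polynomial $g_{m,2k}(\|z\|)$ and the inequality $D_z^\dagger \Lambda^k D_z \le g_{m,2k}(\|z\|)\Lambda^k$. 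For odd $\kappa = 2k+1$, the result follows from the even case $\kappa+1$ by applying operator monotonicity of $t \mapsto t^{(2k+1)/(2k+2)}$ to both sides, together with $D_z^\dagger \Lambda^{(2k+1)/2} D_z = (D_z^\dagger \Lambda^{k+1} D_z)^{(2k+1)/(2k+2)}$.

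Once the operator inequality is secured, the moment bound follows immediately from cyclicity of the trace and positivity of $\rho$:
$$M_\kappa(D_z \rho D_z^\dagger) = \tr\bigl[\rho\, D_z^\dagger \Lambda^{\kappa/2} D_z\bigr] \le g_{m,\kappa}(\|z\|) \tr\bigl[\rho\, \Lambda^{\kappa/2}\bigr] = g_{m,\kappa}(\|z\|) M_\kappa(\rho).$$

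The main obstacle is the even-$\kappa$ step for $k \ge 2$: one cannot simply iterate $M \le c \Lambda$ to obtain $M^k \le c^k \Lambda^k$, since $M$ and $\Lambda$ do not commute and L\"owner's classical counterexample shows that the implication $A \le B \Rightarrow A^2 \le B^2$ fails for non-commuting positive operators. The argument must therefore rely on the explicit polynomial structure of $D_z^\dagger \Lambda D_z$ in $\bfa_j, \bfa_j^\dagger$ rather than on purely order-theoretic manipulations.
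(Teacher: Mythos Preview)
Your argument is correct, and it shares with the paper the reduction of odd $\kappa$ to even $\kappa$ via operator monotonicity of $t\mapsto t^{\kappa/(\kappa+1)}$. The even case, however, is handled differently. The paper first reduces to $m=1$ using the two-sided comparison $\big(\sum_j A_j\big)^{\kappa/2}\asymp\sum_j A_j^{\kappa/2}$ for commuting positive $A_j$, and then for $\kappa=2s$ expands $\big((\bfa+z)(\bfa^\dagger+\bar z)\big)^s$ into a symmetric sum $\sum_{k+\ell\le s} h_{k,\ell}(z,\bar z)\big(\bfa^k(\bfa^\dagger)^\ell+\bfa^\ell(\bfa^\dagger)^k\big)$; the key operator inequality $AB+B^\dagger A^\dagger\le AA^\dagger+B^\dagger B$ then immediately collapses each cross term to $\bfa^k(\bfa^\dagger)^k+\bfa^\ell(\bfa^\dagger)^\ell\le(k!+\ell!)(\bfa\bfa^\dagger)^s$. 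By contrast, you work directly in $m$ modes, normal-order each monomial of $(\Lambda+S+\|z\|^2)^k$, and read off the boundedness of $\Lambda^{-k/2}T\Lambda^{-k/2}$ from its action on Fock states---your degree count $r+(|\vec p|+|\vec q|)/2\le a+b/2\le k$ is the crux and is correct. The paper's route is slicker (no Fock-space bookkeeping, no explicit shift-operator norms), while yours is more direct and avoids the preliminary reduction to one mode; both yield a polynomial $g_{m,\kappa}(\|z\|)$ of the same degree $\kappa$.
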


\begin{proof}
Note that $N_m + m = \sum_{j=1}^m \bfa_j \bfa_j^\dagger$, and the operators $\bfa_j \bfa_j^\dagger$ commute. Moreover, using the convexity of $x\mapsto x^{\kappa/2}$, for any set of commuting positive operators $A_1, \dots, A_m$, we have
$$
\bigg( \sum_{j=1}^m A_j  \bigg)^{\kappa/2} \leq m^{\kappa/2-1}  \sum_{j=1}^m A_j^{\kappa/2}  \leq m^{\kappa/2-1}\bigg( \sum_{j=1}^m A_j  \bigg)^{\kappa/2}.
$$
Therefore, it is sufficient to prove the lemma for $m = 1$. In fact, we can set $g_{m, \kappa}(\cdot) = m^{\kappa/2 - 1} g_{1, \kappa}(\cdot)$. Furthermore, for $m = 1$, it suffices to establish the theorem for even values of $\kappa$ since, assuming the existence of a polynomial $g_{1, \kappa+1}(\|z\|)$ satisfying the lemma for $\kappa+1$, we have for $0 < \lambda = \frac{\kappa}{\kappa+1} < 1$ that
\begin{align*}
D_z^\dagger(\bfa\bfa^\dagger)^{\kappa/2}D_z & = D_z^\dagger(\bfa\bfa^\dagger)^{\lambda(\kappa+1)/2}D_z = \bigg(  D_z^\dagger(\bfa\bfa^\dagger)^{(\kappa+1)/2}D_z  \bigg)^{\lambda} \\
&\leq g_{1, \kappa+1}(\|z\|)^\lambda (\bfa\bfa^\dagger)^{\lambda (\kappa+1)/2} =   g_{1, \kappa+1}(\|z\|)^\lambda (\bfa\bfa^\dagger)^{\kappa/2},
\end{align*}
where the inequality follows from the operator monotonicity of $x \mapsto x^\lambda$~\cite[Theorem V.1.9]{bhatia_matrix_1997}. Hence, we proceed by assuming that $m = 1$ and $\kappa = 2s$ is an even integer.

Using~\eqref{eq:Dz-displace} and the canonical commutation relation~\eqref{eq:CCR-AnnCre}, there exist polynomials $h_{k, \ell}(\cdot)$ such that
\begin{align*}
D_z(\bfa\bfa^\dagger)^{s}D_z^\dagger = \big((\bfa+z)(\bfa^\dagger+\bar z)\big)^{s} = \sum_{k+\ell \leq s} h_{k, \ell}(z, \bar z) \big( \bfa^k(\bfa^\dagger)^\ell + \bfa^\ell (\bfa^\dagger )^k \big).
\end{align*}
Then, by applying the operator inequality $A^\dagger B+B^\dagger A \leq A^\dagger A + B^\dagger B$ which is equivalent to $(A-B)^\dagger(A-B)\geq 0$, we obtain
\[
D_z(\bfa\bfa^\dagger)^{s}D_z^\dagger \leq  \sum_{k+\ell \leq s} |h_{k, \ell}(z, \bar z)| \big( \bfa^k(\bfa^\dagger)^k + \bfa^\ell (\bfa^\dagger )^\ell \big).
\]  
An easy induction shows that $\bfa^k(\bfa^\dagger)^k = \bfa\bfa^\dagger (\bfa\bfa^\dagger+1)\cdots (\bfa\bfa^\dagger +k-1) \leq k! (\bfa\bfa^\dagger)^k\leq k! (\bfa\bfa^\dagger)^s$ if $k\leq s$. Thus, 
\[
D_z(\bfa\bfa^\dagger)^{s}D_z^\dagger \leq \bigg( \sum_{k+\ell \leq s}  (k!+\ell!)   |h_{k, \ell}(z, \bar z)| \bigg) (\bfa\bfa^\dagger)^s.
\]
\end{proof}

The characteristic function exhibits key properties, particularly around the origin. We previously noted such a property in~\eqref{CharSupOrigin}. Here, we state a bound on the derivative of the characteristic function around the origin. Using~\eqref{eq:Dz-alt-form} to  compute the derivatives of $D_z$, we obtain
\begin{equation}\label{eq:displacement-derivative}
	\partial_{z_j} D_z = \frac{1}{2} \bar z_{j} D_z + D_z \ac_j^{\dagger} = -\frac{1}{2} \bar z_{j} D_z + \ac_j^{\dagger} D_z = \frac 12\big(\ac_j^{\dagger} D_z+ D_z\ac_j^{\dagger}\big),
\end{equation}
and
\begin{equation}
\partial_{\bar z_{j}} D_z  = \frac{1}{2} z_j D_z - \ac_j D_z = -\frac{1}{2} z_j D_z - D_z \ac_j=-\frac 12\big( D_z \ac_j + \ac_j D_z\big).
\end{equation}
Hence, it follows from the definition of the characteristic function that
\begin{align}
	&\partial_{z_j} \chi_{T}(z) = \frac{1}{2} \bar z_{j} \chi_{T}(z) + \chi_{\ac_j^{\dagger}T}(z) = -\frac{1}{2} \bar z_{j} \chi_{T}(z) + \chi_{T\ac_j^{\dagger}}(z) = \chi_{\frac 12(\ac_j^{\dagger}T +T\ac_j^\dagger)}(z) ,\label{eq:characterisitc-derivative-0}\\
	& \partial_{\bar z_{j}} \chi_{T}(z) = \frac{1}{2} z_j \chi_{T}(z) - \chi_{T \ac_j}(z)= -\frac{1}{2} z_j \chi_{T}(z) - \chi_{\ac_jT}(z)=-\chi_{\frac 12(\ac_jT +T\ac_j)}(z).\label{eq:characterisitc-derivative-1}
\end{align}
These equations indicate that the derivatives of $\chi_\rho(z)$ around the origin are related to the moments of $\rho$.

\begin{proposition} \emph{(\cite[Proposition 25]{CRLimitTheorem})} For any positive integer $\kappa$ and $\epsilon >0$, there exists a constant $c(\kappa, \epsilon)$ such that for any $m$-mode quantum state $\rho$, and any $\alpha, \beta \in \mathbb{R}_{+}^m$ with $|\alpha| + | \beta| \leq \kappa$, we have
\begin{equation}\label{eq:BoundDerivative-Moments}
\sup_{z \in B(0, \epsilon)} \big|\partial_{z}^{\alpha} \partial_{\bar z}^{\beta} \chi_\rho(z)\big| \leq c(\kappa, \epsilon) M_{\kappa}(\rho).
\end{equation}
Here, $|\alpha|=\alpha_1+\cdots+\alpha_m$.
\end{proposition}

It would be beneficial 
to understand the proof steps of the above inequality from~\cite{CRLimitTheorem}. First recall from~\eqref{eq:positive-Wigner} that the Wigner function of the quantum state $\sigma = \rho \boxplus \ketbra 00$ is non-negative and can be viewed as a probability density, whose absolute moments are defined by
\begin{align}\label{eq:moment-W}
	M^W_\kappa (\sigma) := \int \|z\|^\kappa W_{\sigma}(z) \, \dd^{2m} z.
\end{align} 
By~\eqref{eq:characteristic-convolution}, the characteristic function of $\rho$ is expressed in terms of the characteristic function of $\sigma$. Then, using classical results that provide bounds on the derivatives of the characteristic function in terms of absolute moments, it is shown in~\cite{CRLimitTheorem} that for any positive integer $\kappa$ and $\epsilon > 0$, there exists a constant $c''(\kappa, \epsilon)$ such that for any $m$-mode quantum state $\rho$, and any $\alpha, \beta \in \mathbb{R}_{+}^m$ with $|\alpha| + | \beta| \leq \kappa$, we have
\begin{equation}\label{eq:BoundDerivative-Moments-W}
	\sup_{z \in B(0, \epsilon)} \big|\partial_{z}^{\alpha} \partial_{\bar z}^{\beta} \chi_\rho(z)\big| \leq c''(\kappa, \epsilon) M^W_{\kappa}\big(\rho \boxplus \ketbra 00\big).
\end{equation}
Next, it is shown in~\cite{CRLimitTheorem} that there exists a constant $c'(\kappa) > 0$ such that
\begin{align}\label{eq:M-W-c-W}
	M_\kappa^W(\sigma) \leq c'(\kappa) M_{\kappa}(\sigma).
\end{align}
Finally,~\cite[Eq.~(A3)]{CRLimitTheorem} shows that
\begin{align}\label{eq:moment-conv-rho-00}
	M_\kappa\big(\rho \boxplus \ketbra 00\big) \leq M_\kappa(\rho).
\end{align} 
Combining these results yields the desired bound in~\eqref{eq:BoundDerivative-Moments}.

\subsection{Taylor expansion of the characteristic function}

Let $\rho$ be a centered $m$-mode quantum state with finite $\kappa$-th-order moments where $\kappa \geq 4$. By equation~\eqref{eq:char-symmetric}, we have 
\begin{align*}
	\ln  \chi_{\rho^{\boxplus n}}(z) = n \ln  \chi_{\rho}\Big(\frac {z}{\sqrt n}\Big).
\end{align*}
This expression indicates that, to analyze $\rho^{\boxplus n}$, it is useful to consider the Taylor expansion of $\ln  \chi_{\rho}(z)$ around the origin, as $\big|\frac{z}{\sqrt n}\big|$ becomes sufficiently small for any fixed $z$ as $n \to \infty$. Therefore, we aim to compute derivatives of $\chi_{\rho}(z)$ at the origin. By~\eqref{eq:BoundDerivative-Moments} and our assumption on $\rho$, all such derivatives up to order $\kappa$ exist. Thus, we can write the Taylor series for $\ln  \chi_{\rho}(z)$ around the origin:
\begin{equation} \label{TaylorChar}
	\ln  \chi_{\rho}(z) = \sum_{|\alpha| \leq \kappa} q_\alpha \frac{z^\alpha}{\alpha !} + o(\| z\|^{\kappa}),
\end{equation}
where $\alpha = (\alpha_1, \dots, \alpha_{2m})$ is a tuple of non-negative integers, and
\begin{align}
	|\alpha| = \sum_{i=1}^{2m} \alpha_i, \qquad z^\alpha = \prod_{i=1}^{m} z_i^{\alpha_{2i-1}} \bar{z_i}^{\alpha_{2i}}, \qquad \alpha ! = \alpha_1 ! \cdots \alpha_{2m} !.
\end{align}
Recall that the Wigner function is defined as the complex Fourier transform of the characteristic function. Accordingly, we can interpret the coefficients $q_\alpha$ in~\eqref{TaylorChar} as the classical cumulants of the Wigner function of $\rho$, with the understanding that the Wigner function may not represent a true probability distribution.

Now, let us compute the coefficients $q_\alpha$. Recall that by definition, $\chi_{\rho}(0)=\tr(\rho) =1$, which implies $q_0 = 0$. Furthermore, according to~\eqref{eq:characterisitc-derivative-0} and~\eqref{eq:characterisitc-derivative-1}, and the fact that $\rho$ is centered, all the first-order derivatives of $\chi_{\rho}(z)$ at the origin are equal to $0$. This leads to the conclusion that $q_\alpha=0$ for any $\alpha$ satisfying $|\alpha| = 1$. On the other hand, the coefficients $q_\alpha$ with $|\alpha|=2$ are computed in terms of the covariance matrix of $\rho$ and match those of $\rho_G$, the Gaussification of $\rho$, similar to the cumulant coefficients in the classical setting. Moreover, the characteristic function of $\rho_G$ is a Gaussian function, and its logarithm consists only of quadratic terms. Consequently, we have 
\[
\sum_{|\alpha| = 2} q_\alpha \frac{z^\alpha}{\alpha !} = \ln  \chi_{\rho_G}(z).
\]
In particular, if $\rho$ is in the Williamson form and its covariance matrix is diagonal, given by $\boldsymbol{\gamma}( \rho) = \diag(\nu_1, \nu_1, \dots , \nu_m, \nu_m)$, then we have 
\[
\sum_{|\alpha| = 2} q_\alpha \frac{z^\alpha}{\alpha !} = \ln  \chi_{\rho_G}(z) = \sum_{j=1}^m -\frac 12 \nu_j|z_j|^2.
\]
Putting these together we can write
\begin{equation}\label{eq:log-expansion-0}
	\ln  \chi_{\rho^{\boxplus n}}(z) = \ln  \chi_{\rho_G}(z) + \sum_{3 \leq |\alpha| \leq \kappa} q_\alpha \frac{z^\alpha}{\alpha !} n^{-\frac{|\alpha|-2}{2}} + n\cdot o\Big(\Big\| \frac{z}{\sqrt{n}}\Big\|^{\kappa}\Big).
\end{equation}
Moreover, as $n\to \infty$, we have
\begin{equation}\label{eq:expansion-ET}
	\chi_{\rho^{\boxplus n}}(z) =  \chi_{\rho_G}(z)\cdot \bigg(1 + \sum_{r=1}^{\kappa-2} n^{-\frac r2} E_{\rho, r} (z)\bigg) \cdot \Big(1 + o\big(n^{-\frac{\kappa-2}{2}} \cdot \| z\|^\kappa\big)\Big),
\end{equation}
where $E_{\rho, r}(z)$ is a polynomial of degree at most $r+2$ that depends only on the coefficients $q_\alpha$. In particular, $E_{\rho, 1}(z)$ is a polynomial  all of whose terms have degree $3$. We refer to~\eqref{eq:expansion-ET} as the quantum Edgeworth-type expansion.

Using the expansion~\eqref{eq:expansion-ET}, we can readily prove that as $n$ goes to infinity, the characteristic function of the $n$-fold convolution of $\rho$ converges pointwise to the characteristic of $\rho_G$. In the classical setting, the Lindberg--Levy version of the CLT can be derived using this argument and Levy's continuity theorem~\cite{durrett2019}, which states that pointwise convergence can lead to convergence in the distribution.
Even in the quantum setting, this expansion can give us the Cushen--Hudson version of the quantum CLT, using ~\cite[Lemma 4]{lami_all_2018}.

However, in order to prove stronger versions of the quantum CLT, we need to carefully analyze the quantum Edgeworth-type expansion~\eqref{eq:expansion-ET}. In the rest of this section, we will present two important lemmas that will aid in proving the optimal convergence rates for the quantum CLT under minimal assumptions.

\begin{lemma} \label{BRMethod}
Let $\rho$ be an $m$-mode quantum state that is centered and has finite $\kappa$-th-order moments for some integer $\kappa \geq 4$, and let $\nu_{\min} = \|\boldsymbol{\gamma}(\rho)^{-1}\|^{-1}$ and $\nu_{\max} = \|\boldsymbol{\gamma}(\rho)\|$ be the minimum and maximum eigenvalues of the covariance matrix of $\rho$, respectively. Then, there exist a constant $C=C(\nu_{\min}, \nu_{\max})>0$ and a polynomial $L(\cdot) = L_{\nu_{\min}, \nu_{\max}}(\cdot)$ whose degree depends only on $\kappa, \alpha, \beta$, and whose coefficients depend only on $\nu_{\min}, \nu_{\max}$, such that the following holds:
for any $z \in \mathbb{C}^{m}$ with $\|z\|\leq C M_{\kappa}(\rho)^{-1/(\kappa-2)}\sqrt n$, and any $\alpha, \beta\in \mathbb Z_+^m$ with $|\alpha| + |\beta| \leq \kappa $, we have
\begin{align}\label{eq:BRMethod}
	\bigg|\partial_{z}^\alpha  \partial_{\bar z}^\beta \bigg(\chi_{\rho^{\boxplus n}}(z) - \chi_{\rho_{G}}(z) \Big(1 + \sum_{r=1}^{\kappa-2} n^{-\frac r2} E_{\rho, r} (z)\Big) \bigg)\bigg| \leq  n^{-\frac{\kappa-2}{2}} M_{\kappa} L(\| z\|)  e^{ -\frac{\nu_{\min}-1}{4} \|z \|^2},
\end{align}
where $M_{\kappa}$ can be any of $M^W_{\kappa}\big(\rho\boxplus\ketbra{0}{0}\big)$, $M_{\kappa}\big(\rho\boxplus\ketbra{0}{0}\big)$ or $M_{\kappa}\big(\rho\big)$.
\end{lemma}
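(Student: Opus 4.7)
The plan is to mirror the classical Bhattacharya--Ranga Rao derivation of Edgeworth expansions as presented in~\cite{BhRa}, carrying it out for the quantum characteristic function via the identity $\chi_{\rho^{\boxplus n}}(z) = \chi_{\rho}(z/\sqrt n)^n$. First, using~\eqref{eq:Conv-Unit} and the fact that Gaussian unitaries preserve the form of the Edgeworth expansion (because they transform coefficients $q_\alpha$ in a controlled way while leaving $\chi_{\rho_G}$ Gaussian), I would reduce to the case where $\rho$ is in Williamson form, so $\chi_{\rho_G}(z) = \prod_{j=1}^m e^{-\nu_j|z_j|^2/2}$ and $\nu_{\min} \leq \nu_j \leq \nu_{\max}$. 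This reduction makes the Gaussian factor explicit and will supply the exponential decay at the end.

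Next, set $\psi(w) := \ln\chi_\rho(w)$ and Taylor-expand $\psi$ to order $\kappa$ around the origin, so that $\psi(w) = \ln\chi_{\rho_G}(w) + \sum_{3\leq |\alpha|\leq \kappa} q_\alpha w^\alpha/\alpha! + R_\kappa(w)$ where $R_\kappa$ is the remainder. The bound~\eqref{eq:BoundDerivative-Moments} combined with~\eqref{eq:BoundDerivative-Moments-W}--\eqref{eq:moment-conv-rho-00} controls all derivatives of $\chi_\rho$ up to order $\kappa$ by $M_\kappa$ (or equivalently by $M_\kappa^W(\rho\boxplus \ketbra 00)$), and hence, after dividing by $\chi_\rho$ (which is bounded away from $0$ near the origin) and using the composition rules, also controls the derivatives of $\psi$. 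Consequently $|R_\kappa(w)| \leq c\, M_\kappa \|w\|^\kappa$ on a small neighborhood of the origin, with analogous bounds for its derivatives. Substituting $w = z/\sqrt n$ yields
\[
n\psi(z/\sqrt n) = \ln\chi_{\rho_G}(z) + \sum_{r=1}^{\kappa-2} n^{-r/2} P_r(z) + n\, R_\kappa(z/\sqrt n),
\]
where each $P_r(z)$ is a polynomial of degree $r+2$ depending only on the $q_\alpha$'s, and $|n R_\kappa(z/\sqrt n)| \leq c\, M_\kappa n^{-(\kappa-2)/2} \|z\|^\kappa$ provided $\|z/\sqrt n\|$ stays inside the neighborhood of convergence. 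The restriction $\|z\| \leq C\, M_\kappa^{-1/(\kappa-2)} \sqrt n$ is precisely what guarantees, on the one hand, that $z/\sqrt n$ lies in that neighborhood and, on the other hand, that the cumulative remainder $n R_\kappa(z/\sqrt n)$ stays bounded (say by $1$) so that exponentiation is well-behaved.

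Exponentiating and formally expanding $e^{\sum_r n^{-r/2}P_r(z) + n R_\kappa(z/\sqrt n)}$ groups the polynomials by order to produce the Edgeworth polynomials $E_{\rho,r}(z)$ (so that $1 + \sum_{r=1}^{\kappa-2} n^{-r/2} E_{\rho,r}(z)$ matches the partial exponential), plus a remainder of size $\mathcal{O}(n^{-(\kappa-2)/2} M_\kappa\, Q(\|z\|))$ for some polynomial $Q$. Multiplying by $\chi_{\rho_G}(z)$ attaches the Gaussian factor $\exp(-\tfrac14 \hat z^\dagger \Lambda_m^\dagger \boldsymbol\gamma(\rho)\Lambda_m \hat z)$, whose decay rate is at least $e^{-\nu_{\min}\|z\|^2/2}$; absorbing the polynomial $Q$ and a small margin into this exponent produces the advertised factor $e^{-(\nu_{\min}-1)\|z\|^2/4}$ (note $\nu_{\min}\geq 1$ by~\eqref{eq:Uncertainty}, and the margin $(\nu_{\min}+1)/4$ of Gaussian decay is what we sacrifice to dominate $Q$). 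Interchangeability among $M_\kappa^W(\rho\boxplus\ketbra 00)$, $M_\kappa(\rho\boxplus\ketbra 00)$, and $M_\kappa(\rho)$ is justified by~\eqref{eq:M-W-c-W}--\eqref{eq:moment-conv-rho-00}.

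Finally, to handle arbitrary derivatives $\partial_z^\alpha \partial_{\bar z}^\beta$ with $|\alpha|+|\beta|\leq \kappa$, I would apply Cauchy's integral formula on polydiscs of radius of order $M_\kappa^{-1/(\kappa-2)}\sqrt n$ centered at $z$, so differentiation converts the pointwise bound on the error at nearby points into a pointwise bound on its derivatives with only polynomial loss in $\|z\|$ absorbed into $L(\|z\|)$. I expect the main obstacle to lie in keeping precise track of how each stage of the expansion (Taylor remainder of $\psi$, formal exponentiation, attachment of $\chi_{\rho_G}$) contributes a polynomial in $\|z\|$ that must be dominated by a fixed fraction of the Gaussian factor of $\chi_{\rho_G}$, so that the constants $C$, the degree of $L$, and the exponent $(\nu_{\min}-1)/4$ come out uniformly in the stated parameters. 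Once this bookkeeping is handled carefully---and the three versions of $M_\kappa$ are interchanged using the cited inequalities---the bound~\eqref{eq:BRMethod} follows.
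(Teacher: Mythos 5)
There is a genuine gap at the final step of your plan: you propose to obtain the derivative bounds for $|\alpha|+|\beta|\leq\kappa$ from the pointwise bound via Cauchy's integral formula on polydiscs. This fails because the quantum characteristic function $\chi_\rho(z)=\tr(\rho D_z)$ is not holomorphic in $z$ --- it is a function of $z$ \emph{and} $\bar z$ (the derivatives $\partial_z^\alpha\partial_{\bar z}^\beta$ in the statement are Wirtinger derivatives on $\mathbb{C}^m\cong\mathbb{R}^{2m}$), and under the sole assumption of finite $\kappa$-th moments it is merely $C^\kappa$, not real-analytic. Analyticity (and hence any Cauchy-type representation) would require far stronger conditions, of exponential-moment type, which are unavailable here. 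This is precisely why, in the classical literature, the derivative estimates in~\cite[Theorem 9.10]{BhRa} are proved directly --- by differentiating the Taylor expansion of $\ln\chi$ and the exponentiated series term by term and tracking moment-controlled bounds for each derivative --- rather than being bootstrapped from the pointwise estimate. Your earlier steps (Taylor expansion of $\ln\chi_\rho$ with remainder controlled via~\eqref{eq:BoundDerivative-Moments}, grouping into the $E_{\rho,r}$, absorbing polynomials into the Gaussian) are sound in outline, but without the Cauchy shortcut you would have to redo all of this bookkeeping at the level of derivatives, i.e., essentially reprove the classical theorem.

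It is worth noting that the paper takes a shorter route that avoids this re-derivation entirely: it first applies~\cite[Theorem 9.10]{BhRa} as a black box to states with non-negative Wigner function (where the Wigner function is a genuine probability density with moments controlled by~\eqref{eq:M-W-c-W}), and then handles general $\rho$ by passing to $\sigma=\rho\boxplus\ketbra{0}{0}$, whose Wigner function is non-negative by~\eqref{eq:positive-Wigner}. The relations $\chi_{\sigma^{\boxplus n}}(z)=e^{-\frac14\|z\|^2}\chi_{\rho^{\boxplus n}}(z/\sqrt 2)$ and $E_{\sigma,r}(z)=E_{\rho,r}(z/\sqrt 2)$ then transfer both the pointwise and the derivative bounds from $\sigma$ back to $\rho$ (derivatives of $\chi_{\rho^{\boxplus n}}$ are combinations of lower-order derivatives of $\chi_{\sigma^{\boxplus n}}(\sqrt 2\, z)$ and of $e^{\frac12\|z\|^2}$); the factor $e^{\frac14\|z\|^2}$ picked up in this step is exactly the origin of the exponent $\frac{\nu_{\min}-1}{4}$ in~\eqref{eq:BRMethod}. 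If you want to salvage your self-contained approach, you must replace the Cauchy argument by direct termwise differentiation of each stage of your expansion, with moment bounds on every derivative --- at which point you will have reproduced the proof of the classical theorem that the paper simply cites.
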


We remark that $M^W_{\kappa}\big(\rho\boxplus\ketbra{0}{0}\big)$ is defined in~\eqref{eq:moment-W}, and it can be seen to be finite by~\eqref{eq:M-W-c-W} and~\eqref{eq:moment-conv-rho-00}.

\begin{proof}
First suppose that $\rho$ has a non-negative Wigner function. Then, by~\cite[Theorem 9.10]{BhRa} for some constant $C_0>0$ and polynomial $L_0(\cdot)$ as in the statement of the lemma we have
\begin{align}\label{eq:BRMethod-positive-W}
	\bigg|\partial_{z}^\alpha  \partial_{\bar z}^\beta \bigg(\chi_{\rho^{\boxplus n}}(z) - \chi_{\rho_{G}}(z) \Big(1 + \sum_{r=1}^{\kappa-2} n^{-\frac r2} E_{\rho, r} (z)\Big) \bigg)\bigg| \leq  n^{-\frac{\kappa-2}{2}} M^W_{\kappa}(\rho) L_0(\| z\|) \chi_{\rho_G}(z)^{\frac 12}, 
\end{align}
for all $\|z\|\leq C_0 M_{\kappa}(\rho)^{-1/(\kappa-2)}\sqrt n$. Here, we use the fact that the covariance matrix of any density operator is positive definite and as discussed in the previous subsection, the moments  of $\rho$ bound the moments of its Wigner function as a probability density. 

Next, for an arbitrary $\rho$ consider its symmetric convolution with the vacuum state $\sigma = \rho\boxplus \ketbra{0}{0}$. Observing that the Wigner function of $\sigma$ is non-negative, we can apply the above inequality for $\sigma$. Using~\eqref{eq:characteristic-convolution}, we have
$$
\chi_{\sigma^{\boxplus n}} (z) = \chi_{\ketbra{0}{0}}(z/\sqrt 2)\chi_{\rho^{\boxplus n}} (z/\sqrt 2)  = e^{-\frac 14\|z\|^2}  \chi_{\rho^{\boxplus n}} (z/\sqrt 2).
$$
Then, the derivative  $\partial_{z}^\alpha  \partial_{\bar z}^\beta\chi_{\rho^{\boxplus n}} (z)$ can be expressed in terms of lower-order derivatives of $\chi_{\sigma^{\boxplus n}} (\sqrt 2 z)$ and those of $e^{\frac 12\|z\|^2} $. 
We similarly have $\chi_{\sigma_G}(z) = e^{-\frac 14\|z\|^2}  \chi_{\rho_G} (z/\sqrt 2)$. Comparing with~\eqref{eq:expansion-ET}, we realize that $E_{\sigma, r}(z) = E_{\rho, r}(z/\sqrt 2)$. Thus, the derivative 
$\partial_{z}^\alpha  \partial_{\bar z}^\beta \chi_{\rho_{G}}(z) \Big(1 + \sum_{r=1}^{\kappa-2} n^{-\frac r2} E_{\rho, r} (z)\Big)$ can also be expressed in terms of lower-order derivatives of $\chi_{\sigma_G} (\sqrt 2 z)\big(1 + \sum_{r=1}^{\kappa-2} n^{-\frac r2} E_{\sigma, r} (\sqrt 2 z)\big) $ and those of $e^{\frac 12\|z\|^2}$. Therefore, starting with the left hand side of~\eqref{eq:BRMethod}, writing down the derivative as explained, applying the triangle inequality and using~\eqref{eq:BRMethod-positive-W} for $\sigma$ to bound each term in the triangle inequality, we arrive at
\begin{align*}
	  \bigg|\partial_{z}^\alpha  \partial_{\bar z}^\beta \bigg(\chi_{\rho^{\boxplus n}}(z) - \chi_{\rho_{G}}(z) \Big(1 + \sum_{r=1}^{\kappa-2} n^{-\frac r2} E_{\rho, r} (z)\Big) \bigg)\bigg| & \leq  n^{-\frac{\kappa-2}{2}} M^W_{\kappa}\big(\rho\boxplus \ketbra00\big) L(\| z\|) \chi_{\sigma_G}(\sqrt 2 z)^{\frac 12} e^{\frac 12\|z\|^2}\\
& = n^{-\frac{\kappa-2}{2}} M_{\kappa}^W\big(\rho\boxplus \ketbra00\big) L(\| z\|) \chi_{\rho_G}( z)^{\frac 12} e^{\frac 14\|z\|^2},
\end{align*}
which holds for any $z\in \mathbb C^m$ satisfying $\|z\|\leq C\epsilon \sqrt n$, where $C=C_0/\sqrt 2$. Here, $L(\cdot)$ is a polynomial which can be computed in terms of $L_0(\cdot)$ and the polynomials appearing in the derivatives of $e^{\frac 12 \|z\|^2}$. The proof for $M_\kappa = M_\kappa^W\big(\rho\boxplus \ketbra00\big)$ concludes noting that  $\chi_{\rho_G}(z) \leq e^{-\frac 1 2 \nu_{\min}\|z\|^2}$.  
Also, the fact that $M^W_{\kappa}\big(\rho\boxplus\ketbra{0}{0}\big)$ can be replaced with $M_{\kappa}\big(\rho\boxplus\ketbra{0}{0}\big)$ or $M_{\kappa}\big(\rho\big)$ follows from~\eqref{eq:M-W-c-W} and~\eqref{eq:moment-conv-rho-00}. 

\end{proof}

We will also need the following bound on the derivatives of $E_{\rho, r} (z)$.

\begin{lemma}\label{EdgeWorth-BindDerivatives}
For any $\alpha, \beta\in \mathbb Z_+^m$ with $|\alpha| + |\beta| \leq \kappa $, and any $r\in \mathbb{N}$, there exists a polynomial $Q(\cdot) = Q_{\nu_{\max}}(\cdot)$ depending only on $\nu_{\max} = \|\boldsymbol{\gamma}(\rho)\|$, such that
\[
	\Big|\partial_{z}^\alpha  \partial_{\bar z}^\beta \big(  E_{\rho, r} (z) \big)\Big| \leq  Q(\|z\|) M_{r+2}, \qquad \quad \forall z\in \mathbb C^{m},
\]
where $M_{r+2}$ can be any of $M^W_{r+2}\big(\rho\boxplus \ketbra{0}{0}  \big)$, $M_{r+2}\big(\rho\boxplus \ketbra{0}{0}  \big)$ or $M_{r+2}\big(\rho  \big)$.
\end{lemma}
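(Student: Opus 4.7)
The plan is to expand $E_{\rho,r}(z)$ in terms of the quantum cumulants $q_\alpha=\partial^\alpha\ln\chi_\rho(0)$ and bound the resulting monomials directly. Exponentiating~\eqref{eq:log-expansion-0} and collecting powers of $n^{-1/2}$ writes
\[
E_{\rho,r}(z)=\sum_{s\ge1}\frac{1}{s!}\sum_{\substack{(\alpha_1,\dots,\alpha_s):\\ |\alpha_j|\ge 3,\ \sum_j(|\alpha_j|-2)=r}}\prod_{j=1}^{s}\frac{q_{\alpha_j}\,z^{\alpha_j}}{\alpha_j!}.
\]
A derivative $\partial_z^\alpha\partial_{\bar z}^\beta$ only lowers the degree in $z,\bar z$ and introduces combinatorial prefactors, so the task reduces to bounding $|q_{\alpha_1}\cdots q_{\alpha_s}|$ by a constant (depending on $\nu_{\max}$ and $r$) times $M_{r+2}(\rho)$ and folding the surviving $z^{\alpha_j}$ factors into a polynomial $Q(\|z\|)$.

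I would first bound each individual cumulant $q_\alpha$ with $3\le|\alpha|\le r+2$ by $c(|\alpha|)\,M_{|\alpha|}(\rho)$. Writing $\chi_\rho=1+h$ with $h=\chi_\rho-1$ and expanding $\ln(1+h)=h-h^2/2+\cdots$ (equivalently, Fa\`a di Bruno) expresses $q_\alpha$ as a polynomial in the moment-type derivatives $\partial^\beta\chi_\rho(0)$ for $|\beta|\le|\alpha|$; each such derivative is bounded by a constant times $M_{|\beta|}(\rho)$ via~\eqref{eq:BoundDerivative-Moments}. The centering of $\rho$ kills all $|\beta|=1$ contributions, so every surviving product has indices of order $\ge 2$ summing to $|\alpha|$; Lemma~\ref{lem:moments-monotone} then gives $\prod_i M_{|\beta_i|}(\rho)\le M_{|\alpha|}(\rho)$, as required.

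Next, for a product $|q_{\alpha_1}\cdots q_{\alpha_s}|$ appearing in $E_{\rho,r}$, I would use the log-convexity of $k\mapsto M_k(\rho)$ (Lyapunov's inequality applied to the classical distribution $p_n=\bra{n}\rho\ket{n}$ on the spectrum of $N_m+m$, which refines Lemma~\ref{lem:moments-monotone}) to write
\[
M_{k_j}(\rho)\le M_2(\rho)^{(r+2-k_j)/r}\,M_{r+2}(\rho)^{(k_j-2)/r}\qquad\text{for }2\le k_j\le r+2.
\]
Multiplying over $j$ and using $\sum_j k_j=r+2s$ yields $\prod_j M_{k_j}(\rho)\le M_2(\rho)^{s-1}\,M_{r+2}(\rho)$, which is linear in $M_{r+2}(\rho)$ since $M_2(\rho)$ is bounded by a constant depending only on $m$ and $\nu_{\max}$. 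This interpolation step is the main subtlety, collapsing a product of several moment factors into a single linear factor in $M_{r+2}(\rho)$; once this is done, the asserted bound follows after absorbing the combinatorial constants and the $z^{\alpha_j}$ factors into $Q(\|z\|)$. The equivalence of the three candidate right-hand side quantities follows from~\eqref{eq:M-W-c-W} and~\eqref{eq:moment-conv-rho-00}, exactly as at the end of the proof of Lemma~\ref{BRMethod}.
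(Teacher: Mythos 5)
Your route is genuinely different from the paper's. The paper proves this lemma in two lines: since $\sigma=\rho\boxplus\ketbra{0}{0}$ has a non-negative Wigner function, the classical bound on derivatives of Edgeworth polynomials (\cite[Lemma 9.5]{BhRa}) applies directly to the probability density $W_\sigma$ and yields $\big|\partial_{z}^\alpha\partial_{\bar z}^\beta E_{\sigma,r}(z)\big|\leq Q_0(\|z\|)\,M^W_{r+2}\big(\rho\boxplus\ketbra{0}{0}\big)$, after which the identity $E_{\sigma,r}(z)=E_{\rho,r}(z/\sqrt 2)$ transfers the bound to $\rho$. Your self-contained argument --- the exponential-formula expansion of $E_{\rho,r}$ in cumulants, the Fa\`a di Bruno bound $|q_\alpha|\leq c(|\alpha|)M_{|\alpha|}(\rho)$ (centering kills the size-one blocks, and Lemma~\ref{lem:moments-monotone} recombines the surviving blocks since their orders sum to $|\alpha|$), and the Lyapunov interpolation collapsing $\prod_j M_{k_j}\leq M_2^{s-1}M_{r+2}$ via $\sum_j(k_j-2)=r$ --- is correct as far as it goes, and it proves the variant of the lemma with $M_{r+2}(\rho)$ on the right-hand side. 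That is in fact the only variant the paper later uses (in~\eqref{Aux2-bindEdge} and~\eqref{BindEdgeWorth_Ent}, where $M_{n,r+2}=M_{r+2}(\tilde\sigma_n)$), and your argument has the merit of not importing the classical lemma as a black box.

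However, your final sentence contains a genuine directional error. The inequalities~\eqref{eq:M-W-c-W} and~\eqref{eq:moment-conv-rho-00} give $M^W_{r+2}\big(\rho\boxplus\ketbra{0}{0}\big)\leq c' M_{r+2}\big(\rho\boxplus\ketbra{0}{0}\big)\leq c' M_{r+2}(\rho)$; in Lemma~\ref{BRMethod} the base estimate was established with the \emph{smallest} of the three quantities, $M^W_{r+2}\big(\rho\boxplus\ketbra{0}{0}\big)$, so enlarging the right-hand side was legitimate there. You have established the bound with the \emph{largest} quantity, $M_{r+2}(\rho)$, and the strongest variant of the statement --- the one with $M^W_{r+2}\big(\rho\boxplus\ketbra{0}{0}\big)$ --- does not follow from it, since that quantity can be much smaller than $M_{r+2}(\rho)$. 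The fix stays entirely within your framework: run the identical cumulant argument for the genuine probability density $W_\sigma$. Since $\chi_\sigma(z)=\chi_\rho(z/\sqrt 2)\,e^{-\frac14\|z\|^2}$, the classical cumulants of $W_\sigma$ of order $\geq 3$ are those of $\rho$ rescaled by $2^{-|\alpha|/2}$; the classical cumulant--moment bound gives $|q^{\sigma}_\alpha|\leq c(|\alpha|)\,M^W_{|\alpha|}(\sigma)$, Lyapunov's inequality holds verbatim for the classical moments $k\mapsto M^W_k(\sigma)$ with $M^W_2(\sigma)$ controlled by the covariance matrix (hence by $\nu_{\max}$, up to the vacuum contribution), and $E_{\sigma,r}(z)=E_{\rho,r}(z/\sqrt 2)$ transfers the resulting bound back to $E_{\rho,r}$. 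With that repair your proof covers all three variants.
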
 

\begin{proof}
The proof is similar to that of Lemma~\ref{BRMethod}. First, since $\sigma=\rho\boxplus \ketbra{0}{0}$ has a non-negative Wigner function by~\cite[Lemma 9.5]{BhRa} we have
$$
\Big|\partial_{z}^\alpha  \partial_{\bar z}^\beta \big(  E_{\sigma, r} (z) \big)\Big| \leq  Q_0(\|z\|) M^W_{r+2}\big(\rho\boxplus \ketbra{0}{0}  \big),
$$
for some polynomial $Q_0(\cdot)$. Then, as argued in the proof of Lemma~\ref{BRMethod}, we have $E_{\sigma, r}(z) = E_{\rho, r}(z/\sqrt 2)$, from which the lemma for arbitrary $\rho$ follows.
\end{proof}


\section{State truncation via Fock basis}\label{Sec:State-Truncation}

To prove Theorem~\ref{MainTheoremTrace} and Theorem~\ref{MainTheorem} under the assumption of finite moments of constant order independent of \( m \), the number of modes, we need to approximate the underlying quantum state \( \rho \) with states that possess finite higher-order moments. To achieve this, we introduce a method we refer to as \emph{state truncation via the Fock basis}, which constructs a sequence of finite-rank states \( \sigma_n \) that approximate \( \rho \), ensuring that all the moments of \( \sigma_n \) are finite. However, since our truncated states \( \sigma_n \) depend on \( n \), their moments will also vary with \( n \). Therefore, our analysis requires careful consideration of the moments of \( \sigma_n \) as well as its characteristic function.

For any integer \( n \in \mathbb{N} \), we define the projector \( \Pi_n \) as 
\begin{equation}\label{eq:Truncation_Projection}
	\Pi_n := \sum_{|k| \leq n} \ketbra{k}{k},
\end{equation}
where \( \ket{k} = \ket{k_1, \cdots, k_m} \) is a state in the Fock basis and \( |k| = \sum_{i=1}^m k_i \). We then define the truncation of \( \rho \) by
\begin{equation}\label{eq:def_sigmaN}
	\sigma_n = \frac{\rho_n}{\tr(\rho_n)}, \quad \text{where} \quad \rho_n = \Pi_n \rho \Pi_n.
\end{equation}
Note that all moments of \( \sigma_n \) are finite, yet they depend on \( n \). The following lemma helps to ensure that the trace distance between the \( n \)-fold convolution of \( \rho \) and the \( n \)-fold convolution of \( \sigma_n \) decays at a sufficient rate, allowing us to use them to prove Theorem~\ref{MainTheoremTrace} and Theorem~\ref{MainTheorem}.

\begin{lemma}\label{Aux1}
	Let $\rho$ be an $m$-mode quantum state with finite $s$-th order moments for some $s \geq 3$. With the above notation, we have
	$$
	\big\| \rho^{\boxplus n} - \sigma_n^{\boxplus n}\big\|_1=\mathcal{O}\Big(\frac{1}{n^{(s-2)/2}}\Big) \quad \text{as} \quad n\rightarrow\infty.
	$$
\end{lemma}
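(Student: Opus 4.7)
The plan is to use the channel structure of the convolution to reduce to a tensor-product estimate, and then control the truncation via the moment bound. Since $\rho^{\boxplus n}=\Phi_n(\rho^{\otimes n})$ for a quantum channel $\Phi_n$ (a beam-splitter unitary followed by partial trace over the last $n-1$ of the $m$-mode output groups), contractivity of the trace norm under quantum channels gives $\|\rho^{\boxplus n}-\sigma_n^{\boxplus n}\|_1 \leq \|\rho^{\otimes n}-\sigma_n^{\otimes n}\|_1$, so it suffices to bound the right-hand side.

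Next, I would control the truncation probability. Since $N_m+m\geq n+1+m$ on the range of $\Pi_n^{\perp}$, one has the spectral inequality $\Pi_n^{\perp}\leq (N_m+m)^{s/2}/(n+1+m)^{s/2}$; tracing against $\rho$ and using $M_s(\rho)<\infty$ yields $1-p_n = \tr(\rho\Pi_n^{\perp}) = \mathcal O(n^{-s/2})$ with $p_n := \tr(\rho\Pi_n)$, and hence $1-p_n^n \leq n(1-p_n) = \mathcal O(n^{-(s-2)/2})$. Writing $\sigma_n^{\otimes n}=p_n^{-n}\rho_n^{\otimes n}$ with $\rho_n = \Pi_n\rho\Pi_n$ and applying the triangle inequality,
\begin{equation*}
\|\rho^{\otimes n}-\sigma_n^{\otimes n}\|_1 \;\leq\; \|\rho^{\otimes n}-\rho_n^{\otimes n}\|_1 \;+\; (1-p_n^n),
\end{equation*}
so the renormalization correction already contributes at the desired rate $\mathcal O(n^{-(s-2)/2})$. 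What remains is to bound $\|\rho^{\otimes n}-\tilde\Pi_n\rho^{\otimes n}\tilde\Pi_n\|_1$ with $\tilde\Pi_n = \Pi_n^{\otimes n}$ by the same rate.

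This final step is the main obstacle: a direct gentle-measurement / Cauchy--Schwarz estimate only yields $2\sqrt{1-p_n^n} = \mathcal O(n^{-(s-2)/4})$, which is a square root short of the target. I expect the sharp rate to follow by expanding $\rho^{\otimes n}-\rho_n^{\otimes n}$ as a telescoping sum in which exactly one slot contains $\rho - \rho_n$, and then in each summand inserting compensating powers of $(N_m+m)^{s/2}$ into the active slot so as to apply the spectral inequality $\Pi_n^{\perp}\leq(N_m+m)^{s/2}(n+1+m)^{-s/2}$ directly at the trace-norm level rather than via the lossy Cauchy--Schwarz step. The delicate part will be arranging these inserted weights so that only the assumed moment $M_s(\rho)$---and no higher moment---ever appears, which may require distributing the spectral weight across several tensor positions or using the Araki--Lieb--Thirring-type monotonicity of Lemma~\ref{lem:moments-monotone} at a suitable fractional power.
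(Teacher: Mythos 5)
Your preliminary reductions are individually sound: the data-processing step is legitimate because $\rho^{\boxplus n}$ and $\sigma_n^{\boxplus n}$ are images of $\rho^{\otimes n}$ and $\sigma_n^{\otimes n}$ under one and the same beam-splitter channel; your tail estimate $1-p_n=\mathcal O(n^{-s/2})$, $p_n:=\tr(\rho_n)$, is exactly the paper's bound~\eqref{eq:bound-tr-rho-n}; and the normalization term $1-p_n^n\leq n(1-p_n)$ lands on the right rate. But the proof is not complete: the decisive estimate $\|\rho^{\otimes n}-\rho_n^{\otimes n}\|_1=\mathcal O(n^{-(s-2)/2})$ is only conjectured, and the repair you sketch cannot deliver it. Telescoping $\rho^{\otimes n}-\rho_n^{\otimes n}$ leaves one slot carrying $\rho-\rho_n=\Pi_n^{\perp}\rho\Pi_n+\Pi_n\rho\Pi_n^{\perp}+\Pi_n^{\perp}\rho\Pi_n^{\perp}$, and for the off-diagonal blocks your weight-insertion idea saturates at a square root: a one-sided insertion of $(N_m+m)^{s/2}$ would require $M_{2s}(\rho)$, so the weight must be split through $\sqrt\rho$, yielding only $\|\Pi_n^{\perp}\rho\|_1\leq\|\Pi_n^{\perp}(N_m+m)^{-s/4}\|\cdot\|(N_m+m)^{s/4}\sqrt\rho\|_2\cdot\|\sqrt\rho\|_2\leq(n+1+m)^{-s/4}M_s(\rho)^{1/2}$. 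This $n^{-s/4}$ is sharp at the single-copy level under your hypotheses: for a pure state $\sum_k a_k\ket k$ with $|a_k|^2\asymp k^{-1-s/2-\delta}$ one has $M_s(\rho)<\infty$ yet $\|\Pi_n\rho\Pi_n^{\perp}\|_1=\sqrt{p_n(1-p_n)}\asymp n^{-s/4-\delta/2}$, so bounding each telescoped summand separately gives at best $n\cdot n^{-s/4-\delta/2}$, which for $s=3$ and small $\delta$ diverges. No redistribution of $(N_m+m)$-weights across tensor positions closes a Cauchy--Schwarz deficit that is already present in a single factor.

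The paper's route is different in two respects, and it is worth seeing both. First, it telescopes the convolution directly, writing $\rho^{\boxplus n}-\sigma_n^{\boxplus n}=\sum_{i=0}^{n-1}\rho^{\boxplus i}\boxplus_{\frac in}(\rho-\sigma_n)\boxplus_{\frac{n-i-1}{n}}\sigma_n^{\boxplus\, n-i-1}$ and using contractivity to get $\|\rho^{\boxplus n}-\sigma_n^{\boxplus n}\|_1\leq n\|\rho-\sigma_n\|_1$; you would reach the identical reduction by telescoping the \emph{full} difference $\rho^{\otimes n}-\sigma_n^{\otimes n}$ (one slot containing $\rho-\sigma_n$, normalization included) instead of peeling off $1-p_n^n$ first and then fighting $\rho^{\otimes n}-\rho_n^{\otimes n}$. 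Second, all the weight then falls on the single-copy bound $\|\rho-\sigma_n\|_1=\mathcal O(n^{-s/2})$, which the paper obtains from the evaluation $\|\rho-\sigma_n\|_1=\sum_{|k|\leq n}\big(\tr(\rho_n)^{-1}-1\big)\bra k\rho\ket k+\sum_{|k|>n}\bra k\rho\ket k=2\big(1-\tr(\rho_n)\big)$. You should note that this computation accounts only for the diagonal Fock-basis contributions and silently drops the blocks $\Pi_n\rho\Pi_n^{\perp}$ --- precisely the blocks behind your square-root worry, which the pure-state example above shows can contribute at order $\sqrt{1-p_n}$ to $\|\rho-\sigma_n\|_1$. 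So your diagnosis of where the difficulty lives is accurate, and it is in fact the same point that the paper's two-line evaluation glosses over; but since your text stops at a conjecture exactly there, it does not establish the lemma.
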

\begin{proof}
	Note first that
	\begin{align}
		1 - \tr(\rho_n) &= 1 - \sum_{|k| \leq n} \bra{k} \rho \ket{k} \nonumber \\
		&= \sum_{|k| > n} \bra{k} \rho \ket{k} = \sum_{|k| > n} |k|^{s/2} |k|^{-s/2} \bra{k} \rho \ket{k} \nonumber \\
		&\leq n^{-s/2} \sum_{|k| > n} |k|^{s/2} \bra{k} \rho \ket{k} \nonumber\\
		&= \mathcal{O}\Big(\frac{1}{n^{s/2}}\Big), \label{eq:bound-tr-rho-n}
	\end{align}
	where in the last line we use the fact that the $s$-th order moments of $\rho$ are finite. This in turn yields
	\begin{align*}
		\big\| \rho - \sigma_n\big\|_1 &= \tr\Big(\Big|\rho -  \frac{\rho_n}{\tr(\rho_n)}\Big|\Big) \\
		&= \sum_{|k| \leq n} \Big(\frac{1}{\tr(\rho_n)} - 1\Big) \bra{k} \rho \ket{k} + \sum_{|k| > n} \bra{k} \rho \ket{k} \\ 
		&= 2 \big(1- \tr(\rho_n)\big) = \mathcal{O}\Big(\frac{1}{n^{s/2}}\Big).
	\end{align*}
	Finally, we use this estimate to obtain
	\begin{align}
		\big\| \rho^{\boxplus n} - \sigma_n^{\boxplus n}\big\|_1 &= \bigg\| \sum_{i=0}^{n-1} \left( \rho^{\boxplus i} \boxplus_{\frac in} \sigma_n^{\boxplus n-i} - \rho^{\boxplus i+1} \boxplus_{\frac {i+1}{n}} \sigma_n^{\boxplus n-i-1}\right)\bigg\|_1 \nonumber \\
		&\leq \sum_{i=0}^{n-1} \Big\| \rho^{\boxplus i} \boxplus_{\frac in} \sigma_n^{\boxplus n-i} - \rho^{\boxplus i+1} \boxplus_{\frac {i+1}{n}} \sigma_n^{\boxplus n-i-1}\Big\|_1 \nonumber\\
		&= \sum_{i=0}^{n-1} \Big\| \rho^{\boxplus i} \boxplus_{\frac in} (\rho - \sigma_n) \boxplus_{\frac {n-i-1}{n}} \sigma_n^{\boxplus n-i-1}\Big\|_1 \nonumber\\
		&\leq n \| \rho - \sigma_n\|_1 = \mathcal{O}\Big(\frac{1}{n^{(s-2)/2}}\Big). \nonumber
	\end{align}
Note that in the first line we invoked the telescoping identity
\begin{equation*}
    X_0 - X_n = \sum_{i = 0}^{n - 1} (X_i - X_{i + 1})
\end{equation*}
with $X_0 = \sigma_n^{\boxplus n}$, $X_n = \rho^{\boxplus n}$, and
\begin{equation*}
    X_i = \rho^{\boxplus i} \boxplus_{\frac in} \sigma_n^{\boxplus n-i}, \qquad i = 1,\cdots,n-1.
\end{equation*}
In the second line we used the triangle inequality for the trace distance. The third line is a consequence of the associativity of the beam-splitter cascade and the linearity of $\boxplus_\eta$ in each of its arguments. The penultimate line follows from the data-processing inequality for the trace distance.
\end{proof}

It is important to note that the states \( \sigma_n \) are not centered in general. In our arguments, we apply displacement operators to center them, so as we will see, it is beneficial to have upper bounds on the first moments of \( \sigma_n \). The following lemma provides us with such an upper bound.
\begin{lemma}\label{bound-FirstNorm}
	Let $\rho$ be a centered $m$-mode quantum state with finite $s$-th order moments for some $s \geq 3$. Then, for any $j$ we have $|z_{n, j}|=\mathcal{O}\big(\frac{1}{n^{(s-1)/2}}\big)$, where $z_{n,j}=\tr(\sigma_n \ac_j)$ and $\sigma_n$ is defined in~\eqref{eq:def_sigmaN}.
\end{lemma}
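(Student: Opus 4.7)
The plan is to use the centered assumption on $\rho$ to rewrite the first moment of $\sigma_n$ as a pure tail object in the Fock basis, and then control that tail via two applications of Cauchy--Schwarz together with a Markov-type argument based on the finite $s$-th order moment of $\rho$. First I would note that $z_{n,j} = \tr(\rho_n \mathbf{a}_j)/\tr(\rho_n)$ and that $\tr(\rho_n) = 1 - \mathcal{O}(n^{-s/2})$ by exactly the estimate~\eqref{eq:bound-tr-rho-n} used in the proof of Lemma~\ref{Aux1}, so the prefactor is bounded and it suffices to show $|\tr(\rho_n \mathbf{a}_j)| = \mathcal{O}(n^{-(s-1)/2})$.

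Using $\mathbf{a}_j|k\rangle = \sqrt{k_j}\,|k-e_j\rangle$ and expanding in the Fock basis, I would compute
\[
\tr(\rho_n \mathbf{a}_j) \;=\; \tr(\Pi_n \rho \Pi_n \mathbf{a}_j) \;=\; \sum_{|k|\leq n,\, k_j\geq 1} \sqrt{k_j}\,\langle k|\rho|k-e_j\rangle,
\]
while on the other hand $\tr(\rho \mathbf{a}_j) = \sum_{k_j \geq 1} \sqrt{k_j}\,\langle k|\rho|k-e_j\rangle = 0$ by the centered hypothesis. Subtracting gives the exact tail identity
\[
\tr(\rho_n \mathbf{a}_j) \;=\; -\sum_{|k|>n,\, k_j\geq 1} \sqrt{k_j}\,\langle k|\rho|k-e_j\rangle.
\]

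To bound this tail I would apply Cauchy--Schwarz twice. Positivity of $\rho$ yields the pointwise inequality $|\langle k|\rho|k-e_j\rangle|^2 \leq \langle k|\rho|k\rangle\,\langle k-e_j|\rho|k-e_j\rangle$, after which a second Cauchy--Schwarz at the level of the summation factorizes the tail into
\[
\Big(\sum_{|k|>n} k_j\,\langle k|\rho|k\rangle\Big)^{1/2} \cdot \Big(\sum_{|k'|\geq n} \langle k'|\rho|k'\rangle\Big)^{1/2},
\]
where the reindexing $k' = k - e_j$ is used on the second factor. A Markov-type estimate then handles each tail: writing $1 = (|k|+m)^{s/2}(|k|+m)^{-s/2}$ and $|k|+m = (|k|+m)^{s/2}(|k|+m)^{1-s/2}$, and using $s\geq 3 > 2$ so that the small factors on $\{|k|>n\}$ are majorized by $(n+m)^{-s/2}$ and $(n+m)^{1-s/2}$ respectively, each tail is bounded in terms of $M_s(\rho)$, yielding rates $\mathcal{O}(n^{-s/2})$ and $\mathcal{O}(n^{1-s/2})$. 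Multiplying the resulting square roots produces exactly $\mathcal{O}(n^{(1-s)/2}) = \mathcal{O}(n^{-(s-1)/2})$, as required.

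The main ``obstacle'' is essentially bookkeeping: one has to pair the $\sqrt{k_j}$ factors with the right side of the Cauchy--Schwarz split (so that the weight $k_j$ appears intact in one tail rather than being spread as $\sqrt{k_j}$ across both) in order for the two tail exponents to add to $1-s$ rather than to a weaker value like $-s/2$. No further quantum input is required beyond the action of $\mathbf{a}_j$ on the Fock basis and positivity of $\rho$; the assumption $s\geq 3$ enters only to ensure that the exponent $1-s/2$ is strictly negative, which is what allows the weighted tail $\sum_{|k|>n} |k|\,\langle k|\rho|k\rangle$ to decay at all.
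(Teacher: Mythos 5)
Your proposal is correct and follows essentially the same route as the paper's proof: the normalization is handled by $\tr(\rho_n)\geq \tfrac12$ for large $n$ via~\eqref{eq:bound-tr-rho-n}, centeredness converts $\tr(\rho_n \ac_j)$ into a pure tail sum over $|k|>n$, positivity of $\rho$ controls the off-diagonal entries $\bra{k}\rho\ket{k^{-j}}$, and a Markov-type estimate with $M_s(\rho)$ delivers the rate $\mathcal{O}\big(n^{-(s-1)/2}\big)$. The only difference is cosmetic: the paper symmetrizes pointwise via $2\big|\bra{k}\rho\ket{k^{-j}}\big|\leq \bra{k}\rho\ket{k}+\bra{k^{-j}}\rho\ket{k^{-j}}$ and applies a single Markov bound with exponent $(s-1)/2$, whereas you apply Cauchy--Schwarz at the level of the sum and multiply two tails of orders $\mathcal{O}\big(n^{(2-s)/4}\big)$ and $\mathcal{O}\big(n^{-s/4}\big)$ --- and your closing worry about where to place the weight $k_j$ is actually unnecessary, since spreading $\sqrt{k_j}$ across both factors yields the same final exponent.
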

\begin{proof}
	For sufficiently large $n$, we have
	\begin{align}
		|\tr(\sigma_n \ac_j)| 
		&\leq 2 |\tr(\rho_n \ac_j)| \nonumber\\
		&=  2 |\tr((\rho - \rho_n) \ac_j)| \nonumber\\
		&= \sum_{|k| >n} 2 \sqrt{k_j} \, |\bra{k} \rho \ket{k^{-j}}| \qquad\qquad \qquad \qquad \text{($k^{-j} = (k_1, \cdots, k_j-1, \cdots, k_m)$)} \nonumber\\
		&\leq \sum_{|k| >n}  \sqrt{k_j} \, \big( \bra{k} \rho \ket{k} + \bra{k^{-j}} \rho \ket{k^{-j}}\big) \nonumber\\
		&\leq 2\sum_{|k| \geq n}  \sqrt{|k|+1} |\bra{k} \rho \ket{k}| \nonumber \\
		&\leq \frac{2}{n^{(s-1)/2}} \sum_{|k| \geq n}  (|k|+1)^{s/2} |\bra{k} \rho \ket{k}| \nonumber = \mathcal{O}\Big(\frac{1}{n^{(s-1)/2}}\Big).
	\end{align}
	In the first line, we utilize the fact that, according to~\eqref{eq:bound-tr-rho-n}, for large \( n \), we have \( \tr(\rho_n) \geq \frac{1}{2} \). The second line leverages the fact that \( \rho \) is centered. The fourth line follows from the Cauchy–Schwarz and AM-GM inequalities, while, in the fifth line, we use the fact that \( |k| \geq k_j \). Finally, the last line follows from the finiteness of the \( s \)-th order moments of \( \rho \).
\end{proof}
Now, let \( z_n = (z_{n,1}, \dots, z_{n,m}) \) consist of the first moments of \( \sigma_n \), defined as 
$$ z_{n,j} := \tr(\sigma_n \bfa_j). $$
Additionally, let \( \tilde{\sigma}_n \) be the displacement of \( \sigma_n \) by \( D_{z_n} \):
$$ \tilde{\sigma}_n := D_{z_n}^\dagger \sigma_n D_{z_n}. $$
Moreover, denote the Gaussification of \( \tilde{\sigma}_n \) by \( \tilde{\sigma}_{n,G} \). Note that by definition, \( \tilde{\sigma}_{n,G} \) is centered and has the same covariance matrix as that of \( \sigma_n \). In the following lemma, we establish an asymptotic bound on the distance between the covariance matrices of \( \tilde{\sigma}_{n,G} \) and \( \rho_G \) in terms of the operator norm.

\begin{lemma}\label{OperatorNormCovarianceMatrix}
	Let $\rho$ be a centered $m$-mode quantum state with finite $s$-th order moments for some $s \geq 3$. With the above notation, we have
	\begin{equation} \label{SecondMomentTruncation}
		\big\| \boldsymbol{\gamma}(\rho) - \boldsymbol{\gamma}(\tilde \sigma_{n})\big\| =\big\| \boldsymbol{\gamma}(\rho_G) - \boldsymbol{\gamma}(\tilde \sigma_{n,G})\big\| = \mathcal{O}\Big(\frac{1}{n^{(s-2)/2}}\Big) \quad \text{as} \quad n\rightarrow\infty,
	\end{equation}
	where $\|\cdot\|$ denotes the operator norm.
\end{lemma}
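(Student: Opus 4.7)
The plan is to reduce everything to a computation of entries in the Fock basis. First, since conjugation by a displacement operator leaves the covariance matrix invariant, $\boldsymbol{\gamma}(\tilde\sigma_n) = \boldsymbol{\gamma}(\sigma_n)$, and by the definition of Gaussification, $\boldsymbol{\gamma}(\rho_G)=\boldsymbol{\gamma}(\rho)$ and $\boldsymbol{\gamma}(\tilde\sigma_{n,G})=\boldsymbol{\gamma}(\tilde\sigma_n)$. Thus the two quantities appearing in \eqref{SecondMomentTruncation} agree, and it suffices to establish $\| \boldsymbol{\gamma}(\rho) - \boldsymbol{\gamma}(\sigma_n)\| = \mathcal O(n^{-(s-2)/2})$. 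Since $\boldsymbol{\gamma}$ is a $2m\times 2m$ matrix of fixed size, the operator norm is controlled entry-wise; so it is enough to prove, for each pair $i,j\in\{1,\dots,2m\}$, that $|\boldsymbol{\gamma}(\rho)_{ij} - \boldsymbol{\gamma}(\sigma_n)_{ij}| = \mathcal O(n^{-(s-2)/2})$.

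Next I would expand the definition. Because $\rho$ is centered, $\boldsymbol{\gamma}(\rho)_{ij} = \tr(\rho\{R_i,R_j\})$, whereas for the possibly non-centered $\sigma_n$ one has $\boldsymbol{\gamma}(\sigma_n)_{ij} = \tr(\sigma_n\{R_i,R_j\}) - 2\, d_i(\sigma_n) d_j(\sigma_n)$. By Lemma~\ref{bound-FirstNorm} (applied to both quadratures, which are linear combinations of $\bfa_j$ and $\bfa_j^\dagger$), $|d_i(\sigma_n)|=\mathcal O(n^{-(s-1)/2})$, so the correction term is $\mathcal O(n^{-(s-1)})$, which is negligible. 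The main task is therefore to bound $|\tr(\rho A) - \tr(\sigma_n A)|$ with $A=\{R_i,R_j\}$, a finite linear combination of the operators $\bfa_\alpha^{\#}\bfa_\beta^{\#}$.

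Writing $\sigma_n = \rho_n/\tr(\rho_n)$ and using $\tr(\rho_n)=1+\mathcal O(n^{-s/2})$ from \eqref{eq:bound-tr-rho-n}, I reduce the problem to bounding $\tr((\rho-\rho_n)A)$, up to lower-order terms coming from the normalization factor and from $\tr(\rho A)$ itself (which is finite since $M_s(\rho)<\infty$ and $s\geq 3\geq 2$, invoking Lemma~\ref{lem:tr-A-kappa-bounded}). Expanding in the Fock basis,
\[
\tr\big((\rho-\rho_n)A\big) = \sum_{\substack{k,k':\\|k|>n\ \text{or}\ |k'|>n}} \braket{k}{\rho\,k'}\braket{k'}{A\,k},
\]
where, for each of the pieces $\bfa_\alpha^{\#}\bfa_\beta^{\#}$, the matrix element $\braket{k'}{Ak}$ vanishes unless $k'$ differs from $k$ by at most two creations/annihilations, and in that case $|\braket{k'}{Ak}|\leq C(|k|+|k'|+1)$. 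Moreover, for each $k$ the number of such $k'$ is $\mathcal O(m^2)$.

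Finally, I apply the AM--GM bound $|\braket{k}{\rho\,k'}|\leq \tfrac12(\braket{k}{\rho k} + \braket{k'}{\rho k'})$ (exactly as in the proof of Lemma~\ref{bound-FirstNorm}). Since the indices in the sum satisfy $|k|\geq n-1$ or $|k'|\geq n-1$, I obtain
\[
\big|\tr((\rho-\rho_n)A)\big| \leq C'\!\!\sum_{|k|\geq n-1}(|k|+1)\braket{k}{\rho k} \leq \frac{C'}{n^{(s-2)/2}}\sum_{|k|\geq n-1}(|k|+1)^{s/2}\braket{k}{\rho k} = \mathcal O\!\left(\frac{1}{n^{(s-2)/2}}\right),
\]
using the finiteness of $M_s(\rho)$. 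Combining this with the $\mathcal O(n^{-s/2})$ normalization error and the $\mathcal O(n^{-(s-1)})$ first-moment correction yields the desired entry-wise bound, hence the operator norm estimate. The main technical obstacle is the careful Fock-basis bookkeeping of step four—making sure the symmetrized quadrature product $\{R_i,R_j\}$ changes the total particle number by at most two, so that the tail sum genuinely starts at $|k|\gtrsim n$ and the extra factor $(|k|+1)$ can be absorbed by giving up only two orders of moments.
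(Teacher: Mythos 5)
Your proof is correct and follows essentially the same route as the paper's: reduce both quantities to entry-wise bounds on $\big|\tr\big((\rho-\sigma_n)AB\big)\big|$ for quadratic monomials $A,B\in\{\bfa_1,\bfa_1^\dagger,\dots,\bfa_m,\bfa_m^\dagger\}$ (using displacement-invariance of the covariance matrix and Lemma~\ref{bound-FirstNorm} to dispose of the $\mathcal O(n^{-(s-1)})$ first-moment correction), absorb the normalization error via~\eqref{eq:bound-tr-rho-n}, and bound the Fock-basis tail sum with the Cauchy--Schwarz/AM--GM trick at the cost of two moment orders, exactly as in the paper. The only difference is presentational: the paper carries out the computation for the representative case $A=\bfa_j$, $B=\bfa_{j'}$ with $j\neq j'$, while you set up the bookkeeping uniformly for all entries of $\{R_i,R_j\}$.
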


\begin{proof}
	By the definition of the covariance matrix and using Lemma~\ref{bound-FirstNorm}, it suffices to show that \( \big| \tr\big((\rho - \sigma_n) AB\big) \big| = \mathcal{O}(1/n^{(s-2)/2}) \) for any \( A, B \in \{\bfa_1, \bfa_1^\dagger, \dots, \bfa_m, \bfa_m^\dagger\} \). We will only prove this for \( A = \bfa_j \) and \( B = \bfa_{j'} \) with \( j \neq j' \), as the proof for the other cases is similar. We compute
	\begin{align*}
		\big| \tr\big(\rho \bfa_j \bfa_{j'}\big) - \tr\big(\sigma_n \bfa_j \bfa_{j'}\big) \big| 
		& \leq \Big| \sum_k  \sqrt{k_j k_{j'}} \bra{k} \rho \ket{k^{-j,-{j'}}} - \frac{1}{\tr(\rho_n)}  \sum_{|k| \leq n}  \sqrt{k_j k_{j'}} \bra{k} \rho \ket{k^{-j,-{j'}}} \Big|  \\
		& \leq \frac{1 - \tr(\rho_n)}{\tr(\rho_n)} \sum_{|k| \leq n}  \sqrt{k_j k_{j'}} \big| \bra{k} \rho \ket{k^{-j,-{j'}}} \big| + \sum_{|k| > n}  \sqrt{k_j k_{j'}} \big| \bra{k} \rho \ket{k^{-j,-{j'}}} \big|. 
	\end{align*}
	For the first term, by~\eqref{eq:bound-tr-rho-n}, we have \( \frac{1 - \tr(\rho_n)}{\tr(\rho_n)} = \mathcal{O}(n^{-s/2}) \). Applying the Cauchy–Schwarz inequality yields \( \sum_{|k| \leq n}  \sqrt{k_j k_{j'}} \big| \bra{k} \rho \ket{k^{-j,-{j'}}} \big| < +\infty \) since the second-order moments of \( \rho \) are finite.  For the second term, by applying a similar chain of inequalities as in the proof of Lemma~\ref{bound-FirstNorm} and using the finiteness of the \( s \)-th order moments, we obtain
	\begin{align*}
		\sum_{|k| > n}  \sqrt{k_j k_{j'}} \big| \bra{k} \rho \ket{k^{-j,-{j'}}} \big| \leq  \frac{2}{(n-1)^{(s-2)/2}} \sum_{|k| \geq n-1}  (|k| + 2)^{s/2} \big| \bra{k} \rho \ket{k} \big| = \mathcal{O}\Big(\frac{1}{n^{(s-2)/2}}\Big).
	\end{align*}
	Combining these, the desired bound is established.
	
\end{proof}

Other moments of $\tilde \sigma_n$ can also be bounded in terms of the moments of $\rho$. 

\begin{lemma}\label{lem:BindMoments-Truncation}
	Let $\rho$ be a centered $m$-mode quantum state with finite $s$-th order moments for some $s \geq 3$. For any $\kappa\geq s$, we have 
	$M_\kappa(\tilde \sigma_n)= \mathcal{O}\big( M_s(\rho)n^{(\kappa-s)/2} \big)$. 
\end{lemma}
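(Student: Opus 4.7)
The plan is to first bound $M_\kappa(\sigma_n)$ directly from the truncation structure, and then transfer the bound to the displaced state $\tilde \sigma_n = D_{z_n}^\dagger \sigma_n D_{z_n}$ via Lemma~\ref{lem:moment-displaced}, using the smallness of $z_n$ established in Lemma~\ref{bound-FirstNorm}.

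For the first step, the key observation is that $\Pi_n$ commutes with $N_m$, since both are diagonal in the Fock basis. Consequently, conjugation by $\Pi_n$ does not leak any contributions from Fock states with $|k|>n$, and one can write
\[
M_\kappa(\sigma_n) \;=\; \frac{1}{\tr(\rho_n)}\sum_{|k|\leq n}(|k|+m)^{\kappa/2}\,\langle k|\rho|k\rangle.
\]
I would then split the exponent as $(|k|+m)^{\kappa/2} = (|k|+m)^{(\kappa-s)/2}(|k|+m)^{s/2}$ and bound the first factor by $(n+m)^{(\kappa-s)/2}$ on the support $|k|\leq n$. The remaining sum is at most $M_s(\rho)$, so
\[
M_\kappa(\sigma_n) \;\leq\; \frac{(n+m)^{(\kappa-s)/2}}{\tr(\rho_n)}\,M_s(\rho).
\]
Since $\tr(\rho_n) \to 1$ by~\eqref{eq:bound-tr-rho-n}, this already gives $M_\kappa(\sigma_n) = \mathcal O\!\big(M_s(\rho)\,n^{(\kappa-s)/2}\big)$.

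For the second step, Lemma~\ref{lem:moment-displaced} yields $M_\kappa(\tilde\sigma_n) \leq g_{m,\kappa}(\|z_n\|)\, M_\kappa(\sigma_n)$, while Lemma~\ref{bound-FirstNorm} guarantees $\|z_n\| = \mathcal O\!\big(n^{-(s-1)/2}\big)\to 0$. Hence $g_{m,\kappa}(\|z_n\|) = \mathcal O(1)$ as $n\to\infty$, and combining with the first step produces the claimed bound. I do not anticipate a serious obstacle: the power saving comes entirely from the truncation (the cap $|k|\leq n$ effectively bounds the number operator), while the displacement contributes only a bounded multiplicative constant thanks to the rapid decay of $z_n$. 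The only minor point requiring care is the commutation $[\Pi_n,N_m]=0$, which is immediate from the definition of $\Pi_n$ in~\eqref{eq:Truncation_Projection}.
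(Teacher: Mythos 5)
Your proposal is correct and follows essentially the same route as the paper: truncation caps the number operator on the support $|k|\leq n$, giving $M_\kappa(\sigma_n)\leq (n+m)^{(\kappa-s)/2}M_s(\rho)/\tr(\rho_n)$, and Lemma~\ref{lem:moment-displaced} together with Lemma~\ref{bound-FirstNorm} transfers this to $\tilde\sigma_n$ at the cost of a bounded factor $g_{m,\kappa}(\|z_n\|)$. If anything, you are slightly more careful than the paper in retaining the $1/\tr(\rho_n)$ normalization and noting explicitly that it tends to $1$.
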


\begin{proof}
	Using Lemma~\ref{lem:moment-displaced} for any $\kappa \geq 3$, we have 
	\begin{align*}
		M_{\kappa}\big(\tilde{\sigma}_{n}\big)  = M_{\kappa}\big(  D^\dagger_{z_n} \sigma_{n} D_{z_n}\big)    \leq g_{m, \kappa}(\|z_n\|)  M_{\kappa}( \sigma_{n} ),
	\end{align*}
	for some polynomial $g_{m, \kappa}(\cdot)$.
	Additionally, we have
	\begin{align*}	
		M_\kappa(\sigma_n) &= \sum_{k} (|k|+m)^{\kappa/2} \bra k \sigma_n \ket k \\
		&= \sum_{|k|\leq n} (|k|+m)^{\kappa/2} \bra k \sigma_n \ket k\\
		&= \frac{1}{\tr(\rho_n)}\sum_{|k|\leq n} (|k|+m)^{\kappa/2} \bra k \rho_n \ket k\\
		&\leq 2  (n+m)^{(\kappa-s)/2}  \sum_{|k| \leq n} (|k|+m)^{s/2} \bra{k} \rho \ket{k}  \\
		&\leq  2  (n+m)^{(\kappa-s)/2} M_s(\rho),
	\end{align*}
	where in the fourth line we use~\eqref{eq:bound-tr-rho-n} to handle $\tr(\rho_n)$. Putting these together, we arrive at the desired bound.
\end{proof}

We conclude this section with a bound on the characteristic function of the centered truncated state.

\begin{lemma}\label{lem:bound-chi-tilde-sigma}
	Let $\rho$ be a centered $m$-mode state with finite $s$-th order moments such that $s \geq 3$. For any $\epsilon>0$, there are constants $\theta_0, \theta_1>0$ such that for any sufficiently large $n$ and any $\|z\|\leq \epsilon\sqrt n$ we have
	\begin{align*}
		\big|\chi_{\tilde \sigma_n} (z/\sqrt n)\big|^{n-2m} \leq  e^{2\theta_0 m+\epsilon \theta_1 \|z\|^2} \chi_{\rho_G}(z).
	\end{align*}
\end{lemma}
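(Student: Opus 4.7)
The plan is to perform a Taylor expansion of $\chi_{\tilde \sigma_n}$ at the origin, convert the resulting pointwise estimate into an exponential upper bound for $|\chi_{\tilde \sigma_n}|$, and then raise it to the power $n-2m$ after substituting $w=z/\sqrt n$. Two earlier facts are decisive. First, $\tilde\sigma_n$ is centered by the very definition of the displacement $D_{z_n}$, so that all first-order derivatives of $\chi_{\tilde\sigma_n}$ at the origin vanish. Second, $M_3(\tilde\sigma_n)$ is uniformly bounded in $n$: this is Lemma~\ref{lem:BindMoments-Truncation} with $\kappa=s=3$, or, when $s>3$, Lemma~\ref{lem:BindMoments-Truncation} with $\kappa=s$ combined with Lemma~\ref{lem:moments-monotone}.

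Concretely, I would write
\[
\chi_{\tilde\sigma_n}(w)=1-\tfrac14\,\hat w^\dagger \Lambda_m^\dagger \boldsymbol{\gamma}(\tilde\sigma_n)\Lambda_m\hat w+R_n(w),
\]
reading off the quadratic coefficient from~\eqref{eq:Char-NonWilliam} (for centered states) and using the derivative-moment bound~\eqref{eq:BoundDerivative-Moments} applied to $\tilde\sigma_n$, together with the uniform bound on $M_3(\tilde\sigma_n)$, to conclude $|R_n(w)|\leq C\|w\|^3$ uniformly in $n$ on a sufficiently small ball around the origin. Since $\chi_{\tilde\sigma_n}(-w)=\overline{\chi_{\tilde\sigma_n}(w)}$, I would then compute $|\chi_{\tilde\sigma_n}(w)|^2=\chi_{\tilde\sigma_n}(w)\,\chi_{\tilde\sigma_n}(-w)$ by multiplying the two expansions: the quadratic term is invariant under $w\mapsto -w$ and doubles, the imaginary cross-terms cancel, and all remainders stay $\mathcal O(\|w\|^3)$, giving
\[
|\chi_{\tilde\sigma_n}(w)|^2\leq 1-\tfrac12\,\hat w^\dagger \Lambda_m^\dagger \boldsymbol{\gamma}(\tilde\sigma_n)\Lambda_m\hat w+C'\|w\|^3.
\]
Applying the scalar inequality $1+t\leq e^t$ and taking square roots yields an exponential upper bound on $|\chi_{\tilde\sigma_n}(w)|$.

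Raising this bound to the power $n-2m$, substituting $w=z/\sqrt n$, and splitting $(n-2m)/n=1-2m/n$ in the exponent produces three corrections with respect to the target $\chi_{\rho_G}(z)$. First, replacing $\boldsymbol{\gamma}(\tilde\sigma_n)$ by $\boldsymbol{\gamma}(\rho)$ costs at most $\mathcal O(\|\boldsymbol{\gamma}(\rho)-\boldsymbol{\gamma}(\tilde\sigma_n)\|\cdot\|z\|^2)=\mathcal O(n^{-(s-2)/2}\|z\|^2)$ by Lemma~\ref{OperatorNormCovarianceMatrix}, which for large $n$ is dominated by $\epsilon\theta_1\|z\|^2$. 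Second, the $-2m/n$ shift contributes $\tfrac{m}{2n}\hat z^\dagger \Lambda_m^\dagger\boldsymbol{\gamma}(\tilde\sigma_n)\Lambda_m\hat z$, which, using $\|z\|^2/n\leq\epsilon^2$ on the admissible domain and the uniform bound on $\boldsymbol{\gamma}(\tilde\sigma_n)$, is of order $m\epsilon^2$ and feeds the $2m\theta_0$ summand. Third, the cubic remainder $(n-2m)C'\|z\|^3/n^{3/2}$ is bounded by $C'\epsilon\|z\|^2$ via $\|z\|/\sqrt n\leq\epsilon$, again joining the $\epsilon\theta_1\|z\|^2$ term. For values of $\|z\|$ at which the Taylor estimate becomes too weak (i.e. $\|z\|/\sqrt n$ bounded away from $0$), the trivial bound $|\chi_{\tilde\sigma_n}|\leq 1$ together with choosing $\theta_1$ large enough (depending on $\epsilon$ and on the minimum eigenvalue of $\boldsymbol{\gamma}(\rho)$) so that $\epsilon\theta_1\|z\|^2$ swallows $-\tfrac14\hat z^\dagger \Lambda_m^\dagger \boldsymbol{\gamma}(\rho)\Lambda_m\hat z$ absorbs the gap; this uses that $|\chi_{\tilde\sigma_n}(w)|=|\chi_{\sigma_n}(w)|$ (displacement does not affect the modulus) and~\eqref{CharSupOrigin}.

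The main technical obstacle, though largely bookkeeping, is keeping every implicit constant independent of $n$. This is precisely what Lemma~\ref{lem:BindMoments-Truncation} (uniform boundedness of $M_3(\tilde\sigma_n)$) and Lemma~\ref{OperatorNormCovarianceMatrix} (convergence of the covariance matrices) supply, while the rescaling $w=z/\sqrt n$ converts cubic Taylor errors into quadratic ones exactly on the admissible scale $\|z\|\leq\epsilon\sqrt n$.
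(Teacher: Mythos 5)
Your proof is correct and follows essentially the same route as the paper's: a second-order Taylor expansion of the characteristic function around the origin with cubic remainder controlled by the uniformly bounded $M_3(\tilde\sigma_n)$ (Lemma~\ref{lem:BindMoments-Truncation}), exponentiation via $1+t\leq e^t$, the rescaling on $\|z\|\leq\epsilon\sqrt n$ converting the cubic error into the $\epsilon\theta_1\|z\|^2$ term, the $-2m$ shift in the exponent absorbed into $\theta_0$, and Lemma~\ref{OperatorNormCovarianceMatrix} to pass from $\boldsymbol{\gamma}(\tilde\sigma_n)$ to $\boldsymbol{\gamma}(\rho)$. Your final fallback paragraph is superfluous, and mildly hazardous: since the derivative--moment bound~\eqref{eq:BoundDerivative-Moments} holds on $B(0,\epsilon)$ for any fixed $\epsilon$, the cubic-remainder estimate already covers the entire admissible region $\|z\|\leq\epsilon\sqrt n$ (this is exactly how the paper argues), whereas choosing $\theta_1$ of order $1/\epsilon$ in an outer region would make $\epsilon\theta_1$ bounded away from zero and thereby defeat the lemma's use in Claim~\ref{claim:III}, where one must later pick $\epsilon$ with $\nu_{\min}-4\epsilon\theta_1>0$.
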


\begin{proof}
	Considering the Taylor expansion of $|\chi_{\tilde \sigma_n} (z)|^2$ around the origin and
	using the fact that $\tilde \sigma_n$ is centered, there is a constant $\theta'_1>0$ such that for $\|z\|\leq \epsilon$ we have
	\begin{align*}
		\big|\chi_{\tilde \sigma_n} (z)\big|^2 &\leq 1- \frac 12 \hat{z}^\dagger \Lambda_m^\dagger \boldsymbol{\gamma}(\tilde \sigma_n) \Lambda_m  \hat{z} +  2\theta'_1 M_3(\tilde \sigma_n) \|z\|^3\\
		&\leq \exp\Big(- \frac 12 \hat{z}^\dagger \Lambda_m^\dagger \boldsymbol{\gamma}(\tilde \sigma_n) \Lambda_m  \hat{z} +  2\theta'_1 M_3(\tilde \sigma_n) \|z\|^3\Big),
	\end{align*}
	where $\hat{z} = (z_1, \bar{z}_1, \cdots, z_m, \bar{z}_m)^\top$. This implies
	\begin{align}\label{eq:chi-tilde-bound-000}
		\big|\chi_{\tilde \sigma_n} \big(z/\sqrt n\big)\big|
		\leq \exp\Big(- \frac {1}{4n} \hat{z}^\dagger \Lambda_m^\dagger \boldsymbol{\gamma}(\tilde \sigma_n) \Lambda_m  \hat{z} +  \theta'_1 M_3(\tilde \sigma_n) \frac{1}{n^{3/2}} \|z\|^3\Big), \qquad \quad \|z\|\leq \epsilon\sqrt n.
	\end{align}
	We note that there is a constant $b>0$ such that 
	$$\hat{z}^\dagger \Lambda_m^\dagger \boldsymbol{\gamma}(\tilde \sigma_n) \Lambda_m  \hat{z} \leq b M_2(\tilde \sigma_m) \|z\|^2\leq b M_3(\tilde \sigma_m)^{2/3} \|z\|^2,$$ 
	where the second inequality follows from Lemma~\ref{lem:moments-monotone}. Therefore, 
	\begin{align*}
		&\frac {1}{4n} \hat{z}^\dagger \Lambda_m^\dagger \boldsymbol{\gamma}(\tilde \sigma_n) \Lambda_m  \hat{z} -  \theta'_1 M_3(\tilde \sigma_n) \frac{1}{n^{3/2}} \|z\|^3\\
		&\qquad\quad  \leq  \frac {1}{4n} b M_3(\tilde \sigma_m)^{2/3} \|z\|^2-  \theta'_1 M_3(\tilde \sigma_n) \frac{1}{n^{3/2}} \|z\|^3\\
		&\qquad \quad \leq  \max_{x\geq 0} \frac {1}{4} b x^2-  \theta'_1 x^3\\
		&\qquad \quad  =: \theta_0.
	\end{align*}
	Using the definition of $\theta_0$ and~\eqref{eq:chi-tilde-bound-000}, for any $\|z\|\leq \epsilon\sqrt n$ we obtain
	\begin{align*}
		\big|\chi_{\tilde \sigma_n} \big(z/\sqrt n\big)\big|^{n-2m}
		& \leq \exp\Big( 2m\theta_0  - \frac {1}{4} \hat{z}^\dagger \Lambda_m^\dagger \boldsymbol{\gamma}(\tilde \sigma_n) \Lambda_m  \hat{z} +  \theta'_1 M_3(\tilde \sigma_n) \frac{1}{n^{1/2}} \|z\|^3\Big)\\
		& \leq \exp\Big( 2m\theta_0  - \frac {1}{4} \hat{z}^\dagger \Lambda_m^\dagger \boldsymbol{\gamma}(\tilde \sigma_n) \Lambda_m  \hat{z} +  \theta'_1 M_3(\tilde \sigma_n)\epsilon \|z\|^2\Big).
	\end{align*}  
	Finally, by Lemma~\ref{lem:BindMoments-Truncation} for $s=3$, $M_3(\tilde \sigma_n)$ is bounded by a constant independent of $n$, and by Lemma~\ref{OperatorNormCovarianceMatrix} for sufficiently large $n$, we have $\|\boldsymbol{\gamma}(\tilde \sigma_n) - \boldsymbol{\gamma}(\rho)   \|\leq \epsilon$. Therefore, modifying the constant $\theta'_1$, there is some $\theta_1>0$ such that 
	\begin{align*}
		\big|\chi_{\tilde \sigma_n} \big(z/\sqrt n\big)\big|^{n-2m}
		& \leq \exp\Big( 2m\theta_0  - \frac {1}{4} \hat{z}^\dagger \Lambda_m^\dagger \boldsymbol{\gamma}(\rho) \Lambda_m  \hat{z} +  \theta_1 \epsilon \|z\|^2\Big)\\
		& = \exp\Big( 2m\theta_0  +  \theta_1 \epsilon \|z\|^2\Big)\chi_{\rho_G}(z),
	\end{align*} 
	as desired. 
\end{proof}


\section{An upper bound on the quantum relative entropy}\label{upperbound}

This section is devoted to establishing a general upper bound on the relative entropy between two quantum states, with one being Gaussian, in terms of the Hilbert--Schmidt norm. This upper bound will be instrumental in the proof of Theorem~\ref{MainTheorem}, as it simplifies the task of estimating the relative entropy \( D\left(\rho^{\boxplus n} \big\| \rho_{G}\right) \) to that of a 2-norm. Utilizing the quantum Plancherel identity, this can be reformulated in terms of the characteristic functions and their derivatives.

\begin{theorem}\label{boundEntropy-Char}
	Let $\rho$ be an $m$-mode quantum state with finite moments of order $m+3$. Let $\tau = \prod_{j=1}^m \big(1-e^{-\beta_j}\big) e^{-\beta_j \ac_j^\dagger \ac_j}$ be an $m$-mode thermal state, where \(0<\beta_j<+\infty\). Let \(E(z)=E(z, \bar z)\) be an odd polynomial in \(z\) and \(\bar{z}\) such that \(E(-z) = -E(z) = \overline{E(z)}\). For \(\alpha \in \mathbb{R}\), we define the operator \(\tau_\alpha\) by \(\chi_{{\tau}_\alpha}(z) := \chi_{\tau}(z) \big( 1 + \alpha E(z)\big)\). Then, we have the following bound on the relative entropy:
	\[
	D(\rho \| \tau) \leq C_{\tau, E} \left( \alpha^2 + \left\|(\rho - {\tau_\alpha})(N_m + m)^{(m+3)/2}\right\|_2 \right),
	\]
	where \(C_{\tau, E} > 0\) is a constant depending on the polynomial \(E(z)\) and the thermal state \(\tau\), specifically on the parameters \(\beta = (\beta_1, \dots, \beta_m)\). The same result holds if \(\tau\) is replaced by any centered Gaussian state for which \(\boldsymbol{\gamma}(\tau) + i \Omega_m\) is full-rank.
\end{theorem}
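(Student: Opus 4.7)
The plan is to first reduce to the thermal case. By Williamson's theorem there is a Gaussian unitary $U$ such that $U\tau U^\dagger = \bigotimes_j \tau_j$ is a product of single-mode thermal states of the form stated in the theorem, and relative entropy is invariant under unitary conjugation. The odd polynomial $E$ and the Fock-basis weight $(N_m+m)^{(m+3)/2}$ transform compatibly (the former under the symplectic action of $U$ on phase space, the latter up to a polynomial comparison between number operators written in different symplectic bases), so it suffices to treat $\tau$ thermal.

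Starting from the inequality $D(\rho\|\tau) \leq \tr\!\big((\rho-\tau)(\ln\rho - \ln\tau)\big)$, which follows from $D(\tau\|\rho)\geq 0$, combined with the integral representation $\ln\rho - \ln\tau = \int_0^\infty (\tau+s)^{-1}(\rho-\tau)(\rho+s)^{-1}\, ds$, I obtain the $\chi^2$-type upper bound
\[
D(\rho\|\tau) \;\leq\; \int_0^\infty \tr\!\left((\rho-\tau)\,\frac{1}{\tau+s}\,(\rho-\tau)\,\frac{1}{\rho+s}\right)\, ds.
\]
I would then substitute the splitting $\rho - \tau = (\rho - \tau_\alpha) + \alpha\tau_E$, where $\tau_E$ is the Hermitian traceless operator with characteristic function $\chi_\tau(z)\,E(z)$ (well-defined because the symmetries $E(-z) = -E(z) = \overline{E(z)}$ exactly ensure Hermiticity and vanishing trace). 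Expanding the bilinear form by the quadratic triangle inequality produces a main piece that is quadratic in $\rho - \tau_\alpha$ and a self-piece proportional to $\alpha^2\,\tau_E(\ldots)\tau_E$. The $\alpha^2$ self-piece is absorbed into an explicit constant $C_{\tau,E}$ depending only on $\beta$ and the coefficients of $E$: using \eqref{eq:characterisitc-derivative-0}--\eqref{eq:characterisitc-derivative-1}, the relation $\chi_\tau(z) E(z) = \chi_{P(\ac,\ac^\dagger)\tau}(z)$ with an explicit polynomial $P$ in the ladder operators allows the self-piece to be reduced to a finite Fock-basis computation on the product thermal $\tau$.

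For the main piece, I would apply Cauchy--Schwarz in order to factor out $\|(\rho-\tau_\alpha)(N_m+m)^{(m+3)/2}\|_2$ and bound the complementary factor by a constant $C_{\tau,E}$. The enabling fact is that $(N_m+m)^{-(m+3)/2}$ is of Hilbert--Schmidt class on $\mathcal H_m$: the $|k|=K$ Fock subspace has degeneracy $\binom{K+m-1}{m-1}\sim K^{m-1}$, so the series $\sum_{k\in\mathbb Z_+^m}(|k|+m)^{-(m+3)}$ converges and the exponent $(m+3)/2$ sits at (or past) the HS-critical threshold. The main obstacle is reconciling the exponential growth of $\tau^{-1}$ in $N_m$ (implicit in $(\tau+s)^{-1}$ as $s\to 0^+$) with the purely polynomial weight $(N_m+m)^{(m+3)/2}$; the resolution rests on using the paired resolvent $(\rho+s)^{-1}$ to mollify the small-$s$ singularity and on exploiting the HS-criticality of the exponent $(m+3)/2$ so that, after Cauchy--Schwarz, the $\tau$-dependent factor collapses into $C_{\tau,E}$. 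Combining this with Step 2 yields the claimed bound, and the final sentence (general centered Gaussian $\tau$) follows transparently from the reduction in Step 1.
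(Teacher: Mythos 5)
Your proposal has a genuine gap at its very first step. The inequality $D(\rho\|\tau)\le\tr\big((\rho-\tau)(\ln\rho-\ln\tau)\big)$ is nothing but $D(\rho\|\tau)+D(\tau\|\rho)$, and $D(\tau\|\rho)=+\infty$ whenever $\rho$ is not faithful, since the thermal $\tau$ is full-rank. States satisfying the hypotheses of the theorem include finite-rank states and even Fock states (all moments finite), for which your upper bound is identically $+\infty$ while the right-hand side of the theorem is finite; the same defect makes your subsequent integral representation, which contains the resolvent $(\rho+s)^{-1}$, uncontrollable — the hypotheses give no lower bound whatsoever on the spectrum of $\rho$. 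The paper avoids ever inverting $\rho$: Lemma~\ref{PreliminaryLemma} applies the scalar inequality $t\ln t\le (t-1)+(t-1)^2$ entrywise in the Fock basis (valid down to $t=0$) to get $\bra{k}\rho(\ln\rho-\ln\tau)\ket{k}\le\bra{k}\big(\rho-\tau+(\rho-\tau)^2\tau^{-1}\big)\ket{k}$, in which only $\tau^{-1}$ appears.

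The second, independent gap is the exponential-versus-polynomial weight mismatch, which you flag as "the main obstacle" but do not resolve — and it cannot be resolved along your route. In the Fock basis $\tau^{-1}$ grows like $e^{\beta\cdot k}$, so the global $\chi^2$-type quantity $\tr\big((\rho-\tau)^2\tau^{-1}\big)$ (which your integral reduces to, morally, even ignoring $(\rho+s)^{-1}$) is not dominated by any polynomially weighted Hilbert--Schmidt norm; no Cauchy--Schwarz trades $e^{\beta\cdot k}$ for $(|k|+m)^{m+3}$. The paper's mechanism is structurally different: Lemma~\ref{TruncationLemma} keeps the quadratic term only on the window $\beta\cdot k\le t$, where $\tau^{-1}\le e^t/\nu_\beta$, pays the high-energy tail through a \emph{linear} term plus the entropy tail $C''_\beta(t+1)^m e^{-t}$, and then optimizes $e^t/(t+1)^m=1/\epsilon$ with $\epsilon=\big\|(\rho-\tau_\alpha)(N_m+m)^{(m+3)/2}\big\|_2$. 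This $t\approx\ln(1/\epsilon)$ trade is exactly why the final bound is \emph{linear} in the Hilbert--Schmidt norm, which a globally quadratic expansion like yours cannot reproduce. Two further points: the oddness of $E$ is used for more than Hermiticity and tracelessness of the correction — it yields the diagonal-matching identity \eqref{eq:diag-sigma-alpha-indep}, $\bra{k}\tau_\alpha\ket{k}=\bra{k}\tau\ket{k}$, which is what lets the linear tail term be rewritten in terms of $\rho-\tau_\alpha$ with no $O(\alpha)$ leakage (essential for the eventual $O(1/n)$ rate); and the exponent $(m+3)/2$ arises in the paper not from your Hilbert--Schmidt-criticality heuristic but from Cauchy--Schwarz against $\sum_k(|k|+m)^{-sm}<\infty$ with $s=1+\frac1m$, giving the weight $(N_m+m)^{\frac{s}{2}m+1}=(N_m+m)^{(m+3)/2}$. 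Your Step 1 (Williamson reduction) does match the paper's final paragraph, but the core of the argument needs to be replaced.
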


As an immediate consequence of this result, we derive an upper bound for the \emph{relative entropy of non-Gaussianity} of a state $\rho$, symbolized as $d_\mathcal{G}(\rho)$~\cite{genoni2008quantifying,marian2013relative}. This quantity, defined as the relative entropy between $\rho$ and the collection $\mathcal{G}$ of all Gaussian states, serves as a metric for non-Gaussian features of quantum states:
\[
d_\mathcal{G}(\rho):=\min_{\sigma\in\mathcal{G}} D(\rho\|\sigma).
\]
The measure $d_\mathcal{G}(\rho)$ has several advantageous properties, making it a valuable indicator of non-Gaussianity. For instance, it is faithful: $d_\mathcal{G}(\rho)\geq0$, reaching zero exclusively when $\rho$ is a Gaussian state. In fact, the minimum value in this definition is achieved by the Gaussian reference state $\rho_{G}$, so that
\[
d_\mathcal{G}(\rho)=D(\rho\|\rho_{G}).
\]
Recently, this non-Gaussianity measure has been used in analyzing the robustness of specific quantum state tomography algorithms designed for continuous-variable systems \cite{bittel2024optimal,mele2024learning}.

Setting $\tau = \rho_{G}$ and $\alpha = 0$ in Theorem \ref{boundEntropy-Char}, we derive the following
\begin{corollary}\label{relent-nonG}
		Let $\rho$ be an $m$-mode quantum state with finite moments of order $m+3$, and let $\rho_{G}$ be its Gaussification. Then, for some constant $C_{\rho_G} > 0$, we have
		\begin{equation}\label{non-Gaussianity}
			d_\mathcal{G}(\rho) \leq C_{\rho_G} \left\|(\rho - {\rho_{G}})(N_m + m)^{(m+3)/2}\right\|_2.
		\end{equation}
\end{corollary}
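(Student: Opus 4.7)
The plan is to apply Theorem~\ref{boundEntropy-Char} directly with the choices $\tau = \rho_G$ and $\alpha = 0$, exactly as suggested in the paragraph immediately preceding the corollary. Setting $\alpha = 0$ collapses the deformed operator $\tau_\alpha$ back to $\tau = \rho_G$, so the perturbation term $\alpha E(z)$ disappears and the odd polynomial $E$ becomes immaterial; one may simply fix $E \equiv 0$ (which is vacuously odd and satisfies $E(-z) = -E(z) = \overline{E(z)}$), yielding a constant $C_{\tau, E}$ depending only on $\rho_G$, which we rename $C_{\rho_G}$. Theorem~\ref{boundEntropy-Char} then reads
$$
D(\rho \,\|\, \rho_G) \,\leq\, C_{\rho_G}\, \big\|(\rho - \rho_G)(N_m+m)^{(m+3)/2}\big\|_2.
$$

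To turn this into a bound on $d_\mathcal{G}(\rho)$, I would invoke the variational identity $d_\mathcal{G}(\rho) = D(\rho \,\|\, \rho_G)$ recorded just before the corollary. This identity is a standard consequence of the Gaussian Gibbs structure: for any Gaussian $\sigma \in \mathcal{G}$ whose covariance and first moments match those of $\rho_G$, one has the Pythagorean-type decomposition $D(\rho \,\|\, \sigma) = D(\rho \,\|\, \rho_G) + D(\rho_G \,\|\, \sigma)$, so the minimum over $\mathcal{G}$ is attained at $\sigma = \rho_G$. Substituting $d_\mathcal{G}(\rho) = D(\rho \,\|\, \rho_G)$ into the displayed inequality yields exactly~\eqref{non-Gaussianity}.

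I do not anticipate any substantive obstacle: the corollary is essentially a one-line specialization of Theorem~\ref{boundEntropy-Char}. The only minor verification is that $\rho_G$ belongs to the class of Gaussian reference states allowed by the theorem, namely centered Gaussian states with $\boldsymbol{\gamma}(\rho_G) + i\Omega_m$ full-rank. If $\rho$ (and hence $\rho_G$) is not centered, one first conjugates by an appropriate displacement operator centering both; this preserves the relative entropy by unitary invariance, and alters the right-hand side only by a multiplicative constant that can be absorbed into $C_{\rho_G}$ via the operator inequality of Lemma~\ref{lem:moment-displaced} applied to $(N_m+m)^{(m+3)/2}$. Degenerate modes with symplectic eigenvalue $\nu_j = 1$ decouple as the pure vacuum $\ketbra{0}{0}$, as recalled in Section~\ref{subsec:Gaussian-states}, and can either be removed from the tensor decomposition before applying the theorem to the remaining nondegenerate modes, or accommodated through a routine continuity argument in the covariance matrix.
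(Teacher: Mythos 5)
Your proposal is correct and takes essentially the same route as the paper, whose entire derivation of Corollary~\ref{relent-nonG} is the one-line specialization of Theorem~\ref{boundEntropy-Char} with $\tau=\rho_G$ and $\alpha=0$ (so $E\equiv 0$ is immaterial) combined with the identity $d_\mathcal{G}(\rho)=D(\rho\|\rho_G)$ recorded just before the corollary; your additional care about non-centered $\rho_G$ (via displacement and Lemma~\ref{lem:moment-displaced}) and degenerate modes with $\nu_j=1$ goes slightly beyond what the paper spells out and is handled correctly. One small wording slip worth fixing: the Pythagorean decomposition $D(\rho\|\sigma)=D(\rho\|\rho_G)+D(\rho_G\|\sigma)$ holds for \emph{every} Gaussian $\sigma\in\mathcal{G}$, because $\ln\sigma$ is quadratic in the quadratures so that $\tr(\rho\ln\sigma)=\tr(\rho_G\ln\sigma)$ — restricting, as you wrote, to Gaussian $\sigma$ whose first and second moments match those of $\rho_G$ forces $\sigma=\rho_G$ and would make the variational argument vacuous.
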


To establish Theorem~\ref{boundEntropy-Char}, we start by proving two auxiliary lemmas. Initially, we assume that \(\tau\) is a thermal state, as specified in the theorem statement. Subsequently, we will address the proof in the general case.
\begin{lemma}\label{PreliminaryLemma}
	With the notation of Theorem~\ref{boundEntropy-Char}, for every Fock state $\ket{k}$, we have
	\begin{equation}\label{aux}
		\bra{k}\rho(\ln \rho-\ln \tau)\ket{k}\leq \bra{k}\big(\rho-\tau+(\rho-\tau)^2 \tau^{-1}\big)\ket{k}.
	\end{equation}
\end{lemma}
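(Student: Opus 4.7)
The key observation is that $\tau$ is diagonal in the Fock basis: since $\tau=\prod_j(1-e^{-\beta_j})e^{-\beta_j \bfa_j^\dagger\bfa_j}$, we have $\tau\ket{k}=t_k\ket{k}$ with $t_k=\prod_j(1-e^{-\beta_j})e^{-\beta_j k_j}>0$, and hence $\ln\tau$ and $\tau^{-1}$ are also diagonal with entries $\ln t_k$ and $1/t_k$ respectively. The plan is therefore to reduce the operator inequality to a scalar inequality applied eigenvalue-by-eigenvalue in the spectral decomposition of $\rho$.

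First, I would simplify the right-hand side. Because $\tau^{-1}$ is diagonal in the Fock basis, I can write
\[
\bra{k}(\rho-\tau)^2\tau^{-1}\ket{k}=\tfrac{1}{t_k}\bra{k}(\rho-\tau)^2\ket{k}.
\]
Expanding $(\rho-\tau)^2=\rho^2-\rho\tau-\tau\rho+\tau^2$ and using $\tau\ket{k}=t_k\ket{k}$, and then inserting the spectral decomposition $\rho=\sum_j\lambda_j\ketbra{v_j}{v_j}$, gives the identity
\[
\bra{k}(\rho-\tau)^2\ket{k}=\sum_j(\lambda_j-t_k)^2\,|\braket{k}{v_j}|^2.
\]
So the right-hand side of the lemma equals $\rho_{kk}-t_k+\tfrac{1}{t_k}\sum_j(\lambda_j-t_k)^2|\braket{k}{v_j}|^2$, where $\rho_{kk}=\bra{k}\rho\ket{k}=\sum_j\lambda_j|\braket{k}{v_j}|^2$.

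For the left-hand side, the same spectral decomposition yields
\[
\bra{k}\rho(\ln\rho-\ln\tau)\ket{k}=\sum_j\lambda_j\bigl(\ln\lambda_j-\ln t_k\bigr)|\braket{k}{v_j}|^2=\sum_j\lambda_j\ln\!\tfrac{\lambda_j}{t_k}\,|\braket{k}{v_j}|^2,
\]
where the cross-term with $\ln\tau$ collapses because $\ln\tau$ is diagonal in the Fock basis and $\bra{k}\ln\tau\ket{k}=\ln t_k$. Now I would invoke the elementary scalar inequality $x\ln(x/y)\le(x-y)+(x-y)^2/y$ valid for all $x\ge 0$ and $y>0$ (equivalent, after setting $t=x/y$, to $\ln t\le t-1$), and apply it termwise with $x=\lambda_j$, $y=t_k$. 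Summing against the probabilities $|\braket{k}{v_j}|^2$ and using $\sum_j|\braket{k}{v_j}|^2=1$ gives exactly $\rho_{kk}-t_k+\tfrac{1}{t_k}\sum_j(\lambda_j-t_k)^2|\braket{k}{v_j}|^2$, which matches the expression derived above for the right-hand side.

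The proof is essentially a clean bookkeeping exercise; the only subtlety is recognizing that the two factors of $(\rho-\tau)$ together with $\tau^{-1}$ reassemble, via the diagonality of $\tau$ in the Fock basis, into precisely the second-order variance term produced by the scalar inequality. I do not anticipate a genuine obstacle here: the positivity $t_k>0$ for all $k$ guaranteed by $\beta_j<+\infty$ ensures the scalar inequality applies termwise, and the case $\lambda_j=0$ is handled by the usual convention $0\ln 0=0$, under which both sides of the scalar inequality vanish.
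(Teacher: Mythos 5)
Your proposal is correct and follows essentially the same route as the paper's proof: both exploit the diagonality of the thermal state $\tau$ in the Fock basis, insert the spectral decomposition of $\rho$, apply the scalar inequality $t\ln t\le(t-1)+(t-1)^2$ (equivalently $x\ln(x/y)\le(x-y)+(x-y)^2/y$) termwise with $t=p_i/q_k$, and identify the resulting variance sum $\sum_i|\braket{v_i}{k}|^2(p_i-q_k)^2$ with $q_k\bra{k}(\rho-\tau)^2\tau^{-1}\ket{k}$. The only difference is organizational (you expand the right-hand side first, the paper bounds the left-hand side first), and your handling of the edge cases $t_k>0$ and $\lambda_j=0$ is sound.
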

\begin{proof}
	Expanding the thermal state $\tau$ in Theorem~\ref{boundEntropy-Char} in the Fock basis, we have 
	$$ \tau = \sum_{k} q_k \ketbra{k}{k},$$ 
	where $q_k = \prod_{j=1}^m (1 - e^{-\beta_j}) e^{-\beta_j k_j}$. Let $\rho=\sum_{i=0}^{\infty} p_i\ketbra{v_i}{v_i}$ be the spectral decomposition of $\rho$. We then have
	\begin{align*}
		\bra{k}\rho\ln \rho\ket{k} =\sum_i |\braket{v_i}{k} |^2 p_i\ln  p_i,\qquad \quad		\bra{k}\rho\ln \tau\ket{k}=\sum_i |\braket{v_i}{k}|^2 p_i\ln  q_k.
	\end{align*}
	Thus, applying the inequality $t\ln  t\leq (t-1)+(t-1)^2$, $t\geq0$, with $t=\frac{p_i}{q_k}$ yields
	\begin{align}\label{ineqq}
		\bra{k}\rho(\ln \rho-\ln \tau)\ket{k} \notag&=\sum_i |\braket{v_i}{k}|^2 p_i\ln \frac{p_i}{q_k}\nonumber\\
		\notag&\leq \sum_i |\braket{v_i}{k}|^2 \Big(p_i-q_k+(p_i-q_k)^2q_k^{-1}\Big)\\
		&=\bra{k}(\rho-\tau)\ket{k}+\sum_i |\braket{v_i}{k}|^2 (p_i-q_k)^2q_k^{-1}.
	\end{align}
Next, we compute
	\begin{align*}
		q_k\bra{k}(\rho-\tau)^2 \tau^{-1}\ket{k} \notag&= \bra k(\rho-\tau)^2\ket k\\
		&=\bra k\big(\rho^2+\tau^2-\rho\tau-\tau\rho\big)\ket k\\
		&=\sum_i |\braket{v_i}{k}|^2 p_i^2-2q_k\sum_i |\braket{v_i}{k}|^2 p_i+q_k^2\\
		&=\sum_i |\braket{v_i}{k}|^2 (p_i-q_k)^2.
	\end{align*}	
	Combining this equation with~\eqref{ineqq}, we obtain the desired inequality~\eqref{aux}.
\end{proof}

The following lemma can be viewed as a quantum analogue of~\cite[Lemma 2.2]{BCG}.

\begin{lemma}\label{TruncationLemma}
	With the notation of Theorem~\ref{boundEntropy-Char}, for every $t\geq 0$ we have
	\begin{align}\label{preliminary-ineq}
		D(\rho\|\tau)\leq \sum_{\beta\cdot k\leq t}\bra{k}(\rho-\tau)^2 \tau^{-1}\ket{k} +\eta_\beta \sum_{\beta\cdot k> t} \big|\bra{k}(\rho-\tau)(N_m+m)\ket{k}\big|-\sum_{\beta\cdot k> t}\bra{k}\tau\ln \tau\ket{k},
	\end{align}
	where $\beta\cdot k = \beta_1k_1+\cdots+\beta_m k_m$ and $\eta_\beta=\beta_{\max}-\sum_{j=1}^m \ln  (1-e^{-\beta_j}) + 1$ with $\beta_{\max} = \max_j \beta_j$. 
\end{lemma}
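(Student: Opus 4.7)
The plan is to evaluate $D(\rho\|\tau)=\sum_{k}\bra{k}\rho(\ln\rho-\ln\tau)\ket{k}$ by splitting the Fock-basis sum into the two regimes $\beta\cdot k\leq t$ and $\beta\cdot k>t$, using a different pointwise bound in each. In the low regime I would invoke Lemma~\ref{PreliminaryLemma} termwise, so that
\[
\sum_{\beta\cdot k\leq t}\bra{k}\rho(\ln\rho-\ln\tau)\ket{k}\leq \sum_{\beta\cdot k\leq t}\bra{k}(\rho-\tau)^2\tau^{-1}\ket{k}+\sum_{\beta\cdot k\leq t}\bra{k}(\rho-\tau)\ket{k}.
\]
The first sum already matches the desired first term on the right-hand side of the claim; the second is a linear residual of indefinite sign that must be disposed of at the end.

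In the high regime, the key observation is that $\rho\ln\rho\leq 0$ as an operator, since the spectrum of $\rho$ lies in $[0,1]$; hence $\bra{k}\rho\ln\rho\ket{k}\leq 0$ and $\bra{k}\rho(\ln\rho-\ln\tau)\ket{k}\leq -\bra{k}\rho\ln\tau\ket{k}$. Since $\ln\tau=\sum_{j}\ln(1-e^{-\beta_j})\mathbb{I}-\sum_{j}\beta_j\ac_j^\dagger\ac_j$ is diagonal in the Fock basis, this evaluates to
\[
-\bra{k}\rho\ln\tau\ket{k}=\Bigl(-\sum_{j}\ln(1-e^{-\beta_j})+\beta\cdot k\Bigr)\bra{k}\rho\ket{k}.
\]
Writing $\bra{k}\rho\ket{k}=q_{k}+\bra{k}(\rho-\tau)\ket{k}$ separates this into the thermal contribution $-q_{k}\ln q_{k}=-\bra{k}\tau\ln\tau\ket{k}$ plus a linear correction in $\rho-\tau$. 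To convert the coefficient into a multiple of $|k|+m$, I would verify the elementary estimate $-\sum_{j}\ln(1-e^{-\beta_j})+\beta\cdot k\leq (\eta_{\beta}-1)(|k|+m)$, which follows because $-\ln(1-e^{-\beta_j})>0$, $\beta\cdot k\leq \beta_{\max}|k|$, and $|k|+m\geq 1$. Since $N_m+m$ is diagonal in the Fock basis, $(|k|+m)\bra{k}(\rho-\tau)\ket{k}=\bra{k}(\rho-\tau)(N_m+m)\ket{k}$; taking absolute values yields
\[
\sum_{\beta\cdot k>t}\bra{k}\rho(\ln\rho-\ln\tau)\ket{k}\leq -\sum_{\beta\cdot k>t}\bra{k}\tau\ln\tau\ket{k}+(\eta_{\beta}-1)\sum_{\beta\cdot k>t}\bigl|\bra{k}(\rho-\tau)(N_m+m)\ket{k}\bigr|.
\]

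To finish, I would dispose of the leftover $\sum_{\beta\cdot k\leq t}\bra{k}(\rho-\tau)\ket{k}$ using $\tr(\rho-\tau)=0$, which rewrites it as $-\sum_{\beta\cdot k>t}\bra{k}(\rho-\tau)\ket{k}$ and hence bounds it by $\sum_{\beta\cdot k>t}\bigl|\bra{k}(\rho-\tau)(N_m+m)\ket{k}\bigr|$ (using $|k|+m\geq 1$ once more). Adding this to the high-regime estimate contributes exactly an extra $+1$ to the coefficient, which lines up with the $+1$ baked into $\eta_{\beta}=\beta_{\max}-\sum_{j}\ln(1-e^{-\beta_j})+1$, giving precisely inequality~\eqref{preliminary-ineq}. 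The main subtlety is book-keeping: the indefinite-sign linear term from Lemma~\ref{PreliminaryLemma} can only be absorbed through the combination of $\tr(\rho-\tau)=0$ and the slack in $\eta_{\beta}$, and the whole argument really reduces to verifying that these constants align as prescribed.
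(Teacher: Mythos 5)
Your proposal is correct and follows essentially the same route as the paper's proof: the same split of the Fock sum at $\beta\cdot k\le t$, Lemma~\ref{PreliminaryLemma} in the low regime, dropping $\bra{k}\rho\ln\rho\ket{k}\le 0$ in the high regime, the diagonal evaluation of $\ln\tau$, and the use of $\tr(\rho-\tau)=0$ to recycle the linear residual. The only difference is cosmetic book-keeping: the paper folds the residual into the combined coefficient $\beta\cdot k-\ln\nu_\beta-1$ before bounding by $\eta_\beta(|k|+m)$, whereas you bound the two pieces separately as $(\eta_\beta-1)+1$ — the constants align identically.
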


\begin{proof}
	Applying Lemma \ref{PreliminaryLemma}, we obtain
	\begin{align*}
		D(\rho\|\tau)\notag&= \tr\big( \rho(\ln  \rho-\ln  \tau) \big)\nonumber\\
		& = \sum_{\beta\cdot k\leq t} \bra{k}\rho(\ln \rho-\ln \tau)\ket{k}+\sum_{\beta\cdot k> t}\bra{k}\rho(\ln \rho-\ln \tau)\ket{k}\\
		\notag&\leq\sum_{\beta\cdot k\leq t}\bra{k}\big(\rho-\tau+(\rho-\tau)^2 \tau^{-1}\big)\ket{k}-\sum_{\beta\cdot k> t}\bra{k}\rho\ln \tau\ket{k}\\
		\notag&=\sum_{\beta\cdot k\leq t}\bra{k}(\rho-\tau)^2 \tau^{-1}\ket{k}+\sum_{\beta\cdot k> t} \Big( \bra{k}(\tau-\rho)\ket{k}-\bra{k}\rho\ln \tau\ket{k}\Big),
	\end{align*}
	where in the last line we use $\tr(\rho-\tau)=0$. Using $\rho\ln  \tau = (\rho-\tau)\ln  \tau+ \tau\ln  \tau$ and $\ln  \tau = \sum_{j=1}^m \left(-\beta_j \bfa_j^\dagger\bfa_j + \ln (1-e^{-\beta_j})\right)$ and denoting 
	\begin{align}\label{eq:nu-beta}
		\nu_\beta:=\prod_{j=1}^m \big(1-e^{-\beta_j}\big),
	\end{align}
	we get
	\begin{align*}	
		D(\rho\|\tau) &\leq \sum_{\beta\cdot k\leq t}\bra{k}(\rho-\tau)^2 \tau^{-1}\ket{k}+\sum_{\beta\cdot k> t}\bra{k}(\rho-\tau) \ket{k}\big(\beta\cdot k -\ln  \nu_\beta -1\big)  -\sum_{\beta\cdot k> t}\bra{k}\tau\ln \tau\ket{k}\\
		&\leq \sum_{\beta\cdot k\leq t} \bra{k}(\rho-\tau)^2 \tau^{-1}\ket{k}+\eta_\beta\sum_{\beta\cdot k> t} (|k|+m) |\bra{k}(\rho-\tau) \ket{k}|  -\sum_{\beta\cdot k> t}\bra{k}\tau\ln \tau\ket{k}.
	\end{align*}
	This inequality is equivalent to~\eqref{preliminary-ineq}.  	
\end{proof}

\medskip

We are now ready to present the proof of Theorem~\ref{boundEntropy-Char}.
\begin{proof}[Proof of Theorem~\ref{boundEntropy-Char}]
	We first note that $\tau_\alpha$ is given by
	\begin{align}\label{eq:sigma-alpha-chi-rep}
		\tau_\alpha = \frac{1}{\pi^m} \int_{\mathbb{C}^m} \chi_\tau(z)(1+\alpha E(z)) D_{-z} \dd^{2m} z,
	\end{align}
	and it is well-defined since $\chi_{\tau}(z)$ is a Gaussian function and $E(z)$ is a polynomial. To obtain a more explicit expression for $\tau_\alpha$, from~\eqref{eq:characterisitc-derivative-0} and~\eqref{eq:characterisitc-derivative-1} we find that 
	$$ z_j\chi_T(z) = \chi_{T\bfa_j - \bfa_j T}(z), \qquad \quad \bar z_j\chi_T(z) = \chi_{T\bfa_j^\dagger - \bfa_j^\dagger T}(z).$$
	Next, using the fact that $\tau$ is thermal state, it is not hard to verify that $\tau \bfa_j = e^{\beta_j} \bfa_j\tau$ and $\tau \bfa_j^\dagger  = e^{-\beta_j}\bfa_j^\dagger \tau$. Putting these together and using the commutation relations~\eqref{eq:CCR-AnnCre} we conclude that $\tau_\alpha$ takes the form 
	\begin{align}\label{eq:sigma-alpha-explicit-form}
		\tau_{\alpha} = \tau + \alpha \sum_{p, q} e_{p, q} \big(\bfa_1^\dagger\big)^{p_1} \bfa_1^{q_1} \cdots \big(\bfa_m^\dagger\big)^{p_m} \bfa_m^{q_m} \tau.
	\end{align} 
	Since $E(z)$ is an odd polynomial, $e_{p, q}\neq 0$ only if $\sum_j (p_j+q_j)$ is an odd number not exceeding the total degree of $E(z)$. Note that to write down the above expression we in particular use $\bfa_j \bfa_j^\dagger = \bfa_j^\dagger \bfa_j+1$, yet this would not increase the degrees of the terms on the right hand side nor it changes their parity. 
	This representation of $\tau_\alpha$ implies that the diagonal entries of $\tau_\alpha$ with respect to the Fock basis are independent of $\alpha$:
	\begin{align}\label{eq:diag-sigma-alpha-indep}
		\bra k \tau_{\alpha}\ket k = \bra k \tau\ket k.
	\end{align}
	Also, since  $E(-z)=\overline{E(z)}$, equation~\eqref{eq:sigma-alpha-chi-rep} shows that $\tau_\alpha$ is self-adjoint.

	Applying Lemma \ref{TruncationLemma} we need  to bound each term on the right hand side of~\eqref{preliminary-ineq}.
	
	\bigskip
	\noindent
	\underline{\emph{First term on the right hand side of~\eqref{preliminary-ineq}:}} 
	We start with
	\begin{align*}
		\sum_{\beta\cdot k\leq t}\bra{k}(\rho-\tau)^2\tau^{-1}\ket{k}
		&=\sum_{\beta\cdot k\leq t}\bra{k}\tau^{-1/2}(\rho-\tau_\alpha+\tau_\alpha-\tau)^2\tau^{-1/2}\ket{k}\\
		&\leq 2\sum_{\beta\cdot k\leq t}\bra{k}\tau^{-\frac{1}{2}}(\rho-{\tau}_\alpha)^2\tau^{-\frac{1}{2}}\ket{k}+2\sum_{\beta\cdot k\leq t}\bra{k}\tau^{-\frac{1}{2}}({\tau_\alpha}-\tau)^2\tau^{-\frac{1}{2}}\ket{k},
	\end{align*}
	where in the preceding inequality we use $(A+B)^2\leq 2(A^2+B^2)$. To bound the first term, using the fact that 
	$$x\mapsto \frac {1}{(x+m)^m}e^x,$$ 
	is monotone increasing on $(0, +\infty)$, we compute
	\begin{align*}
		\sum_{\beta\cdot k\leq t}\bra{k}(\rho-{\tau_\alpha})^2\tau^{-1}\ket{k}
		&\leq \frac{1}{\nu_\beta}\sum_{\beta\cdot k\leq t}e^{\beta\cdot k}\bra{k}(\rho-{\tau_\alpha})^2\ket{k}\\
		&\leq \frac{e^{t}}{\nu_\beta(t + m)^m}  \sum_{\beta\cdot k\leq t} (\beta\cdot k+m)^m\bra{k}(\rho-{\tau_\alpha})^2\ket{k}\\
		&\leq   \frac{(\beta_{\max}+1)e^{t}}{\nu_\beta(t + m)^m}  \sum_{\beta\cdot k\leq t} (|k|+m)^m\bra{k}(\rho-{\tau_\alpha})^2\ket{k}\\
		&\leq  \frac{(\beta_{\max}+1)e^{t}}{\nu_\beta(t + m)^m}   \sum_{k} (|k|+m)^m\bra{k}(\rho-{\tau_\alpha})^2\ket{k}\\
		& =  \frac{(\beta_{\max}+1)e^{t}}{\nu_\beta(t + 1)^m} \tr\big(  (\rho-\tau_\alpha)^2 (N_m+m)^{m}  \big).
	\end{align*}
We continue by analyzing the second term, $\sum_{\beta\cdot k\leq t}\bra{k}\tau^{-\frac{1}{2}}({\tau_\alpha}-\tau)^2\tau^{-\frac{1}{2}}\ket{k}$. To this end, we let $\Delta:=\frac{1}{\alpha}({\tau_\alpha}-\tau)$. Then, invoking~\eqref{eq:sigma-alpha-explicit-form} we have
	\begin{align*}
		\Delta =  \sum_{p, q} e_{p, q} \big(\bfa_1^\dagger\big)^{p_1} \bfa_1^{q_1} \cdots \big(\bfa_m^\dagger\big)^{p_m} \bfa_m^{q_m}  \tau,
	\end{align*} 
	and hence
	\begin{align*}
		\Delta^2\tau^{-1}=\sum e_{p, q}e_{p', q'} \big(\bfa_1^\dagger\big)^{p_1} \bfa_1^{q_1} \cdots \big(\bfa_m^\dagger\big)^{p_m} \bfa_m^{q_m}  \tau \big(\bfa_1^\dagger\big)^{p'_1} \bfa_1^{q'_1} \cdots \big(\bfa_m^\dagger\big)^{p'_m} \bfa_m^{q'_m} .
	\end{align*}
	This gives
	\begin{align*}
		\sum_{\beta\cdot k\leq t}\bra{k}\tau^{-\frac{1}{2}}({\tau_\alpha}-\tau)^2\tau^{-\frac{1}{2}}\ket{k} 	\notag&\leq\alpha^2\sum_{k}\bra{k}\tau^{-\frac{1}{2}}\Delta^2\tau^{-\frac{1}{2}}\ket{k}\\
		&=\alpha^2 \tr(\Delta^2\tau^{-1})\\
		&\leq \alpha^2 \sum \bigg| e_{p, q}e_{p', q'}   \tr\bigg( \big(\bfa_1^\dagger\big)^{p'_1} \bfa_1^{q'_1}  \big(\bfa_m^\dagger\big)^{p'_m} \bfa_m^{q'_m}  \big(\bfa_1^\dagger\big)^{p_1} \bfa_1^{q_1} \cdots \big(\bfa_m^\dagger\big)^{p_m} \bfa_m^{q_m}  \tau  \bigg)\bigg|\\
		&=M_{\beta, E}  \alpha^2,
	\end{align*}
	where $M_{\beta,E}>0$ is a constant depending only on $\beta$ and the polynomial $E(z)$. Note that $M_{\beta, E}$ is finite since $\tau$ is Gaussian and all of whose moments are finite. 	We conclude that 
	\begin{align}\label{eq:prop1-first-term}
		&\sum_{\beta\cdot k\leq t}\bra{k}(\rho-\tau)^2\tau^{-1}\ket{k}\nonumber\\
		&\qquad \leq 2\sum_{\beta\cdot k\leq t}\bra{k}\tau^{-\frac{1}{2}}(\rho-{\tau}_\alpha)^2\tau^{-\frac{1}{2}}\ket{k}+2\sum_{\beta\cdot k\leq t}\bra{k}\tau^{-\frac{1}{2}}({\tau_\alpha}-\tau)^2\tau^{-\frac{1}{2}}\ket{k}\nonumber\\
		&\qquad \leq  2M_{\beta, E}  \alpha^2+  C'_{\beta}\frac{e^{t}}{(k+1)^m} \tr\big(  (\rho-\tau_\alpha)^2 (N_m+m)^{m}  \big),
	\end{align}
	where 
	$$C'_{\beta} =  2\frac{\beta_{\max}+1}{\nu_\beta}.$$
	
	\bigskip
	\noindent
	\underline{\emph{Second term on the right hand side of~\eqref{preliminary-ineq}:}}  For $s=1+\frac 1m$ by~\eqref{eq:diag-sigma-alpha-indep} and the Cauchy--Schwarz inequality we have
	\begin{align*}
		&\sum_{\beta\cdot k> t}\big|\bra{k}(\rho-\tau)(N_m+m)\ket{k}\big|\\
		&\qquad \leq \sum_{ k}\big|\bra{k}(\rho-\tau_\alpha)(N_m+m)\ket{k}\big|\\
		&\qquad = \sum_{k}\frac{1}{(|k|+m)^{\frac{s}{2}m}}\big|\bra{k}(\rho-{\tau_\alpha})(N_m+m)^{\frac {s}{2}m +1}\ket{k}\big|\\
		&\qquad \leq\left(\sum_{k}\frac{1}{(|k|+m)^{sm}}\right)^{\frac{1}{2}}\left(\sum_{k} \Big|\bra k(\rho-{\tau_\alpha})(N_m+m)^{\frac {s}{2}m +1}\ket{k}\Big|^2\right)^{\frac{1}{2}}.	
	\end{align*}
	Next, we note that 
	$$\big(|k|+m\big)^{m}=\big((k_1+1)+\cdots + (k_m+1)\big)^{m}\geq (k_1+1)\cdots (k_m+1).$$
	Therefore,
	\begin{align}\label{eq:prop1-second-term}
		\sum_{\beta\cdot k> t}\big|\bra{k}(\rho-\tau)N_m\ket{k}\big|
		\leq \zeta^{m/2}  \Big\|(\rho-{\tau_\alpha})(N_m+m)^{\frac {s}{2}m +1}\Big\|_2,
	\end{align}
	where $\zeta = \sum_{k=1}^\infty \frac{1}{k^s}.$

	\bigskip
	\noindent
	\underline{\emph{Third term on the right hand side of~\eqref{preliminary-ineq}:}}	For the last term we have
	\begin{align}\label{eq:prop1-third-term}
		-\sum_{\beta\cdot k> t}\bra{k}\tau\ln \tau\ket{k}
		&=\nu_\beta\sum_{\beta\cdot k> t} (\beta\cdot k -\ln  \nu_\beta)  e^{-\beta\cdot k }\nonumber\\
		&\leq \nu_\beta   \eta_\beta  \sum_{\beta\cdot k> t} (|k|+m)e^{-\beta\cdot k }\nonumber\\
		& = m\nu_\beta   \eta_\beta  \sum_{\beta\cdot k> t} (k_1+1)e^{-\beta\cdot k }\nonumber\\
		& \leq C''_{\beta} (t+1)^me^{-t},
	\end{align}
	where 
	$$C''_{\beta} = \frac{2^m m\eta_\beta (\beta_1+1)^2\cdots (\beta_m+1)^2}{\nu_\beta},$$
	and the last inequality is proven in Appendix~\ref{app:f-g-estimate}.
	
	\bigskip	
	
	Now, combining \eqref{eq:prop1-first-term}--\eqref{eq:prop1-third-term} yields
	\begin{align*}
		D(\rho\|\tau)&\leq  2M_{\beta, E}  \alpha^2+  C'_{\beta}\frac{e^{t}}{(t+1)^m} \tr\big(  (\rho-\tau_\alpha)^2 (N_m+m)^{m}  \big) \nonumber\\
		&\quad +  \eta_\beta \zeta^{m/2} \Big\|(\rho-{\tau_\alpha})(N_m+m)^{\frac {s}{2}m +1}\Big\|_2 +C''_{\beta} (t+1)^m e^{-t}.
	\end{align*}
	Note that $(N_m+m)^{m}\leq (N_m+m)^{m+3}$. Therefore, 
	\begin{align*}
		D(\rho\|\tau)&\leq  2M_{\beta, E}  \alpha^2+  C'_{\beta}\frac{e^{t}}{(t+1)^m} \Big\|  (\rho-\tau_\alpha) (N_m+m)^{(m+3)/2}  \Big\|^2 \\
		&\quad +  \eta_\beta \zeta^{m/2} \Big\|(\rho-{\tau_\alpha})(N_m+m)^{(m+3)/2}\Big\|_2 +C''_{\beta} (t+1)^m e^{-t}.
	\end{align*}
	Let $\epsilon=\Big\|(\rho-{\tau_\alpha})(N_m+m)^{(m+3)/2}\Big\|_2$. If $\epsilon=0$, take the limit $t\to +\infty$ in the above inequality to get $D(\rho\| \tau) \leq 2M_{\beta, E}\alpha^2$. 
	If $0<\epsilon< 1$, there exists $t> 0$ such that
	$$
	\frac{e^{ t}}{(t+1)^m}= \frac{1}{\epsilon}.
	$$
	Then, we find that
	$$D(\rho\| \tau) \leq \max\big\{2M_{\beta, E},\, C'_\beta+ \eta_\beta \zeta^{m/2} + C''_\beta\big\} \Big(\alpha^2 + \Big\|(\rho-{\tau_\alpha})(N_m+m)^{(m+3)/2}\Big\|_2 \Big).$$
	If $\epsilon\geq 1$, we apply Lemma~\ref{TruncationLemma} for $t=0$ to get
	\begin{align*}
		D(\rho\|\tau)&\leq \bra{0}(\rho-\tau)^2 \tau^{-1}\ket{0} +\eta_\beta \sum_{k\neq 0} \big|\bra{k}(\rho-\tau)(N_m+m)\ket{k}\big|-\sum_{k\neq 0}\bra{k}\tau\ln \tau\ket{k}\\
		& \leq \nu_\beta + \frac{1}{\nu_\beta} + \eta_{\beta}\zeta^{m/2} \epsilon + S(\tau)\\
		&\leq \Big(\nu_\beta + \frac{1}{\nu_\beta} + \eta_{\beta}\zeta^{m/2}  + S(\tau)\Big)(\alpha^2 + \epsilon),
	\end{align*}
	where $S(\tau) = -\tr(\tau \ln  \tau)$ is the von Neumann entropy of $\tau$ which depends only on $\beta$. 
	All in all, for any value of $\epsilon\geq 0$, the desired inequality holds for 
	$$C_{\beta, E} = \max\big\{2M_{\beta, E},\, C'_\beta+ \eta_\beta \zeta^{m/2} + C''_\beta,\, \nu_\beta + \frac{1}{\nu_\beta} + \eta_{\beta}\zeta^{m/2}  + S(\tau)\big\},$$ 
	
	To prove the theorem for the case where $\tau$ is not necessarily thermal, we can simply replace $\rho$ and  $\tau$ with $U\rho U^\dagger$ and $U \tau U^\dagger$ where $U$ is a Gaussian unitary that turns $\tau$ into its Williamson's form. Since $\boldsymbol{\gamma}(\tau) +i\Omega_m$ is full-rank, the thermal state $U\tau U^\dagger$ satisfies the assumption of the first part of the proposition (i.e., $0<\beta_j<+\infty$). Also, $\tau_\alpha$ should be replaced with $U \tau_\alpha U^\dagger$, and then the polynomial $E(z)$ should be modified as well. 
	We note that since $\tau$ is centered, the Gaussian unitary $U$ does not change the first moment of $\tau$~\cite{Serafini}. In this case, using~\eqref{eq:sigma-alpha-chi-rep} or~\eqref{eq:sigma-alpha-explicit-form}, it can be verified that the polynomial $E(z)$ still satisfies the assumptions of the theorem.
\end{proof}


\section{Proof of Theorem \ref{MainTheoremTrace}}\label{Sec:Proof-Handle}

Since the case $m=1$ is already proven in~\cite{beigi2023towards}, we assume $m \geq 2$. By applying a suitable Gaussian unitary and using~\eqref{eq:Conv-Unit}, we may, without loss of generality, assume that $\rho$ is in the Williamson form with a diagonal covariance matrix as in~\eqref{eq:Moments-Williamson}. Let $\nu_{\min}$ and $\nu_{\max}$ denote the minimum and maximum eigenvalues of $\boldsymbol{\gamma}(\rho)$, respectively. As discussed in Subsection~\ref{subsec:Gaussian-states}, if $\nu_j=1$ for some $j$, then the $j$-th mode of $\rho$ is in the vacuum state and decoupled from the other modes. Since the $n$-fold convolution of the vacuum remains the vacuum, this mode does not contribute to the trace distance $\big\| \rho^{\boxplus n} - \rho_G \big\|_1$. Consequently, we can disregard the $j$-th mode in the proof of Theorem~\ref{MainTheoremTrace}. Thus, we assume in the following that the minimum eigenvalue of the covariance matrix is strictly greater than one:
\begin{align}\label{eq:assumption-nu-min-1}
\nu_{\min}>1.
\end{align}

Using the notation developed in Section~\ref{Sec:State-Truncation} and applying the triangle inequality, by Lemma~\ref{Aux1} we have
\begin{align}
	\big\| \rho^{\boxplus n} -  \rho_G \big\|_1 &\leq \big\| \rho^{\boxplus n} -  \sigma_n^{\boxplus n} \big\|_1 + \big\|  \sigma_n^{\boxplus n} - \tilde \sigma_{n,G} \big\|_1 + \big\| \tilde \sigma_{n,G}-\rho_G  \big\|_1 \nonumber \\
	&= \mathcal{O}\Big(\frac{1}{\sqrt{n}}\Big) +  \big\|  \sigma_n^{\boxplus n} - \tilde \sigma_{n,G} \big\|_1 + \big\| \rho_G - \tilde \sigma_{n,G} \big\|_1.\label{TruncationBound}
\end{align}
On the other hand,
\begin{align*}
	\big\|  \sigma_n^{\boxplus n} - \tilde \sigma_{n,G} \big\|_1 &\leq \big\|  \sigma_n^{\boxplus n} - D_{\sqrt n z_n}^\dagger \tilde \sigma_{n,G} D_{\sqrt n z_n}\big \|_1 +\big \| \tilde \sigma_{n,G} - D_{\sqrt n z_n}^\dagger \tilde \sigma_{n,G} D_{\sqrt n z_n} \big\|_1 \\
	&= \big\|  \tilde{\sigma}_n^{\boxplus n} - \tilde \sigma_{n,G} \big\|_1 + \big\| \tilde \sigma_{n,G} - D_{\sqrt n z_n}^\dagger \tilde \sigma_{n,G} D_{\sqrt n z_n} \big\|_1,
\end{align*}
and hence
\begin{equation}\label{FirstTriangle}
	\big\| \rho^{\boxplus n} -  \rho_G \big\|_1 \leq  \big\|  \tilde{\sigma}_n^{\boxplus n} - \tilde \sigma_{n,G} \big\|_1 + \big\| \tilde \sigma_{n,G} - D_{\sqrt n z_n}^\dagger \tilde \sigma_{n,G} D_{\sqrt n z_n} \big\|_1 + \big \| \rho_G - \tilde \sigma_{n,G} \big\|_1+\mathcal{O}\Big(\frac{1}{\sqrt{n}}\Big).
\end{equation}
As a result, the proof of the theorem amounts to obtaining the desired $\mathcal{O}\big(\frac{1}{\sqrt{n}}\big)$ upper bound for the remaining three terms on the right hand side of~\eqref{FirstTriangle}.

\begin{figure}[t!]
\label{fig:thm1}
\begin{center}
\begin{tikzpicture}
  [transition/.style={rectangle, rounded corners,draw=black!50,fill=black!20,
                      inner sep=3pt,minimum size=5mm, align=center}, point/.style={circle,inner sep=0pt,minimum size=0pt,fill=black}]
  \node[transition] (theorem 1)  {\small Theorem \ref{MainTheoremTrace}};
   \node[transition] (triangle)  [below=4mm of theorem 1]  {\small Triangle inequality yields three terms in \eqref{FirstTriangle}\\ \small that are handled in Claims \ref{claim:I} , \ref{claim:II} , \ref{claim:III} };
  \node[transition] (claim 2) [below=8mm of triangle] {\small Claim \ref{claim:II} };
   \node[transition] (proof 2) [below=8mm of claim 2.center] {\small Claim \ref{claim:I}, Lemma \ref{bound-FirstNorm}};
  \node[transition] (proof 1) [left=of proof 2] {\small Lemma \ref{lem:norm1-norm2}, Lemma \ref{OperatorNormCovarianceMatrix}};
    \node[transition] (proof 3) [right=of proof 2] {\small Lemma \ref{lem:norm1-norm2} \& splitting the\\ \small integral yield four terms in \eqref{splitInt} };

    \node[transition] (claim 3) [above=8mm of proof 3.center] {\small Claim \ref{claim:III}};
  \node[transition] (claim 1) [above=8mm of proof 1.center] {\small Claim \ref{claim:I}};
  \node[point] (dot L1)[above=4mm of claim 3]  {};
  \node[point] (dot R1)[above=4mm of claim 1]  {};

  \path [line width=0.5pt] (theorem 1) edge[->] (triangle);
  \path [line width=0.5pt] (triangle) edge[->] (claim 2);
  \draw [line width=0.5pt] (dot R1) -- (dot L1);
  \path [line width=0.5pt] (dot R1)   edge[->] (claim 1);
  \path [line width=0.5pt] (dot L1)   edge[->] (claim 3);
  \path [line width=0.5pt] (claim 1)   edge[->] (proof 1);
  \path [line width=0.5pt] (claim 2)   edge[->] (proof 2);
  \path [line width=0.5pt] (claim 3)   edge[->] (proof 3);

  \node[point] (dot ineq 1) [below=7mm of proof 3] {};
  \node[point] (dot ineq 2) [below=7mm of dot ineq 1] {};
  \node[point] (dot ineq 3) [below=7mm of dot ineq 2] {};
  \node[point] (dot ineq 4) [below=7mm of dot ineq 3] {};

  \node[transition] (ineq 1) [left=12mm of dot ineq 1] {\small First term in \eqref{splitInt}: Lemma \ref{BRMethod}, Lemma \ref{OperatorNormCovarianceMatrix}, Lemma \ref{EdgeWorth-BindDerivatives}, Lemma \ref{lem:BindMoments-Truncation}};
  \node[transition] (ineq 2) [left=4mm of dot ineq 2] {\small Second term in \eqref{splitInt}: \cite[Proposition 3]{beigi2023towards}, Proposition \ref{prop:CharSupOrigin}, Lemma \ref{OperatorNormCovarianceMatrix}, Lemma \ref{lem:tr-A-kappa-bounded}};
  \node[transition] (ineq 3) [left=18mm of dot ineq 3] {\small Third term in \eqref{splitInt}: Lemma \ref{lem:bound-chi-tilde-sigma}, Lemma \ref{lem:tr-A-kappa-bounded}, Lemma \ref{lem:BindMoments-Truncation}};
  \node[transition] (ineq 4) [left=18mm of dot ineq 4] {\small  Fourth term in \eqref{splitInt}: Lemma \ref{lem:bound-chi-tilde-sigma}, Lemma \ref{lem:tr-A-kappa-bounded}, Lemma \ref{OperatorNormCovarianceMatrix}};

  \draw [line width=0.5pt] (proof 3) -- (dot ineq 4);
  \path [line width=0.5pt] (dot ineq 1)   edge[->] (ineq 1);
  \path [line width=0.5pt] (dot ineq 2)   edge[->] (ineq 2);
  \path [line width=0.5pt] (dot ineq 3)   edge[->] (ineq 3);
  \path [line width=0.5pt] (dot ineq 4)   edge[->] (ineq 4);

  \end{tikzpicture}
\end{center}
\caption{To prove Theorem \ref{MainTheoremTrace} we first apply the triangle inequality to reduce the problem into bounding the three terms on the right hand side of (60). The bound on these three terms are proven in Claims \ref{claim:I}, \ref{claim:II}, \ref{claim:III}. The diagram presents results from the  previous sections that are used to prove these claims.}
\end{figure}

The upper bound for the third term can be established using Lemma~\ref{OperatorNormCovarianceMatrix} that states the fast convergence of the covariance matrix of  $\tilde \sigma_{n,G}$ to that of $\rho_G$ as $n\to \infty$. Nevertheless, we need to translate the convergence in terms of covariance matrices into the convergence in terms of trace distance. We bridge this gap by using Lemma~\ref{lem:norm1-norm2}. The details of this analysis are given in Claim~\ref{claim:I} below.

To get the desired upper bound for the second term, we may replace $\tilde \sigma_{n,G}$ with $\rho_G$, as by analyzing the third term, we already know that they are close to each other in terms of trace distance. The desired upper bound then follows from our previous bounds on the first-order moments in Lemma~\ref{bound-FirstNorm}. We handle the details in Claim~\ref{claim:II} below.

To prove the upper bound for the first term, we once again use Lemma~\ref{lem:norm1-norm2}. We note that all the moments of $\tilde{\sigma}_n$ and $\tilde \sigma_{n,G}$ are finite (although they depend on $n$). Thus, Lemma~\ref{lem:norm1-norm2} can be applied safely to get a bound in terms of the characteristic function of $A^{\dagger}(\tilde{\sigma}_n^{\boxplus n} - \tilde \sigma_{n,G})A$. Next, to bound the norm of this characteristic function, we split the resulting integral into two regions: one over points away from the origin, and the other one over points near the origin. By Proposition~\ref{prop:CharSupOrigin} we know that the modulus of the characteristic function over points in the first region is strictly less than $1$. This crucial property can be utilized to show that the first integral is at least sub-exponentially small. For the second integral we can use the Edgeworth-type expansion developed earlier to estimate the characteristic function near the origin. In particular, we can apply Lemma~\ref{BRMethod} to control the derivatives of the characteristic function of $\tilde{\sigma}_n^{\boxplus n} - \tilde \sigma_{n,G}$ near the origin. Also, with the help of Lemma~\ref{EdgeWorth-BindDerivatives}, we obtain an upper bound on the derivatives of the additional polynomial terms produced by Lemma~\ref{BRMethod}, expressed in terms of the moments of $\tilde{\sigma}_n$. The desired upper bound for the second integral then follows from Lemma~\ref{OperatorNormCovarianceMatrix}  and \eqref{BindMoments-Truncation}, which provide upper bounds for the moments of the truncated state. Moreover, Lemma~\ref{lem:bound-chi-tilde-sigma} allows us to control powers of the characteristic function of $\tilde{\sigma}_n$ in a straightforward manner. The detailed computation for this term is provided in Claim~\ref{claim:III} below.

\bigskip
\begin{claim}\label{claim:I} 
$\big\| \rho_G - \tilde \sigma_{n,G} \big\|_{1}  =\mathcal{O}\Big(\frac{1}{\sqrt{n}}\Big)$.
\end{claim}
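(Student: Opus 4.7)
The plan is to invoke Lemma~\ref{lem:norm1-norm2} to reduce the trace norm to an integral of the characteristic function of $A^\dagger(\rho_G-\tilde{\sigma}_{n,G})A$, where $A=\bfa_1\cdots\bfa_m$, and then exploit the fact that both $\rho_G$ and $\tilde{\sigma}_{n,G}$ are centered Gaussian states with Gaussian characteristic functions
\[
\chi_{\rho_G}(z)=e^{-\frac{1}{4}\hat{z}^\dagger\Lambda_m^\dagger\boldsymbol{\gamma}(\rho_G)\Lambda_m\hat{z}},\qquad \chi_{\tilde{\sigma}_{n,G}}(z)=e^{-\frac{1}{4}\hat{z}^\dagger\Lambda_m^\dagger\boldsymbol{\gamma}(\tilde{\sigma}_{n,G})\Lambda_m\hat{z}},
\]
that differ only through their covariance matrices. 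By Lemma~\ref{OperatorNormCovarianceMatrix} applied with $s=3$, the difference $\Delta_n:=\boldsymbol{\gamma}(\tilde{\sigma}_{n,G})-\boldsymbol{\gamma}(\rho_G)$ satisfies $\|\Delta_n\|=\mathcal{O}(1/\sqrt{n})$, and in particular the minimum eigenvalues of the interpolating covariance matrices $\boldsymbol{\gamma}_t:=(1-t)\boldsymbol{\gamma}(\rho_G)+t\,\boldsymbol{\gamma}(\tilde{\sigma}_{n,G})$ remain bounded away from $1$ uniformly in $t\in[0,1]$ for $n$ large, thanks to~\eqref{eq:assumption-nu-min-1}.

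Setting $\chi_t(z):=e^{-\frac{1}{4}\hat{z}^\dagger\Lambda_m^\dagger\boldsymbol{\gamma}_t\Lambda_m\hat{z}}$ and applying the fundamental theorem of calculus, I would write
\[
\chi_{\rho_G}(z)-\chi_{\tilde{\sigma}_{n,G}}(z)=\int_0^1\frac{1}{4}\big(\hat{z}^\dagger\Lambda_m^\dagger\Delta_n\Lambda_m\hat{z}\big)\,\chi_t(z)\,dt.
\]
Since every derivative of a centered Gaussian characteristic function is a polynomial in $z,\bar z$ times the Gaussian itself, Leibniz's rule combined with the bound $|\hat{z}^\dagger\Lambda_m^\dagger\Delta_n\Lambda_m\hat{z}|\leq 2\|\Delta_n\|\,\|z\|^2$ yields, for every pair of multi-indices $\alpha,\beta\in\mathbb{Z}_+^m$ with $|\alpha|+|\beta|\leq 2m$,
\[
\bigl|\partial_z^\alpha\partial_{\bar z}^\beta\big(\chi_{\rho_G}-\chi_{\tilde{\sigma}_{n,G}}\big)(z)\bigr|\leq \|\Delta_n\|\,R_{\alpha,\beta}(\|z\|)\,e^{-c\|z\|^2},
\]
where $R_{\alpha,\beta}$ is a polynomial and $c>0$ a constant, both independent of $n$ for $n$ sufficiently large.

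Finally, I would iterate identities \eqref{eq:characterisitc-derivative-0} and \eqref{eq:characterisitc-derivative-1} to express $\chi_{A^\dagger(\rho_G-\tilde{\sigma}_{n,G})A}(z)$ as a linear combination of partial derivatives of $\chi_{\rho_G-\tilde{\sigma}_{n,G}}(z)$ of orders up to $2m$, with polynomial coefficients in $z,\bar z$. Substituting the estimate above gives
\[
\bigl|\chi_{A^\dagger(\rho_G-\tilde{\sigma}_{n,G})A}(z)\bigr|\leq \|\Delta_n\|\,S(\|z\|)\,e^{-c\|z\|^2}
\]
for a polynomial $S$ independent of $n$, and feeding this into Lemma~\ref{lem:norm1-norm2} yields
\[
\bigl\|\rho_G-\tilde{\sigma}_{n,G}\bigr\|_1^2\leq \Big(\frac{\pi^2}{6}\Big)^m\|\Delta_n\|^2\int S(\|z\|)^2 e^{-2c\|z\|^2}\,d^{2m}z=\mathcal{O}(1/n),
\]
which is exactly the claim. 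The main technical obstacle is the uniform-in-$n$ control of the Gaussian decay rate $c$ throughout the interpolation in $t$ and after applying up to $2m$ derivatives; this is precisely what requires staying away from the degenerate eigenvalue $\nu_j=1$ via~\eqref{eq:assumption-nu-min-1}, together with the quantitative bound $\|\Delta_n\|=\mathcal{O}(1/\sqrt{n})$ furnished by Lemma~\ref{OperatorNormCovarianceMatrix}.
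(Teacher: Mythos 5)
Your proposal is correct, and its skeleton coincides with the paper's proof of Claim~\ref{claim:I}: both invoke Lemma~\ref{lem:norm1-norm2} with $T_n=\rho_G-\tilde\sigma_{n,G}$, iterate~\eqref{eq:characterisitc-derivative-0} and~\eqref{eq:characterisitc-derivative-1} to write $\chi_{A^\dagger T_n A}(z)$ as a polynomial combination of derivatives of $\chi_{T_n}(z)$ up to order $2m$, feed in the covariance estimate $\|\Delta_n\|=\mathcal O(1/\sqrt n)$ from Lemma~\ref{OperatorNormCovarianceMatrix} (with $s=3$), and integrate against a Gaussian envelope. The only genuine difference is the step comparing the two Gaussian characteristic functions: the paper factors $\chi_{T_n}(z)=\chi_{\rho_G}(z)\big\{1-\exp\big(-\tfrac14\hat z^\dagger\Lambda_m^\dagger\Delta_n\Lambda_m\hat z\big)\big\}$ and applies $|1-e^x|\leq |x|e^{|x|}$, which produces a slack factor $\exp\big(\tfrac{\delta}{4}\|\hat z\|^2\big)$ for an arbitrary small $\delta>0$ that must afterwards be absorbed by the Gaussian decay of $\chi_{\rho_G}(z)^2$; your interpolation $\boldsymbol{\gamma}_t=(1-t)\boldsymbol{\gamma}(\rho_G)+t\,\boldsymbol{\gamma}(\tilde\sigma_{n,G})$ combined with the fundamental theorem of calculus extracts the factor $\|\Delta_n\|$ while keeping a true Gaussian envelope $e^{-c\|z\|^2}$ with $c>0$ uniform in $n$, so the $\delta$-bookkeeping disappears, at the modest cost of verifying that $\lambda_{\min}(\boldsymbol{\gamma}_t)$ stays uniformly bounded below along the segment, which your bound on $\|\Delta_n\|$ settles. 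Both versions are sound; yours is arguably cleaner at this point (and sidesteps the delicate choice of $\delta$ relative to $\nu_{\min}$ in the paper's final integrability check). One small remark: for this particular claim you only need the decay rate $c$ bounded away from $0$, which follows since $\boldsymbol{\gamma}_t$ is a convex combination of covariance matrices converging to $\boldsymbol{\gamma}(\rho)$, so your appeal to~\eqref{eq:assumption-nu-min-1} ($\nu_{\min}>1$) is stronger than necessary here, though harmless since that assumption is in force throughout the proof of Theorem~\ref{MainTheoremTrace}.
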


\medskip

Although this claim can be proven using~\cite[Theorem S18]{bittel2025optimal} and Lemma~\ref{OperatorNormCovarianceMatrix}, we present an alternative argument here.

Invoking Lemma~\ref{lem:norm1-norm2} with $T_n := \rho_G - \tilde \sigma_{n,G}$, we get
\[
\Big(\frac{6}{\pi^2}\Big)^{m}  \| T_n \|_{1}^{2}  \leq \int \big|\chi_{A^{\dagger} T_n A}(z)\big|^{2} \dd^{2m} z,
\]
where $A = \ac_1 \cdots \ac_m$. Note that from~\eqref{eq:displacement-derivative}, we have
\begin{align*}
\chi_{\ac_j^{\dagger}T_n}(z) = \Big(\partial_{z_j} - \frac{1}{2} \bar z_{j}\Big) \chi_{T_n}(z),
\qquad
	\chi_{T_n \ac_j}(z) = \Big(- \partial_{\bar z_{j}} + \frac{1}{2} z_j \Big) \chi_{T_n}(z).
\end{align*}
Hence, there exist some constants $c_{\alpha, \alpha', \beta, \beta'}$ such that
\begin{align*}
	\chi_{A^{\dagger} T_n A} (z) 
	&=  \prod_{j=1}^{m} \Big(\partial_{z_j} - \frac{1}{2} \bar z_{j}\Big) \Big(- \partial_{\bar z_{j}} + \frac{1}{2} z_j \Big)  \chi_{T_n}(z) \\
	&= \sum_{\alpha, \alpha', \beta, \beta'} c_{\alpha, \alpha', \beta, \beta'} z^{\alpha'}\bar{z}^{\beta'} \partial_{z}^{\alpha} \partial_{\bar z}^\beta  \chi_{T_n}(z),
\end{align*}
where the sum runs over all $\alpha, \alpha', \beta, \beta'\in \{0,1\}^m$ satisfying $|\alpha| + |\alpha'|+|\beta| + |\beta'|\leq 2m$. We note that here the number of summands is a constant depending only on $m$, the number of modes.
Therefore, we have
\begin{equation}\label{ChiDerivativeGaussian}
	\big|\chi_{A^{\dagger} T_n A} (z)\big|^{2} \leq  C' \sum_{\alpha, \alpha', \beta, \beta'} \prod_{j=1}^{m} {|z_{j}|}^{(\alpha'_j+\beta'_j)} \big|\partial_z^\alpha\partial_{\bar z}^\beta \chi_{T_n}(z) \big|^{2},
\end{equation}
for some absolute constant $C'>0$. Note that since both $\rho_G$ and $\tilde \sigma_{n, G}$ are Gaussian, $\chi_{T_n}(z)$ can be expressed in terms of $\boldsymbol{\gamma}(\rho_G)$ and $\boldsymbol{\gamma}(\tilde \sigma_{n,G})$ by using~\eqref{eq:Char-NonWilliam}:
\begin{align*}
	\chi_{T_n}(z) &= \chi_{\rho_G}(z) - \chi_{\tilde \sigma_{n,G}}(z) \\
	&= \exp\Big(-\frac 14 \hat{z}^\dagger \Lambda_m^\dagger \boldsymbol{\gamma}(\rho_G) \Lambda_m  \hat{z}\Big) - \exp\Big(-\frac 14 \hat{z}^\dagger \Lambda_m^\dagger \boldsymbol{\gamma}(\tilde \sigma_{n,G}) \Lambda_m  \hat{z}\Big)\\
	&= \chi_{\rho_G}(z)\cdot \Big\{ 1 - \exp\Big(-\frac 14 \hat{z}^\dagger \Lambda_m^\dagger \big(\boldsymbol{\gamma}(\tilde \sigma_{n,G}) - \boldsymbol{\gamma}(\rho_G)\big) \Lambda_m \hat{z}\Big) \Big\},
\end{align*}
where $\hat{z} = (z_1, \bar{z}_1, \cdots, z_m, \bar{z}_m)^\top$ and $\Lambda_m$ is defined in~\eqref{eq:Lambda_m}. Hence, by using~\eqref{ChiDerivativeGaussian}  we get
\begin{align}\label{DerivativeDifferenceGaussian}
	&\big|\chi_{A^{\dagger} T_n A} (z)\big|^{2}\nonumber\\ 
	&\quad \leq \sum P_{\alpha, \beta}(\|z\|) \Big|\partial_z^{\alpha_1} \partial_{\bar z}^{\beta_1} \chi_{\rho_G}(z) \cdot \partial_z^{\alpha_2} \partial_{\bar z}^{\beta_2}  \Big\{ 1 - \exp\Big(-\frac 14 \hat{z}^\dagger \Lambda_m^\dagger \big(\boldsymbol{\gamma}(\tilde \sigma_{n,G}) - \boldsymbol{\gamma}(\rho_G)\big) \Lambda_m \hat{z}\Big) \Big\} \Big|^{2},
\end{align}
where the sum runs over all $\alpha_1,\alpha_2,\beta_1,\beta_2$ such that $\alpha=\alpha_1+\alpha_2$ and $\beta=\beta_1+\beta_2$ belong to $\{0,1\}^m$, and $P_{\alpha, \beta}(z)$ is a polynomial function in $z$ and $\bar z$ with coefficients that are independent of $n$. 

The first key observation is that $\partial_z^{\alpha_1} \partial_{\bar z}^{\beta_1} \chi_{\rho_G}(z)$ can be expressed as $Q_{\alpha_1, \beta_1}(z) \chi_{\rho_G}(z)$, where $Q_{\alpha_1, \beta_1}(z)$ is a polynomial that does not depend on $n$. For the second term, if $\alpha_2 = \beta_2 = 0$, we apply $|1 - e^x| \leq |x| e^{|x|}$ along with the unitarity of $\Lambda_m$ to obtain
\begin{align}
	\Big|1 - \exp\Big(-\frac 14 \hat{z}^\dagger &\Lambda_m^\dagger \big(\boldsymbol{\gamma}(\tilde \sigma_{n,G}) - \boldsymbol{\gamma}(\rho_G)\big) \Lambda_m \hat{z}\Big)\Big| \nonumber \\
	&\leq \Big|\frac 14 \hat{z}^\dagger \Lambda_m^\dagger \big(\boldsymbol{\gamma}(\tilde \sigma_{n,G}) - \boldsymbol{\gamma}(\rho_G)\big) \Lambda_m \hat{z}\Big| \cdot \exp\Big(\Big|\frac 14 \hat{z}^\dagger \Lambda_m^\dagger \big(\boldsymbol{\gamma}(\tilde \sigma_{n,G}) - \boldsymbol{\gamma}(\rho_G)\big) \Lambda_m \hat{z}\Big|\Big) \nonumber \\
	&\leq \frac 14 \big\| \boldsymbol{\gamma}(\tilde \sigma_{n,G}) - \boldsymbol{\gamma}(\rho_G) \big\|\cdot \big\| \Lambda_m \hat{z}\big\|^2 \cdot \exp\Big(\Big|\frac 14 \hat{z}^\dagger \Lambda_m^\dagger \big(\boldsymbol{\gamma}(\tilde \sigma_{n,G}) - \boldsymbol{\gamma}(\rho_G)\big) \Lambda_m \hat{z}\Big|\Big) \nonumber \\
	&\leq \mathcal{O}\Big(\frac{1}{\sqrt n}\Big) \|  \hat{z}\|^2 \exp\Big(\frac{\delta}{4} \|  \hat{z}\|^2 \Big) \label{diverg_point_Ent1},
\end{align}
where in the last line we use Lemma~\ref{OperatorNormCovarianceMatrix}, and $\delta>0$ is an arbitrary constant to be determined later. The point is that, as a consequence of Lemma~\ref{OperatorNormCovarianceMatrix}, for any arbitrary $\delta>0$ and sufficiently large $n$, we have $\big\| \Lambda_m^\dagger \left( \boldsymbol{\gamma}(\tilde \sigma_{n,G}) - \boldsymbol{\gamma}(\rho_G)\right) \Lambda_m \big\| \leq \delta$. If $\alpha_2 \neq 0$ or $\beta_2 \neq 0$, we use $\big|\partial_z^{\alpha_2} \partial_{\bar z}^{\beta_2} (1-\exp f(z))\big| = \big|\partial_z^{\alpha_2} \partial_{\bar z}^{\beta_2} \exp f(z)\big|$, along with $|e^x| \leq e^{|x|}$, and apply a similar argument as above to obtain
\begin{align}
\Big| \partial_z^{\alpha_2} \partial_{\bar z}^{\beta_2}  \big( 1 - \exp(-\frac 14 \hat{z}^\dagger &\Lambda_m^\dagger (\boldsymbol{\gamma}(\tilde \sigma_{n,G}) - \boldsymbol{\gamma}(\rho_G)) \Lambda_m \hat{z}) \big) \Big| \nonumber \\
&\leq \mathcal{O}\Big(\frac{1}{\sqrt n}\Big) \tilde Q_{\alpha_2, \beta_2}(\|z\|) \exp\Big(\frac{\delta}{4} \|  \hat{z}\|^2 \Big)\label{diverg_point_Ent2},
\end{align}
where $\tilde Q_{\alpha_2, \beta_2}(\cdot)$ is some polynomial independent of $n$. 
Putting all these together in~\eqref{DerivativeDifferenceGaussian}, we obtain
\[
	\big|\chi_{A^{\dagger} T_n A} (z)\big|^{2} \leq \mathcal{O}\Big(\frac{1}{ n}\Big) P(\| z\|) \chi_{\rho_G}(z)^2 \exp\Big(\frac{\delta}{2} \|  \hat{z}\|^2 \Big),
\]
for some polynomial $P(\cdot)$ independent of $n$. Now recall that the covariance matrix $\boldsymbol{\gamma}(\rho) = \boldsymbol{\gamma}(\rho_G)$ is positive definite with minimum eigenvalue $\nu_{\min}>1$. Therefore, $\chi_{\rho_G}(z) \leq  \exp\big(-\nu_{\min} \|z\|^2/2\|\big)$ and
\[
	 \int  \dd^{2m} z P(\|z\|) \chi_{\rho_G}(z)^2 \exp\Big(\frac{\nu_{\min}}{2} \|  \hat{z}\|^2 \Big)  < +\infty.
\]
Thus, letting $\delta=\nu_{\min}$, we find that
\[
\Big(\frac{6}{\pi^2}\Big)^{m}  \| \rho_G - \tilde \sigma_{n,G} \|_{1}^{2}  \leq \int \big|\chi_{A^{\dagger} T_n A}(z)\big|^{2} \dd^{2m} z = \mathcal{O}\Big(\frac{1}{ n}\Big),
\]
as desired.

\bigskip
\begin{claim}\label{claim:II} 
$\big\| \tilde \sigma_{n,G} - D_{\sqrt n z_n}^\dagger \tilde \sigma_{n,G} D_{\sqrt n z_n} \big\|_1 = \mathcal{O}\big(\frac{1}{\sqrt n}\big)$.
\end{claim}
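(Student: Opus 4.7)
The approach I would take parallels the proof of Claim~\ref{claim:I}: apply Lemma~\ref{lem:norm1-norm2} with $A = \bfa_1 \cdots \bfa_m$ to reduce the trace norm to an $L^2$ estimate for $\chi_{A^\dagger T_n A}(z)$, where this time $T_n := \tilde \sigma_{n,G} - D_{w}^\dagger \tilde \sigma_{n,G} D_{w}$ and $w := \sqrt n \, z_n$. The starting point is that conjugation by a displacement multiplies the characteristic function by a phase: a direct application of the Baker--Campbell--Hausdorff identity from Subsection~\ref{subsec:Phase space formalism} yields $D_{w} D_z D_{w}^\dagger = e^{w^\top \bar z - \bar w^\top z} D_z$, hence
\[
\chi_{T_n}(z) = \chi_{\tilde \sigma_{n,G}}(z) \bigl(1 - e^{w^\top \bar z - \bar w^\top z}\bigr).
\]
Moreover, Lemma~\ref{bound-FirstNorm} applied with $s = 3$ gives $|z_{n,j}| = \mathcal{O}(n^{-1})$, and therefore $\|w\| = \mathcal{O}(n^{-1/2})$; this is precisely the source of the factor $n^{-1/2}$ we seek.

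Next, using~\eqref{eq:characterisitc-derivative-0}--\eqref{eq:characterisitc-derivative-1} one expresses $\chi_{A^\dagger T_n A}(z)$ as a linear combination of terms $z^{\alpha'} \bar z^{\beta'} \partial_z^\alpha \partial_{\bar z}^\beta \chi_{T_n}(z)$ with $|\alpha| + |\alpha'| + |\beta| + |\beta'| \leq 2m$. By the Leibniz rule applied to the product above, each such derivative is a sum of products of a derivative of $\chi_{\tilde \sigma_{n,G}}(z)$ with a derivative of $1 - e^{w^\top \bar z - \bar w^\top z}$. Derivatives of the Gaussian factor are polynomials in $(z,\bar z)$ of bounded degree, with coefficients controlled by $\boldsymbol{\gamma}(\tilde \sigma_{n,G})$ and hence uniformly bounded in $n$ thanks to Lemma~\ref{OperatorNormCovarianceMatrix}, multiplied by $\chi_{\tilde \sigma_{n,G}}(z)$ itself. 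For the second factor, the zero-order term is estimated by $|1 - e^{w^\top \bar z - \bar w^\top z}| \leq 2\|w\|\|z\|\, e^{2\|w\|\|z\|}$ via $|1 - e^x| \leq |x| e^{|x|}$, while any positive-order derivative equals $\partial_z^{\alpha_2} \partial_{\bar z}^{\beta_2} e^{w^\top \bar z - \bar w^\top z}$, which is a polynomial in $(w, \bar w)$ of positive total degree times $e^{w^\top \bar z - \bar w^\top z}$, and is therefore bounded by $C \|w\|\, e^{2\|w\|\|z\|}$. In every case a factor $\|w\| = \mathcal{O}(n^{-1/2})$ comes out, leading to a pointwise bound of the form
\[
\bigl|\chi_{A^\dagger T_n A}(z)\bigr|^2 \leq \mathcal{O}(n^{-1}) \cdot P(\|z\|) \cdot \chi_{\tilde \sigma_{n,G}}(z)^2 \cdot e^{4 \|w\|\|z\|},
\]
for some polynomial $P$ independent of $n$.

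Finally, I would show that the right-hand side is integrable with a bound uniform in $n$. Lemma~\ref{OperatorNormCovarianceMatrix} gives $\chi_{\tilde \sigma_{n,G}}(z)^2 \leq \chi_{\rho_G}(z)^2 \cdot e^{\mathcal{O}(n^{-1/2}) \|z\|^2}$, while the elementary inequality $2\|w\|\|z\| \leq \epsilon \|z\|^2 + \|w\|^2/\epsilon$ together with $\chi_{\rho_G}(z)^2 \leq e^{-\nu_{\min} \|z\|^2}$ shows that, after choosing $\epsilon > 0$ small enough and $n$ large enough, the full exponential is dominated by $e^{-\delta \|z\|^2}$ for some $\delta > 0$, thanks to the standing assumption $\nu_{\min} > 1$ in~\eqref{eq:assumption-nu-min-1}. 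Combining these with Lemma~\ref{lem:norm1-norm2} yields $\|T_n\|_1^2 = \mathcal{O}(n^{-1})$, which is the claim. The only obstacle worth flagging is the bookkeeping needed to guarantee that the Gaussian decay of $\chi_{\rho_G}$ dominates the sub-Gaussian perturbation $e^{4\|w\|\|z\|}$ coming from the phase; the strict inequality $\nu_{\min} > 1$ provides exactly enough slack.
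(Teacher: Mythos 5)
Your proof is correct, but it takes a genuinely different route from the paper's. The paper disposes of Claim~\ref{claim:II} in a few lines: by Claim~\ref{claim:I}, the triangle inequality, and unitary invariance of the trace norm, it suffices to bound $\big\|\rho_G - D_{\sqrt n z_n}^\dagger \rho_G D_{\sqrt n z_n}\big\|_1$; since $\rho$ is in Williamson form, $\rho_G$ is a product of thermal states, so using~\eqref{eq:Dz-displace} and the fact that $\rho_G$ is centered one computes the relative entropy between $\rho_G$ and its displacement \emph{exactly}, $D\big(\rho_G\big\|D_{\sqrt n z_n}^\dagger\rho_G D_{\sqrt n z_n}\big) = n\sum_{j}\beta_j |z_{n,j}|^2 = \mathcal O(1/n)$ by Lemma~\ref{bound-FirstNorm}, and Pinsker's inequality converts this into the desired $\mathcal O(1/\sqrt n)$ trace-norm bound. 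You instead re-run the Claim~\ref{claim:I} machinery (Lemma~\ref{lem:norm1-norm2} plus the phase identity $\chi_{D_w^\dagger\sigma D_w}(z) = e^{w^\top\bar z - \bar w^\top z}\chi_\sigma(z)$, which is correct), and your derivative bookkeeping goes through as sketched, with the factor $\|w\|=\mathcal O(n^{-1/2})$ extracted from Lemma~\ref{bound-FirstNorm} exactly as in the paper. The paper's route is shorter, exploits the thermal structure afforded by the Williamson reduction, and yields the stronger entropic estimate~\eqref{RelativeEntropyGaussian} as a by-product (the same computation is recycled in the proof of Theorem~\ref{MainTheorem} with $s=4+\delta$); your route avoids Pinsker and the thermal form entirely---it works for any centered Gaussian reference with nondegenerate covariance---at the cost of heavier bookkeeping, and delivers only the trace-norm bound. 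One simplification you missed: the exponent $w^\top\bar z - \bar w^\top z$ is purely imaginary, so $\big|e^{w^\top\bar z - \bar w^\top z}\big|=1$ and $\big|1-e^{w^\top\bar z - \bar w^\top z}\big|\leq 2\|w\|\,\|z\|$ with no exponential factor; consequently the sub-Gaussian perturbation $e^{4\|w\|\|z\|}$ you flag as the ``only obstacle'' never arises, and the strict inequality $\nu_{\min}>1$ is not needed for this claim---$\nu_{\min}>0$, which holds automatically, already suffices for the final integrability step.
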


\medskip
By Claim~\ref{claim:I} and Pinsker's inequality, it is sufficient to show that
\begin{equation}\label{RelativeEntropyGaussian}
	D\big(\rho_G \big\| D_{\sqrt n z_n}^\dagger \rho_G D_{\sqrt n z_n}\big) = \mathcal{O}\Big(\frac{1}{n}\Big).
\end{equation}
Since $\rho$ is in the Williamson form,  the Gaussian state $\rho_G$ is thermal and is equal to $\rho_G = \tau_1\otimes\cdots \otimes \tau_n$ where 
\[
	\tau_j = \big( 1 - e^{-\beta_j}) \exp\big(-\beta_j \ac_j^\dagger \ac_j\big),
\]
with $\beta_j >0$ being given by
\[
	\nu_j = \frac{1 + e^{-\beta_j}}{1 - e^{-\beta_j}} \geq 1.
\]
Thus, using~\eqref{eq:Dz-displace} we have 
\begin{equation}\label{GaussianHamiltonian}
D_{\sqrt n z_n}^\dagger \rho_G D_{\sqrt n z_n} = \bigotimes_{j=1}^m (1-e^{\beta_j}) \exp\big( -\beta_j (\bfa_j^\dagger +\sqrt n \bar z_{n,j})(\bfa_j + \sqrt n z_{n, j})\big),
\end{equation}
Therefore, 
\begin{align*}
 D\big(\rho_G\big \| D_{\sqrt n z_n}^\dagger \rho_G D_{\sqrt n z_n}\big) &=  \tr \Big( \rho_G \big( \ln (\rho_G) - \ln (D_{\sqrt n z_n}^\dagger \rho_G D_{\sqrt n z_n})\big)\Big)  \\
&=  \sum_{j=1}^m \beta_j \tr \Big( \rho_G \big( (\bfa_j^\dagger +\sqrt n \bar z_{n,j})(\bfa_j + \sqrt n z_{n, j}) -  \bfa_j^\dagger \bfa_j\big)\Big) \\
&=   n \sum_{j=1}^m \beta_j \|z_{n, j}\|^2 \\
&= \mathcal{O}\Big(\frac 1n\Big),
\end{align*}
where in the third line, we use the fact that $\rho_G$ is centered, and in the last line, we use Lemma~\ref{bound-FirstNorm}.

\bigskip
\begin{claim}
\label{claim:III} 
$ \big\|  \tilde{\sigma}_n^{\boxplus n} - \tilde \sigma_{n,G} \big\|_1 = \mathcal{O}\big(\frac{1}{\sqrt{n}}\big)$.
\end{claim}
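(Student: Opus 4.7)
The plan is to imitate the strategy of Claim~\ref{claim:I}. I would apply Lemma~\ref{lem:norm1-norm2} with $T_n := \tilde\sigma_n^{\boxplus n} - \tilde\sigma_{n,G}$ and $A = \bfa_1\cdots\bfa_m$ to reduce the task to bounding $\int |\chi_{A^\dagger T_n A}(z)|^2\,\dd^{2m}z = \mathcal{O}(1/n)$. Expanding as in~\eqref{ChiDerivativeGaussian},
\[
	\chi_{A^\dagger T_n A}(z) = \sum_{\alpha,\alpha',\beta,\beta'} c_{\alpha,\alpha',\beta,\beta'}\, z^{\alpha'}\bar z^{\beta'} \partial_z^\alpha \partial_{\bar z}^\beta \chi_{T_n}(z),
\]
with the sum running over multi-indices in $\{0,1\}^m$ of total weight at most $2m$, it suffices to estimate $\int |z|^{2k}\big|\partial_z^\alpha\partial_{\bar z}^\beta \chi_{T_n}(z)\big|^2 \dd^{2m}z = \mathcal{O}(1/n)$ for each such term. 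I would partition $\mathbb{C}^m$ into an inner region $\|z\|\leq C n^{1/4}$ and an outer region, handled separately.

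For the inner region, I would invoke the quantum Edgeworth-type expansion of Lemma~\ref{BRMethod} applied to $\tilde\sigma_n$ with $\kappa = 4$ --- legitimate because $\tilde\sigma_n$ is finite-rank, and by Lemma~\ref{lem:BindMoments-Truncation}, $M_4(\tilde\sigma_n) = \mathcal{O}(\sqrt n)$. Consequently the range of validity in Lemma~\ref{BRMethod} extends to $\|z\| \leq C' M_4(\tilde\sigma_n)^{-1/2}\sqrt n = \mathcal{O}(n^{1/4})$, and for every derivative of order at most $2m$ the remainder satisfies
\[
	\bigg| \partial_z^\alpha \partial_{\bar z}^\beta \Big( \chi_{T_n}(z) - n^{-1/2}\chi_{\tilde\sigma_{n,G}}(z) E_{\tilde\sigma_n,1}(z) - n^{-1}\chi_{\tilde\sigma_{n,G}}(z) E_{\tilde\sigma_n,2}(z) \Big) \bigg| \leq n^{-1/2} L(\|z\|)\, e^{-\frac{\nu_{\min}-1}{4}\|z\|^2},
\]
after absorbing $M_4(\tilde\sigma_n)$ into the prefactor. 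Because $\nu_{\min}>1$ by~\eqref{eq:assumption-nu-min-1}, squaring, multiplying by $|z|^{2k}$ and integrating yield $\mathcal{O}(1/n)$ for the remainder. The two explicit Edgeworth terms contribute similarly: Lemma~\ref{EdgeWorth-BindDerivatives} bounds the derivatives of $E_{\tilde\sigma_n,r}(z)$ by $Q(\|z\|)M_{r+2}(\tilde\sigma_n)$, and since $M_3(\tilde\sigma_n) = \mathcal{O}(1)$ while the growth $M_4(\tilde\sigma_n) = \mathcal{O}(\sqrt n)$ is exactly absorbed by the $n^{-1}$ prefactor of $E_{\tilde\sigma_n,2}$, each term has magnitude $\mathcal{O}(n^{-1/2})$ times a Gaussian-polynomial factor integrable against the weight, contributing $\mathcal{O}(1/n)$ to the squared integral.

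In the outer region, I would further split at $\|z\| = \epsilon\sqrt n$. On $C n^{1/4} < \|z\| \leq \epsilon\sqrt n$, Lemma~\ref{lem:bound-chi-tilde-sigma} gives $|\chi_{\tilde\sigma_n^{\boxplus n}}(z)| \leq |\chi_{\tilde\sigma_n}(z/\sqrt n)|^{n-2m} \leq \text{const}\cdot e^{-c\|z\|^2}$ for some $c>0$ (after choosing $\epsilon$ small relative to $\nu_{\min}$). Rewriting each $\partial^\gamma \chi_{\tilde\sigma_n^{\boxplus n}}$ via~\eqref{eq:characterisitc-derivative-0}--\eqref{eq:characterisitc-derivative-1} as the characteristic function of a trace-class operator of the form $B\tilde\sigma_n^{\boxplus n} B'$ with $B,B'$ built from a bounded number of ladder operators, the same Gaussian domination carries over up to a polynomial factor in $\|z\|$; the Gaussian term $\chi_{\tilde\sigma_{n,G}}$ is trivially Gaussian. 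Since $\|z\| > C n^{1/4}$ throughout, the tail Gaussian integral over this sub-region is super-polynomially small in $n^{1/2}$. On $\|z\| > \epsilon\sqrt n$, the uniform bound~\eqref{CharSupOrigin} gives $|\chi_{\tilde\sigma_n}(z/\sqrt n)| \leq 1 - \delta$ (with $\delta>0$ uniform in $n$, since the covariances of $\tilde\sigma_n$ converge to that of $\rho$ by Lemma~\ref{OperatorNormCovarianceMatrix}), so $|\chi_{\tilde\sigma_n^{\boxplus n}}(z)| \leq (1-\delta)^n$; the higher-order derivatives absorb only polynomially many extra factors controlled by moments of $\tilde\sigma_n$, and the net contribution is exponentially small in $n$.

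The main obstacle is the inner region, where the moment growth $M_\kappa(\tilde\sigma_n) = \mathcal{O}(n^{(\kappa-3)/2})$ forces a delicate balance: the validity range of the Edgeworth expansion shrinks from $\sqrt n$ to $n^{1/4}$, and the $E_{\tilde\sigma_n,2}$ correction acquires the same order $n^{-1/2}$ as the leading $E_{\tilde\sigma_n,1}$ term and as the Edgeworth remainder --- all of which match the target rate. Choosing $\kappa > 4$ would not improve this rate, since the same balancing mechanism persists. Combining the contributions from all three subregions yields $\int |\chi_{A^\dagger T_n A}(z)|^2 \dd^{2m}z = \mathcal{O}(1/n)$, and hence $\|\tilde\sigma_n^{\boxplus n} - \tilde\sigma_{n,G}\|_1 = \mathcal{O}(1/\sqrt n)$, completing the proof of Claim~\ref{claim:III} and therefore of Theorem~\ref{MainTheoremTrace}.
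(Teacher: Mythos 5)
Your overall architecture is exactly the paper's: reduce to $\int|\chi_{A^\dagger T_nA}|^2$ via Lemma~\ref{lem:norm1-norm2}, expand as in~\eqref{ChiDerivativeGaussian}, control the inner region with the quantum Edgeworth expansion of Lemma~\ref{BRMethod} applied to the truncated state (whose moments grow as $M_\kappa(\tilde\sigma_n)=\mathcal O(n^{(\kappa-3)/2})$ by Lemma~\ref{lem:BindMoments-Truncation}), use Lemma~\ref{lem:bound-chi-tilde-sigma} plus Gaussian tails in the intermediate region, and use~\eqref{CharSupOrigin} with a $\delta$ made uniform in $n$ through Lemma~\ref{OperatorNormCovarianceMatrix} in the far region. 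Your moment bookkeeping in the inner region ($M_4(\tilde\sigma_n)=\mathcal O(\sqrt n)$, remainder $n^{-1}M_4=\mathcal O(n^{-1/2})$, the $r=2$ term $n^{-1}M_4=\mathcal O(n^{-1/2})$) is also correct as far as it goes.

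However, there is a genuine gap in your choice $\kappa=4$ in Lemma~\ref{BRMethod}. The expansion~\eqref{ChiDerivativeGaussian} with $A=\bfa_1\cdots\bfa_m$ produces derivatives $\partial_z^\alpha\partial_{\bar z}^\beta\chi_{T_n}$ of \emph{every} order up to $|\alpha|+|\beta|=2m$, while Lemma~\ref{BRMethod} only bounds derivatives with $|\alpha|+|\beta|\leq\kappa$. With $\kappa=4$ your inner-region estimate therefore covers derivative orders at most $4$, and for $m\geq 3$ the orders $5,\dots,2m$ are simply not controlled by the lemma you invoke — your claim that the remainder bound holds ``for every derivative of order at most $2m$'' is unsupported. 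This is precisely why the paper takes $\kappa=2m$ (legitimate since $m\geq2$ is assumed, the case $m=1$ being cited from prior work, so $2m\geq 4$): the larger $\kappa$ is forced by the derivative count, not by the convergence rate. Your closing remark that ``choosing $\kappa>4$ would not improve this rate'' is true but beside the point; it suggests you saw the rate-invariance of the balancing $n^{-r/2}M_{n,r+2}=\mathcal O(n^{-1/2})$ but missed the derivative-order constraint. The fix is routine and lands you exactly on the paper's proof: take $\kappa=2m$, sum the Edgeworth terms up to $r=2m-2$ (each still $\mathcal O(n^{-1/2})$ by the same balancing), accept the smaller validity radius $\epsilon_n\sqrt n\sim n^{1/(4(m-1))}$, and check — as the paper does via $n\epsilon_n^2\geq cn^{1/(2(m-1))}$ in~\eqref{boundepsilonN} — that the Gaussian tail beyond this smaller radius remains sub-exponentially small. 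For $m=2$ your $\kappa=4$ coincides with $2m$ and your argument is essentially the paper's verbatim. One further point worth tightening: in the intermediate and far regions you bound derivatives of $\chi_{\tilde\sigma_n^{\boxplus n}}$ by informal ``polynomial factors controlled by moments''; the paper makes this precise through the tensor-power representation~\eqref{eq:tr-A-rho-A-D}, which factors out $\chi_{\tilde\sigma_n}(z/\sqrt n)^{n-2m}$ explicitly and bounds the residual $F(z)$ by $\poly(n)$ using Lemma~\ref{lem:tr-A-kappa-bounded}, and your sketch should be fleshed out along those lines.
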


\medskip
Define $T_n = \tilde{\sigma}_n^{\boxplus n} - \tilde \sigma_{n,G}$. Let 
$$\epsilon_n = \frac 12 C M_{2m}(\tilde{\sigma}_n)^{-1/(2m-2)},$$ 
where $C=C(\nu_{\min}, \nu_{\max})>0$ is the constant in Lemma~\ref{BRMethod} associated with the state $\rho$. Let $\epsilon\geq \epsilon_n$ be a constant to be determined. Then, by using Lemma~\ref{lem:norm1-norm2} for $A = \ac_1 \cdots \ac_m$, we can write
\begin{align}
	 \Big(\frac{6}{\pi^2}\Big)^{m}  \Big\| \tilde{\sigma}_n^{\boxplus n} - & \tilde \sigma_{n,G}\Big\|_{1}^{2}  \leq \int \big|\chi_{A^{\dagger} T_n A}(z)\big|^{2} \dd^{2m} z \nonumber \\
	 & =   \int_{\|z\| \leq \epsilon_n \sqrt{n} } \big|\chi_{A^{\dagger} T_n A}(z)\big|^{2} \dd^{2m} z + \int_{\|z\| > \epsilon_n \sqrt{n} } \big|\chi_{A^{\dagger} T_n A}(z)\big|^{2} \dd^{2m} z \nonumber \\
	 & \leq \int_{\|z\| \leq \epsilon_n \sqrt{n} } \big|\chi_{A^{\dagger} T_n A}(z)\big|^{2} \dd^{2m} z  + 2 \int_{\|z\| > \epsilon_n \sqrt{n} } \big|\chi_{A^{\dagger} \tilde{\sigma}_n^{\boxplus n} A}(z)\big|^{2} \dd^{2m} z  \nonumber\\
	 & \quad + 2 \int_{\|z\| > \epsilon_n \sqrt{n} } \big|\chi_{A^{\dagger} \tilde \sigma_{n,G} A}(z)\big|^{2} \dd^{2m} z \nonumber \\
	 & = \int_{\|z\| \leq \epsilon_n \sqrt{n} } \big|\chi_{A^{\dagger} T_n A}(z)\big|^{2} \dd^{2m} z + 2 \int_{  \|z\| > \epsilon \sqrt{n}  } \big|\chi_{A^{\dagger} \tilde{\sigma}_n^{\boxplus n} A}(z)\big|^{2} \dd^{2m} z \nonumber\\
	  & \quad + 2 \int_{ \epsilon \sqrt{n} \geq \|z\| > \epsilon_n \sqrt{n} } \big|\chi_{A^{\dagger} \tilde{\sigma}_n^{\boxplus n} A}(z)\big|^{2} \dd^{2m} z + 2 \int_{\|z\| > \epsilon_n \sqrt{n} } \big|\chi_{A^{\dagger} \tilde \sigma_{n,G} A}(z)\big|^{2} \dd^{2m} z.  \label{splitInt}
\end{align}
In the following, we will show that the first term in~\eqref{splitInt} is of order $\mathcal{O}(\frac{1}{n})$ while the other terms are either exponentially or sub-exponentially small.

\medskip
\noindent
\underline{\emph{First term in~\eqref{splitInt}:}} To handle the first term, we start by bounding $\big|\chi_{A^{\dagger} T_n A}(z)\big|^{2}$ in terms of the derivatives of the characteristic function of $T_n$. As a consequence of~\eqref{ChiDerivativeGaussian}, we have
\[
\big|\chi_{A^{\dagger} T_n A} (z)\big|^{2} \leq  C' \sum_{\alpha, \alpha', \beta, \beta'} \prod_{j=1}^{m} {|z_{j}|}^{(\alpha'_j+\beta'_j)} \big|\partial_z^\alpha\partial_{\bar z}^\beta \chi_{T_n}(z) \big|^{2}.
\]
To bound the derivatives on the right hand side, we apply Lemma~\ref{BRMethod}.  Using this lemma for $\kappa = 2m$, for any $z$ with $\|z\|\leq C(\nu_{\min}(\tilde{\sigma}_n), \nu_{\max}(\tilde{\sigma}_n)) M_{2m}(\tilde{\sigma}_n)^{-1/(2m-2)} \sqrt n$, we can write
\begin{align}
    \big|\partial_z^\alpha\partial_{\bar z}^\beta \chi_{T_n}(z) \big| &\leq \sum_{r=1}^{2m-2}  n^{-\frac r2}\Big|\partial_{z}^\alpha  \partial_{\bar z}^\beta \Big( \chi_{\tilde \sigma_{n, G}}(z)  E_{\tilde \sigma_n, r} (z) \Big)\Big| \nonumber\\
    &\qquad + n^{-\frac{2m-2}{2}} M_{n, 2m} L_{\nu_{\min}(\tilde{\sigma}_n), \nu_{\max}(\tilde{\sigma}_n)}(\| z\|) e^{- \frac{\nu_{\min}(\tilde{\sigma}_n)-1}{4} \|z \|^2} \nonumber\\
    &\leq \sum_{r=1}^{2m-2}  n^{-\frac r2}\Big|\partial_{z}^\alpha  \partial_{\bar z}^\beta \Big( \chi_{\tilde \sigma_{n, G}}(z)  E_{\tilde \sigma_n, r} (z) \Big)\Big| + 2n^{-\frac{2m-2}{2}} M_{n, 2m} L_{\nu_{\min}, \nu_{\max}}(\| z\|) e^{- \frac{\nu_{\min}-1}{8} \|z \|^2}, \label{BindEdgeWorth}
\end{align}
where $M_{n, 2m}= M_{2m}\big(\tilde{\sigma}_n\big)$.
Here, to write the second inequality, we use Lemma~\ref{OperatorNormCovarianceMatrix} to conclude that for sufficiently large $n$, $\nu_{\min}(\tilde{\sigma}_n)$ and $\nu_{\max}(\tilde{\sigma}_n)$ are close to $\nu_{\min}$ and $\nu_{\max}$, respectively. Consequently, the polynomial associated with $\tilde \sigma_n$ is bounded by $2L_{\nu_{\min}, \nu_{\max}}(\|z\|)$. Moreover, we have $2 (\nu_{\min}(\tilde{\sigma}_n)-1 ) \geq (\nu_{\min}-1)$. By a similar argument, the constant $C$ associated with $\tilde \sigma_n$ is at least $C/2$ for sufficiently large $n$. As a result, inequality~\eqref{BindEdgeWorth} holds for $\|z\| \leq \epsilon_n \sqrt{n}$.

Next, using Lemma~\ref{EdgeWorth-BindDerivatives} and Lemma~\ref{OperatorNormCovarianceMatrix}, we find that for sufficiently large $n$, we have
\begin{equation}\label{Aux2-bindEdge}
	\Big|\partial_{z}^\alpha  \partial_{\bar z}^\beta \Big( \chi_{\tilde \sigma_{n, G}}(z)  E_{\tilde \sigma_n, r} (z) \Big)\Big| \leq    M_{n, r+2}  Q(\|z\|) \chi_{\tilde \sigma_{n, G}}(z),
\end{equation}
where, as before, $M_{n, r+2} = M_{r+2}\big(\tilde{\sigma}_n\big)$ and $ Q(\cdot)$ is some polynomial whose coefficients depend only on the covariance matrix of $\rho$. By Lemma~\ref{lem:BindMoments-Truncation}, for $s =3$, we can write 
\begin{align}\label{BindMoments-Truncation}
M_{n, r} = M_r(\tilde \sigma_n)= \mathcal{O}\big(n^{(r-3)/2}\big),
\end{align}
for any $r \geq 3$. Using this inequality in~\eqref{Aux2-bindEdge}, there exists a constant $C_1$ such that for any $1 \leq r \leq 2m-2$, we have
\[
	n^{-\frac{2m-2}{2}} M_{n, 2m} \leq C_1 n^{-\frac 12},
\]
and
\begin{equation}
	n^{-\frac r2}\Big|\partial_{z}^\alpha  \partial_{\bar z}^\beta \Big( \chi_{\tilde \sigma_{n, G}}(z) (  E_{\tilde \sigma_n,  r} (z)) \Big)\Big| \leq C_1 n^{-\frac 12}  Q(\|z\|)\chi_{\tilde \sigma_{n, G}}.
\end{equation}
Now, invoking all these bounds in~\eqref{BindEdgeWorth}, there exists a polynomial $L_1(\cdot)$, such that for sufficiently large $n$ and any $\|z\|\leq \epsilon_n \sqrt n$, we have,
\[
	\Big|\partial_z^\alpha\partial_{\bar z}^\beta \chi_{T_n}(z) \Big| \leq n^{-\frac 12} L_1(\|z\|)  \chi_{\tilde \sigma_{n, G}} + n^{-\frac 12} L(\| z\|) e^{- \frac{\nu_{\min}-1}{8} \|z \|^2}.
\]
Using this bound in~\eqref{ChiDerivativeGaussian}, taking integration and applying~\eqref{eq:assumption-nu-min-1}, we conclude that
\[
	\int_{\|z\| \leq \epsilon_n \sqrt{n} } \big|\chi_{A^{\dagger} T_n A}(z)\big|^{2} \dd^{2m} z = \mathcal{O}\Big(\frac 1n\Big),
\]
as desired.

\medskip
\noindent
\underline{\emph{Second term in~\eqref{splitInt}:}} For the second term in~\eqref{splitInt}, we use~\cite[Proposition 3]{beigi2023towards} to write
\begin{align}\label{eq:tr-A-rho-A-D} 
\chi_{A^{\dagger}  \tilde{\sigma}_n^{\boxplus n} A}(z)=\tr\Big( A^{\dagger}  \tilde{\sigma}_n^{\boxplus n} A D_z\Big) = \tr\Big(\widetilde A^\dagger \tilde{\sigma}_n^{\otimes n} \widetilde A D_{\frac {z}{\sqrt n}}^{\otimes n}\Big),
\end{align}
where 
$$\widetilde A = \prod_{j=1}^m \frac{\ac_{j,1}+ \cdots +\ac_{j, n} }{\sqrt n}.$$ 
Expanding $\widetilde A$, we find that $\chi_{A^{\dagger}  \tilde{\sigma}_n^{\boxplus n} A}(z)$ can be written in the form
\begin{align}\label{eq:chi-A-z-F-z}
\tr\Big( A^{\dagger}  \tilde{\sigma}_n^{\boxplus n} A D_z\Big) =  \frac{1}{n^m} \tr\Big(\tilde{\sigma}_n D_{\frac{z}{\sqrt n}}\Big)^{n-2m}F(z)= \frac{1}{n^m} \chi_{\tilde{\sigma}_n}\big(z/\sqrt n\big)^{n-2m} F(z),
\end{align}
with
\begin{align}\label{eq:def-F(z)-S}
F(z) = \sum_{k, \ell}  f_{k, \ell}  \, \tr\Big(  S_{k, \ell}  \,   D_{\frac{z}{\sqrt n}}^{\otimes 2m} \Big),\qquad \quad S_{k, \ell} := \bigg(\prod_{j=1}^m \ac_{j, k_j}^\dagger\bigg) \tilde{\sigma}_n^{\otimes 2m} \bigg(\prod_{j=1}^m \ac_{j, \ell_j}\bigg),
\end{align}
where the sum runs over all tuples $k=(k_1, \dots, k_m)$ and $\ell = (\ell_1, \dots, \ell_m)$ satisfying $1 \leq k_j,\ell_{j'} \leq 2m$, and the coefficients $f_{k,\ell}$ are at most $|f_{k, \ell}| =\mathcal O(n^{2m})$. 
By Proposition~\ref{prop:CharSupOrigin}, there exists $\delta_n>0$ such that $\big|\chi_{\sigma_n} \big(z/\sqrt n\big)\big|\leq 1-\delta_n$ for all $\|z\|\geq \epsilon\sqrt n$. Note that $\delta_n$ is determined in terms of the second moments of $\tilde{\sigma}_n$ (see ~\cite[Proposition 15]{CRLimitTheorem} for the exact form of $\delta_n$), and by Lemma~\ref{OperatorNormCovarianceMatrix} for sufficiently large $n$ the second moments of $\tilde \sigma_n$ can be bounded in terms of the second moments of $\rho$.  Hence, there exists a $\delta >0$ such that for sufficiently large $n$, we have $\delta_n \geq \delta$. Thus, for sufficiently large $n$, we have
\begin{align}\label{eq:z-eps-sqrt-n-A-tilde-sigma}
 \int_{\|z\| > \epsilon \sqrt{n} } &\big| \chi_{A^\dagger \tilde{\sigma}_n^{\boxplus n} A} (z)\big|^{2} \dd^{2m} z  \leq  \frac{1}{n^{2m}} (1-\delta)^{2(n-2m)}  \int_{\|z\| > \epsilon \sqrt{n} }   \big| F(z)\big|^2 \dd^{2m} z.
\end{align} 
In the following we show that the integral on the right hand side grows at most polynomially with $n$. Then, considering the exponentially small factor $(1-\delta)^{2(n-2m)}$, we find that the left hand side is exponentially small as desired.
To bound the integral on the right hand side, we apply the Cauchy--Schwarz inequality to write
\begin{align*}
\int_{\|z\| > \epsilon \sqrt{n} }   \big| F(z)\big|^2 \dd^{2m} z & \leq  (2m)^{2m} \sum_{k, \ell}  f_{k, \ell}^2  \, \int_{\|z\| > \epsilon \sqrt{n} }  \bigg| \tr\bigg(S_{k, \ell}\, D_{\frac{z}{\sqrt n}}^{\otimes 2m} \bigg)\bigg|^2\dd^{2m} z\\
& \leq  (2m)^{2m} \sum_{k, \ell}  f_{k, \ell}^2  \, \int  \big|\chi_{S_{k, \ell}}(z)\big|^2\dd^{2m} z\\
& =  \pi^m(2m)^{2m} \sum_{k, \ell}  f_{k, \ell}^2 \big\|S_{k, \ell}\big\|_2^2,
\end{align*}
where the last line follows from the Plancherel identity~\eqref{eq:Plancherel}.
Next, using Lemma~\ref{lem:tr-A-kappa-bounded}, we observe that
\begin{align}
 \big\|S_{k, \ell}\big\|_2^2 & =\tr\bigg[  \bigg(\prod_{j=1}^m \ac_{j, k_j} \ac_{j, k_j}^\dagger\bigg) \tilde{\sigma}_n^{\otimes 2m} \bigg(\prod_{j=1}^m \ac_{j, \ell_j}  \ac_{j, \ell_j}^\dagger \bigg)   \tilde{\sigma}_n^{\otimes 2m}   \bigg] \nonumber\\
 & \leq \frac 12 \tr\bigg[  \bigg(\prod_{j=1}^m \ac_{j, k_j} \ac_{j, k_j}^\dagger\bigg) \tilde{\sigma}_n^{\otimes 2m} \bigg(\prod_{j=1}^m \ac_{j, k_j} \ac_{j, k_j}^\dagger\bigg)   \tilde{\sigma}_n^{\otimes 2m}   \bigg] \nonumber\\
 &\quad + \frac 12 \tr\bigg[  \bigg(\prod_{j=1}^m \ac_{j, \ell_j}  \ac_{j, \ell_j}^\dagger \bigg)  \tilde{\sigma}_n^{\otimes 2m} \bigg(\prod_{j=1}^m \ac_{j, \ell_j}  \ac_{j, \ell_j}^\dagger \bigg)   \tilde{\sigma}_n^{\otimes 2m}   \bigg] \nonumber\\
 & \leq \frac 12 \tr\bigg[  \bigg(\prod_{j=1}^m \ac_{j, k_j} \ac_{j, k_j}^\dagger\bigg) \tilde{\sigma}_n^{\otimes 2m}     \bigg]^2  + \frac 12 \tr\bigg[  \bigg(\prod_{j=1}^m \ac_{j, \ell_j}  \ac_{j, \ell_j}^\dagger \bigg)  \tilde{\sigma}_n^{\otimes 2m}     \bigg]^2 \nonumber \\
 & = \mathcal O\Big(M_{2m}\big(\tilde \sigma_n^{\otimes 2m}\big)^2\Big).\label{eq:norm-S-k-ell}
\end{align}
Note that by definition it holds that $M_{4m}\big(\tilde \sigma_n^{\otimes 2m}\big) = \mathcal O\big(M_{4m}(\tilde \sigma_n)\big)$. Also, by~\eqref{BindMoments-Truncation}, we have $M_{4m}(\tilde \sigma_n) = \mathcal O\big(n^{(4m-3)/2}\big)$. Moreover, we note that the indices $k, \ell$ are restricted to tuples $k=(k_1, \dots, k_m)$ and $\ell = (\ell_1, \dots, \ell_m)$ satisfying $1 \leq k_j,\ell_{j'} \leq 2m$, and take a constant number of values (depending only on $m$). Thus, substituting these bounds into~\eqref{eq:z-eps-sqrt-n-A-tilde-sigma}, we find that the second term in~\eqref{splitInt} is exponentially small.

\medskip
\noindent
\underline{\emph{Third term in~\eqref{splitInt}:}} 
For this term, we aim to show that 
\begin{equation}\label{ThirdTermSplitInt}
\int_{ \epsilon \sqrt{n} \geq \|z\| > \epsilon_n \sqrt{n} } \big|\chi_{A^{\dagger} \tilde{\sigma}_n^{\boxplus n} A}(z)\big|^{2} \dd^{2m} z 
\end{equation}
is sub-exponentially small. As we will see, this holds mainly because the tail of a Gaussian function is always sub-exponentially small. 

Assuming $\|z\|\leq \epsilon \sqrt n$, we use~\eqref{eq:chi-A-z-F-z} and Lemma~\ref{lem:bound-chi-tilde-sigma} to obtain
\begin{align*}
\big|\chi_{A^{\dagger} \tilde{\sigma}_n^{\boxplus n} A}(z)\big|^{2} & = \frac{1}{n^{2m}} \big|\chi_{\tilde{\sigma}_n}\big(z/\sqrt n\big)\big|^{2n-4m} |F(z)|^2\\
&\leq \frac{1}{n^{2m}}   
|F(z)|^2   e^{4\theta_0 m+2\epsilon \theta_1 \|z\|^2}\chi_{\rho_G}(z)^2.
\end{align*}
Next, recall that $F(z) = \sum_{k, \ell}  f_{k, \ell}  \, \tr\Big(  S_{k, \ell}  \,   D_{\frac{z}{\sqrt n}}^{\otimes 2m} \Big)$, where $|f_{k, \ell}| =\mathcal O(n^{2m})$. Moreover, by the Cauchy--Schwarz inequality and Lemma~\ref{lem:tr-A-kappa-bounded}, we have
\begin{align}
\tr\Big(  S_{k, \ell}  \,   D_{\frac{z}{\sqrt n}}^{\otimes 2m} \Big) & \leq   \big\| S_{k, \ell}\big\|_1 \nonumber \\
&\leq   \tr \bigg(\tilde{\sigma}_n^{\otimes 2m}\prod_{j=1}^m \ac_{j, k_j} \ac_{j, k_j}^\dagger\bigg)^{1/2}\cdot  \tr \bigg(\tilde{\sigma}_n^{\otimes 2m}  \prod_{j=1}^m \ac_{j, \ell_j}\ac_{j, \ell_j}^\dagger\bigg)^{1/2} \nonumber \\
&  = \mathcal O\big(M_{2m}(\tilde \sigma_n^{\otimes 2m})\big) \nonumber\\
&= \mathcal O\big(M_{2m}(\tilde \sigma_n)\big).   \label{eq:F(z)-bound}
\end{align}
Therefore, applying~\eqref{BindMoments-Truncation} we find that  $\frac{1}{n^{2m}}  e^{4\theta_0 m} |F(z)|^2 $ is at most a polynomial of $n$. We then obtain
\begin{align*}
\int_{ \epsilon \sqrt{n} \geq \|z\| > \epsilon_n \sqrt{n} } \big|\chi_{A^{\dagger} \tilde{\sigma}_n^{\boxplus n} A}(z)\big|^{2} \dd^{2m} z  &\leq   \poly(n)\cdot \int_{ \epsilon \sqrt{n} \geq \|z\| > \epsilon_n \sqrt{n} } e^{4\epsilon \theta_1\|z\|^2}  \chi_{\rho_G}(z)^2 \dd^{2m} z  \\ 
&\leq  \poly(n)\cdot \int_{ \epsilon \sqrt{n} \geq \|z\| > \epsilon_n \sqrt{n} } e^{4\epsilon \theta_1\|z\|^2}  \chi_{\rho_G}(z)^2 \dd^{2m} z\\
&\leq \poly(n)\cdot\int_{ \epsilon \sqrt{n} \geq \|z\| > \epsilon_n \sqrt{n} }  e^{-(\nu_{\min} -4\epsilon \theta_1) \|z\|^2} \dd^{2m} z\\
&\leq \poly(n)\cdot \int_{  \|z\| > \epsilon_n \sqrt{n} }  e^{-(\nu_{\min} -4\epsilon \theta_1) \|z\|^2} \dd^{2m} z.
\end{align*}
Now suppose that $\epsilon>0$ is chosen in such a way that $\xi:=\nu_{\min} -4\epsilon \theta_1>0$. Then, we continue 
\begin{align}\label{eq:Gaussian-tail-subexp}
\int_{ \epsilon \sqrt{n} \geq \|z\| > \epsilon_n \sqrt{n} } \big|\chi_{A^{\dagger} \tilde{\sigma}_n^{\boxplus n} A}(z)\big|^{2} \dd^{2m} z &\leq\poly(n)\cdot \int_{  \|z\| > \epsilon_n \sqrt{n} }  e^{-\xi \|z\|^2} \dd^{2m} z\nonumber\\
& \leq \poly(n)\cdot e^{-\frac{1}{2}\xi \epsilon_n^2 n }  \int_{  \|z\| > \epsilon_n \sqrt{n} }  e^{- \frac 12\xi \|z\|^2} \dd^{2m} z\nonumber\\
& \leq  \poly(n)\cdot e^{-\frac{1}{2}\xi \epsilon_n^2 n }  \int e^{- \frac 12\xi \|z\|^2} \dd^{2m} z\nonumber\\
& =\mathcal O\Big(\poly(n)\cdot e^{-\frac{1}{2}\xi \epsilon_n^2 n }\Big). 
\end{align}
Recall that $\epsilon_n = \frac 12C M_{2m}(\tilde{\sigma}_n)^{-1/(2m-2)}$, so according to~\eqref{BindMoments-Truncation},
\begin{align}\label{boundepsilonN}
n \epsilon_n^2 \geq c n^{1  -\frac{2m-3}{2(m-1)}} = c n^{\frac{1}{2(m-1)}}.
\end{align}
Invoking this in the above bound, we find that the third term is indeed sub-exponentially small. 

\medskip
\noindent
\underline{\emph{Fourth term in~\eqref{splitInt}:}} For the fourth term, we aim to show that
\[
	\int_{\|z\| > \epsilon_n \sqrt{n} } \big|\chi_{A^{\dagger} \tilde \sigma_{n,G} A}(z)\big|^{2} \dd^{2m} z
\]
is sub-exponentially small.  The proof of this fact is similar to, and in fact easier than, that of the third term. First, similarly to~\eqref{eq:chi-A-z-F-z} we can write
\begin{align*}
\tr\Big( A^{\dagger}  \tilde{\sigma}_{n, G}^{\boxplus n} A D_z\Big) = \frac{1}{n^m} \chi_{\tilde{\sigma}_{n, G}}\big(z/\sqrt n\big)^{n-2m} F_G(z) = \frac{1}{n^m} \chi_{\tilde{\sigma}_{n, G}}\big(z/\sqrt n\big)^{n-2m} F_G(z),
\end{align*}
where $F_G(z)$ is a function that, similar to the analysis above, can be shown to satisfy
$$|F_G(z)|=\poly(n).$$
Also, in view of Lemma~\ref{OperatorNormCovarianceMatrix},  for sufficiently large $n$ we have
\begin{align*}
\chi_{\tilde{\sigma}_{n, G}}\big(z/\sqrt n\big)^{n-2m} &  = \chi_{\tilde\sigma_{n, G}} (z) \cdot \chi_{\tilde{\sigma}_{n, G}}\big(z/\sqrt n\big)^{-2m}\\
& \leq e^{-\frac 14 \nu_{\min}\|z\|^2} e^{\frac{m}{n} \nu_{\max}\|z|^2}\\
& \leq e^{-\frac 18 \nu_{\min}\|z\|^2},
\end{align*}
where for the last line we assume that $n\geq \frac{4m\nu_{\max}}{\nu_{\min}}$. Putting these together, for sufficiently large $n$ we have
\begin{align*}
\int_{\|z\| > \epsilon_n \sqrt{n} } \big|\chi_{A^{\dagger} \tilde \sigma_{n,G} A}(z)\big|^{2} \dd^{2m} z&\leq \poly(n)\cdot \int_{\|z\| > \epsilon_n \sqrt{n} } e^{-\frac 18 \nu_{\min}\|z\|^2},
\end{align*}
which, applying the same trick as in~\eqref{eq:Gaussian-tail-subexp} and using~\eqref{boundepsilonN}, can be shown to be sub-exponentially small.

\section{Proof of Theorem~\ref{MainTheorem}}\label{proofMain}

First of all, by the same argument as in the proof of Theorem~\ref{MainTheoremTrace} we assume that $\rho$ is in Williamson's form and 
\begin{align}\label{eq:assumption-nu-min-2}
\nu_{\min}= \min_{j} \nu_j>1,
\end{align}
which equivalently means that $\boldsymbol{\gamma}(\rho) + i\Omega_m$ is full-rank. We also assume that $m>1$ and will later comment on the case of $m=1$. 

Let $\sigma_n$, $\tilde{\sigma}_n$, and $\tilde \sigma_{n,G}$ be the same operators as defined in Section~\ref{Sec:State-Truncation}. Applying triangle's inequality, we have
\begin{align*}
	\big\| \rho^{\boxplus n} - \tilde \sigma_n^{\boxplus n}\big\|_1 
	& \leq  \big\| \rho^{\boxplus n} -  \sigma_n^{\boxplus n}\big\|_1 +\big\| \sigma_n^{\boxplus n} - \sigma_{n, G}\big\|_1 +\big\| \sigma_{n, G} - \tilde \sigma_{n, G}\big\|_1 +  \big\| \tilde \sigma_n^{\boxplus n} - \tilde \sigma_{n, G}\big\|_1\\
	&= \big\| \rho^{\boxplus n} -  \sigma_n^{\boxplus n}\big\|_1  +\big\| \sigma_{n, G} - \tilde \sigma_{n, G}\big\|_1 +  2\big\| \tilde \sigma_n^{\boxplus n} - \tilde \sigma_{n, G}\big\|_1.
\end{align*}
Then, using Lemma~\ref{Aux1} with $s=4+\delta$, we find that the first term is bounded by $\mathcal{O}\big(n^{-(2+\delta)/2}\big)$. Also, applying similar computations as in Claim~\ref{claim:II} and Claim~\ref{claim:III} in Section~\ref{Sec:Proof-Handle}, but with the stronger condition of finiteness of moments of order $4+\delta$, we find that the other two terms are also bounded by $\mathcal{O}\big(n^{-(2+\delta)/2}\big)$. Therefore, 
\begin{equation}\label{eq:trunc_Conv_Ent}
	\big\| \rho^{\boxplus n} - \tilde \sigma_n^{\boxplus n}\big\|_1=\mathcal{O}\Big(\frac{1}{n^{(2+\delta)/2}}\Big).
\end{equation}
Moreover, using Lemma~\ref{OperatorNormCovarianceMatrix} with $s=4+\delta$, we have
\begin{equation}\label{eq:Stronger_Gauss_Ent}
	\big\| \boldsymbol{\gamma}(\rho) - \boldsymbol{\gamma}(\tilde \sigma_{n})\big\| =\big\| \boldsymbol{\gamma}(\rho_G) - \boldsymbol{\gamma}(\tilde \sigma_{n,G})\big\| = \mathcal{O}\Big(\frac{1}{n^{(2+\delta)/2}}\Big).
\end{equation}
Then, writing our argument of Claim~\ref{claim:I} in Section~\ref{Sec:Proof-Handle}, and using~\eqref{eq:Stronger_Gauss_Ent} in~\eqref{diverg_point_Ent1} and~\eqref{diverg_point_Ent2}, it is readily verified that
\begin{equation}\label{eq:trunc_Gauss_Ent}
\big\| \rho_G - \tilde \sigma_{n,G} \big\|_{1}  =\mathcal{O}\Big(\frac{1}{n^{(2+\delta)/2}}\Big).
\end{equation}
We use~\eqref{eq:trunc_Conv_Ent} and~\eqref{eq:trunc_Gauss_Ent} to approximate the quantum relative entropy $D\big(\rho^{\boxplus n}\big\| \rho_G\big)$ with $D\big(\tilde{\sigma}_n^{\boxplus n}\big\| \tilde \sigma_{n,G} \big)$. 
Note that $\rho^{\boxplus n}$ is a centered state whose Gaussification equals $\rho_G$. Then, $\tr\big(\rho^{\boxplus n} \ln \rho_G\big)$ depends only on the covariance matrix of $\rho^{\boxplus n}$, which is equal to that of $\rho_G$. As a result, we have 
\begin{align}
	D\big(\rho^{\boxplus n}\big\| \rho_G\big) = S(\rho_G) - S(\rho^{\boxplus n}), \quad\qquad D\big(\tilde{\sigma}_n^{\boxplus n}\big\| \tilde \sigma_{n,G}\big) = S(\tilde \sigma_{n,G}) - S(\tilde{\sigma}_n^{\boxplus n}),
\end{align}
where the second equality follows by the same argument. 
Therefore, the uniform continuity bound for the von Neumann entropy~\cite[Lemma 18]{winter_tight_2016} along with ~\eqref{eq:trunc_Conv_Ent} and~\eqref{eq:trunc_Gauss_Ent} yield
\[
	\Big|D\big(\rho^{\boxplus n}\big\| \rho_G\big) -  D\big(\tilde{\sigma}_n^{\boxplus n}\big\| \tilde \sigma_{n,G}\big)\Big|\leq \Big|S\big(\rho^{\boxplus n}\big) -  S\big(\tilde{\sigma}_n^{\boxplus n}\big)\Big| +  \Big| S(\rho_G) - S (\tilde \sigma_{n,G})\Big| = \mathcal O\Big( \frac{\ln n}{n^{(2+\delta)/2}}\Big).
	\]
Thus, to prove the desired convergence rate~\eqref{AsympBoundEntropy}, it is sufficient to show that
\begin{equation}\label{eq:Ent_Rate_Trunc}
		D\big(\tilde{\sigma}_n^{\boxplus n}\big\| \tilde \sigma_{n, G} \big) = \mathcal{O}\Big(\frac{1}{n}\Big) \quad \text{as} \quad n\rightarrow\infty.
\end{equation}

\begin{figure}[t!]
\label{fig:thm2}
\begin{center}
\begin{tikzpicture}
  [transition/.style={rectangle, rounded corners,draw=black!50,fill=black!20,
                      inner sep=3pt,minimum size=5mm, align=center}, point/.style={circle,inner sep=0pt,minimum size=0pt,fill=black}]
  \node[transition] (theorem 2)  {\small Theorem \ref{MainTheorem}};
   \node[transition] (continuity)  [below=4mm of theorem 1]  {\small Continuity of the von Neumann entropy \cite[Lemma 18]{winter_tight_2016}};
  \node[transition] (Dtilde) [below=8mm of continuity] {\small $D\big(\tilde{\sigma}_n^{\boxplus n}\big\| \tilde \sigma_{n, G} \big) = \mathcal{O}\Big(\frac{1}{n}\Big)$};
   \node[transition] (proof Dtilde) [below=8mm of Dtilde.center] {\small Theorem \ref{boundEntropy-Char} reduces \\  \small the problem to \eqref{eq:tilde-sigma-n-tau-1/n}};
  \node[transition] (proof norm11) [left=of proof Dtilde] {\small Similar step as in the\\ \small proof of Theorem \ref{MainTheoremTrace} };
    \node[transition] (proof norm12) [right=of proof Dtilde] {\small Lemma \ref{OperatorNormCovarianceMatrix} \& similar steps \\ \small as in the proof of Claim \ref{claim:I} };

  \node[transition] (norm12) [above=8mm of proof norm12.center] {\small $\big\| \rho_G - \tilde \sigma_{n,G} \big\|_{1}  =\mathcal{O}\Big(\frac{1}{n^{(2+\delta)/2}}\Big)$};
  \node[transition] (norm11) [above=8mm of proof norm11.center] {\small $\big\| \rho^{\boxplus n} - \tilde \sigma_n^{\boxplus n}\big\|_1=\mathcal{O}\Big(\frac{1}{n^{(2+\delta)/2}}\Big)$};
  
  \node[point] (dot L1)[above=4mm of norm12]  {};
  \node[point] (dot R1)[above=4mm of norm11]  {};

  \path [line width=0.5pt] (theorem 2) edge[->] (continuity);
  \path [line width=0.5pt] (continuity) edge[->] (Dtilde);
  \draw [line width=0.5pt] (dot R1) -- (dot L1);
  \path [line width=0.5pt] (dot R1)   edge[->] (norm11);
  \path [line width=0.5pt] (dot L1)   edge[->] (norm12);
  \path [line width=0.5pt] (norm11)   edge[->] (proof norm11);
  \path [line width=0.5pt] (Dtilde)   edge[->] (proof Dtilde);
  \path [line width=0.5pt] (norm12)   edge[->] (proof norm12);
  
  \node[transition] (split) [below=6mm of proof Dtilde] {\small Split the integral into regions \\ \small as in the proof of Claim \ref{claim:III} };
  \node[transition] (proof split) [left=4mm of split] {\small Lemma \ref{BRMethod}, Lemma \ref{EdgeWorth-BindDerivatives}, \\ \small Lemma \ref{OperatorNormCovarianceMatrix}, Lemma \ref{lem:BindMoments-Truncation}};
  
  \path [line width=0.5pt] (proof Dtilde)   edge[->] (split);  
  \path [line width=0.5pt] (split)   edge[->] (proof split);

\end{tikzpicture}
\end{center}
\caption{To prove Theorem \ref{MainTheorem}, using the continuity of the entropy function we reduce the problem to proving $D\big(\tilde{\sigma}_n^{\boxplus n}\big\| \tilde \sigma_{n, G} \big) = \mathcal{O}\big({n}^{-1}\big)$. To apply this continuity bound we need $\big\| \rho^{\boxplus n} - \tilde \sigma_n^{\boxplus n}\big\|_1=\mathcal{O}\big(n^{-(2+\delta)/2}\big)$ and $\big\| \rho_G - \tilde \sigma_{n,G} \big\|_{1}  =\mathcal{O}\big(n^{-(2+\delta)/2}\big)$. Next, we use Theorem \ref{boundEntropy-Char} to reduce the bound on $D\big(\tilde{\sigma}_n^{\boxplus n}\big\| \tilde \sigma_{n, G} \big)$ to~\eqref{eq:tilde-sigma-n-tau-1/n}. Proof of~\eqref{eq:tilde-sigma-n-tau-1/n} is similar to that of Claim~\ref{claim:III} and is based on splitting the resulting integral.}
\end{figure}

We employ Theorem~\ref{boundEntropy-Char} to prove the above inequality. Notably, since we have assumed that $\nu_j>1$, it follows from~\eqref{eq:trunc_Gauss_Ent} that the covariance matrix $\boldsymbol{\gamma}(\tilde \sigma_{n, G}) + i \Omega_m$ is invertible for sufficiently large $n$. Consequently, $\tilde \sigma_{n, G}$ as a centered Gaussian state satisfies the hypothesis of Theorem~\ref{boundEntropy-Char}.

Let $E_n(z) = E_{\tilde{\sigma}_n, 1}(z)$ be the polynomial appearing in the Edgeworth-type expansion~\eqref{eq:expansion-ET} for the state $\tilde{\sigma}_n$, and recall that all terms in $E_n(z)$ have degree $3$. Also, since $\tilde{\sigma}_n$ is self-adjoint, $E_n(-z) = \overline{E_n(z)}$. Thus, the polynomial $E_n(z)$ satisfies the hypothesis of Theorem~\ref{boundEntropy-Char}. Therefore, letting
$$\alpha = \frac{1}{\sqrt{n}},$$ 
in view of Theorem~\ref{boundEntropy-Char}, it suffices to verify that
\begin{equation}\label{eq:tilde-sigma-n-tau-1/n}
	\Big\|  \big(\tilde{\sigma}_n^{\boxplus n} - \tau_{n, \alpha}\big) (N_m+m)^{(m+3)/2} \Big\|_2 = \mathcal{O}\Big(\frac{1}{n}\Big),
\end{equation}
where $\tau_n := \tilde \sigma_{n, G}$, and $\tau_{n, \alpha}$ is given by
\[
	\chi_{\tau_{n,\alpha}}(z) = \chi_{\tau_n}(z) \big( 1 + \alpha E_n (z) \big).
\]

 We put $T_n := \tilde{\sigma}_n^{\boxplus n} - \tau_{n,\alpha}$ and compute
\begin{align*}
\Big\|  \big( \tilde{\sigma}_n^{\boxplus n} - \tau_{n,\alpha} \big) (N_m+m)^{(m+3)/2} \Big\|_2^2 
& =  \tr\Big(T_n^2 \big(\bfa_1^\dagger\bfa_1+ \cdots+ \bfa^\dagger_m\bfa_m+ m\big)^{m+3}\Big)\\
& = \sum_{\mu, \nu} b_{\mu, \nu} \tr\Big(T_n^2  \big(\bfa_1^\dagger\big)^{\mu_1} \bfa_1^{\nu_1}  \cdots \big(\bfa_m^\dagger\big)^{\mu_m}\bfa_m^{\nu_m}\Big)\\
& = \sum_{\mu, \nu} |b_{\mu, \nu}| \cdot \Big|\tr\Big(T_n^2  \big(\bfa_1^\dagger\big)^{\mu_1} \bfa_1^{\nu_1}  \cdots \big(\bfa_m^\dagger\big)^{\mu_m}\bfa_m^{\nu_m}\Big)\Big|,
\end{align*} 
where the sum is over all tuples $\mu, \nu$ of non-negative integers satisfying $\sum_j (\mu_j+\nu_j)\leq 2(m+3)$, and $b_{\mu, \nu}$'s are some constants. We note that there are a constant number (depending only on $m$) of such indices $\mu, \nu$. Thus, it suffices to verify that 
$$\Big|\tr\Big(T_n^2  A_1\cdots A_{2(m+3)}\Big)\Big| = \mathcal O\Big(\frac{1}{n^2}\Big),$$
for any set of operators $A_1, \dots, A_{2(m+3)} \in \big\{ \bfa_1, \bfa_1^\dagger, \dots, \bfa_m, \bfa_m^\dagger, \mathbb I \big\}$. To this end, we apply the Cauchy--Schwarz inequality, yielding
\begin{align*}
\Big|\tr\Big(T_n^2  A_1\cdots A_{2(m+3)}\Big)\Big| &\leq \big\|T_n A_1\cdots A_{m+3}\big\|_2\cdot  \big\|T_n A_{m+4}\cdots A_{2(m+3)}\big\|_2.
\end{align*}
Thus, it suffices to demonstrate that 
$$\big\|T_n A_1\cdots A_{m+3}\big\|_2^2= \mathcal O\Big(\frac{1}{n^2}\Big),$$
for any set of operators $A_1, \dots, A_{m+3} \in \big\{ \bfa_1, \bfa_1^\dagger, \dots, \bfa_m, \bfa_m^\dagger, \mathbb I \big\}$.

Let $A := A_1\cdots A_{m+3}$ and 
$$\epsilon_n = \frac 12 C M_{m+3}(\tilde{\sigma}_n)^{-1/(m+1)},$$ 
where $C=C(\nu_{\min}, \nu_{\max})>0$ is the constant in Lemma~\ref{BRMethod} associated with the state $\rho$. Also, let $\epsilon\geq \epsilon_n$ be a constant to be determined. By using Plancherel's identity, we can write
\begin{align}
	 \big\|T_n A\big\|_2^2  &= \int \big|\chi_{T_n A}(z)\big|^{2} \dd^{2m} z \nonumber \\
	 & =   \int_{\|z\| \leq \epsilon_n \sqrt{n} } \big|\chi_{T_n A}(z)\big|^{2} \dd^{2m} z + \int_{\|z\| > \epsilon_n \sqrt{n} } \big|\chi_{T_n A}(z)\big|^{2} \dd^{2m} z \nonumber \\
	 & \leq \int_{\|z\| \leq \epsilon_n \sqrt{n} } \big|\chi_{T_n A}(z)\big|^{2} \dd^{2m} z  + 2 \int_{\|z\| > \epsilon_n \sqrt{n} } \big|\chi_{ \tilde{\sigma}_n^{\boxplus n} A}(z)\big|^{2} \dd^{2m} z  \nonumber\\
	 & \quad + 2 \int_{\|z\| > \epsilon_n \sqrt{n} } \big|\chi_{ \tau_{n,\alpha} A}(z)\big|^{2} \dd^{2m} z \nonumber \\
	 & = \int_{\|z\| \leq \epsilon_n \sqrt{n} } \big|\chi_{ T_n A}(z)\big|^{2} \dd^{2m} z + 2 \int_{  \|z\| > \epsilon \sqrt{n}  } \big|\chi_{\tilde{\sigma}_n^{\boxplus n} A}(z)\big|^{2} \dd^{2m} z \nonumber\\
	  & \quad + 2 \int_{ \epsilon \sqrt{n} \geq \|z\| > \epsilon_n \sqrt{n} } \big|\chi_{ \tilde{\sigma}_n^{\boxplus n} A}(z)\big|^{2} \dd^{2m} z + 2 \int_{\|z\| > \epsilon_n \sqrt{n} } \big|\chi_{ \tau_{n,\alpha} A}(z)\big|^{2} \dd^{2m} z.  \label{splitInt_Ent}
\end{align}
In the following, we will show that the first term in~\eqref{splitInt_Ent} is of order $\mathcal{O}(\frac{1}{n^2})$ while the other terms are either exponentially or sub-exponentially small. We employ similar ideas as those in the proof of Claim~\ref{claim:III} in the proof of Theorem~\ref{MainTheoremTrace}. Since the computations are quite similar to those in the proof of Theorem~\ref{MainTheoremTrace}, we will provide a high-level overview of the proof of these claims and omit the details.

For the first term, we can use~\eqref{eq:characterisitc-derivative-0} and~\eqref{eq:characterisitc-derivative-1} to express $\chi_{T_n A}(z)$ in terms of the derivatives of $\chi_{T_n}(z)$ up to order at most $m+3$ times powers of $z_1, \overline{z_1}, \dots, z_m, \overline{z_m}$. Next, applying Lemma~\ref{BRMethod} for $\kappa=m+3$, we find that for any such derivative we have 
\begin{equation}\label{BindEdgeWorth_Ent}
	\big|\partial_z^\alpha\partial_{\bar z}^\beta \chi_{T_n}(z) \big| \leq \sum_{r=2}^{m+1}  n^{-\frac r2}\Big|\partial_{z}^\alpha  \partial_{\bar z}^\beta \Big( \chi_{\tilde \sigma_{n, G}}(z)  E_{\tilde \sigma_n, r} (z) \Big)\Big| + 2n^{-\frac{m+1}{2}} M_{n, m+3} L_{\nu_{\min}, \nu_{\max}}(\| z\|) e^{- \frac{\nu_{\min}-1}{4} \|z \|^2},
\end{equation}
where $M_{n, r+2} = M_{r+2}\big(\tilde{\sigma}_n\big)$. Also, using Lemma~\ref{EdgeWorth-BindDerivatives} and~\eqref{eq:Stronger_Gauss_Ent}, for sufficiently large $n$ we have
\begin{equation}\label{Aux2-bindEdge_Ent}
	\Big|\partial_{z}^\alpha  \partial_{\bar z}^\beta \Big( \chi_{\tilde \sigma_{n, G}}(z)  E_{\tilde \sigma_n, r} (z) \Big)\Big| \leq    M_{n, r+2}  Q(\|z\|) \chi_{\tilde \sigma_{n, G}}(z),
\end{equation}
where $ Q(\cdot)$ is a polynomial with coefficients depending only on the covariance matrix of $\rho$. Now, using Lemma~\ref{lem:BindMoments-Truncation} with $s=4+\delta$, it is verified that there is a constant $C_1$ such that 
\[
	n^{-\frac{m+1}{2}} M_{n, m+3} \leq C_1 n^{-\frac {(2+\delta)}{2}},
\]
and for any $3 \leq r \leq m+1$,
\begin{equation}
	n^{-\frac r2}\Big|\partial_{z}^\alpha  \partial_{\bar z}^\beta \Big( \chi_{\tilde \sigma_{n, G}}(z) (  E_{\tilde \sigma_n,  r} (z)) \Big)\Big| \leq C_1 n^{-\frac {(2+\delta)}{2}}  Q(\|z\|)\chi_{\tilde \sigma_{n, G}}(z).
\end{equation}
Additionally, by the same lemma, we have \( M_{4, n} = \mathcal{O}(M_4(\rho)) \). Therefore, for \( r = 2 \), it holds that
\[
n^{-\frac{r}{2}} \left| \partial_{z}^\alpha  \partial_{\bar z}^\beta \left( \chi_{\tilde \sigma_{n, G}}(z) E_{\tilde \sigma_n, r}(z) \right) \right| \leq C_1' n^{-1} M_{4}(\rho) Q(\|z\|) \chi_{\tilde \sigma_{n, G}}(z),
\]
for some constant \( C_1' > 0 \).

Putting all these bounds in~\eqref{BindEdgeWorth_Ent}, once again there exist polynomials $L_1(\cdot)$ and $L_2(\cdot)$ such that for sufficiently large $n$ and any $\|z\|\leq \epsilon_n \sqrt n$, we have
\[
	\Big|\partial_z^\alpha\partial_{\bar z}^\beta \chi_{T_n}(z) \Big| \leq n^{-1} L_1(\|z\|)  \chi_{\tilde \sigma_{n, G}}(z) + n^{-\frac{(2+\delta)}2} L_2(\|z\|)  \chi_{\tilde \sigma_{n, G}}(z) + n^{-\frac {(2+\delta)}{2}} L(\| z\|) e^{- \frac{\nu_{\min}-1}{4} \|z \|^2}.
\]
By applying this bound for the derivatives of \( \chi_{T_n}(z) \), and integrating while utilizing \( \eqref{eq:assumption-nu-min-2} \), we find that \( \nu_{\min}(\tilde \sigma_n) = \nu_{\min}(\tilde \sigma_{n, G}) > 1 \) for sufficiently large \( n \), and we arrive at
\[
	\int_{\|z\| \leq \epsilon_n \sqrt{n} } \big|\chi_{ T_n A}(z)\big|^{2} \dd^{2m} z = \mathcal{O}\Big(\frac{1}{n^2}\Big).
\]

For the second and third terms in \( \eqref{splitInt_Ent} \), we can apply the same argument used to address the second and third terms in Claim~\ref{claim:III} within the proof of Theorem~\ref{MainTheoremTrace}, demonstrating that these terms are exponentially and sub-exponentially small, respectively.

For the fourth term in \( \eqref{splitInt_Ent} \), we first express \( \chi_{\tau_{n,\alpha} A}(z) \) in terms of the derivatives of \( \chi_{\tau_{n,\alpha}}(z) = \chi_{\tau_n}(z) \big( 1 + \alpha E_n (z) \big) \) as discussed earlier. Since \( \chi_{\tau_n}(z) = \chi_{\tilde \sigma_{n, G}}(z) \) is Gaussian, it follows that \( \chi_{\tau_{n,\alpha} A}(z) \) is equal to \( \chi_{\tilde \sigma_{n, G}}(z) \) multiplied by a polynomial. Consequently, we can bound 
\[
\int_{\|z\| > \epsilon_n \sqrt{n}} \big|\chi_{\tau_{n,\alpha} A}(z)\big|^{2} \dd^{2m} z
\]
by the tail of a Gaussian function. Utilizing the same argument we employed for the fourth term in Claim~\ref{claim:III} within the proof of Theorem~\ref{MainTheoremTrace}, we can conclude that this Gaussian tail is indeed sub-exponentially small.

In the case of \( m=1 \), we do not need the truncated states \( \sigma_n \), and the above proof can be directly framed in terms of \( \rho \). The key point is that to write \( \eqref{BindEdgeWorth_Ent} \), we need the finiteness of moments up to order \( m+3 \). This necessitates the use of the approximating states \( \sigma_n \). For \( m=1 \), under the given assumptions, the moment of order \( m+3 = 4 \) is finite, allowing \( \eqref{BindEdgeWorth_Ent} \) to be directly applied to \( \rho \). In this scenario, \( \epsilon_n \) is independent of \( n \) and is a constant, resulting in the omission of the third term in \( \eqref{splitInt_Ent} \). The rest of the proof closely follows that of the multi-mode case.

\section{Minimality of the assumptions} \label{secMinAssumption}

In this section, we prove Theorem~\ref{thm:examples} and provide examples demonstrating that the moment assumptions in Theorem~\ref{MainTheoremTrace} and Theorem~\ref{MainTheorem} are essentially minimal. To construct these examples we employ ideas and techniques from the literature in the classical case. In particular, we use~\cite{ibragimov1975independent} to prove the first part of the theorem and employ ideas in~\cite{bobkov2013rate} for the second part.

The constructions of the example for both parts are similar and are based on classical mixtures of thermal states. Thus, before discussing each case in detail, we present the construction in a more general form. This general setup will then be used to develop our specific examples for trace distance and relative entropy in subsequent subsections. 

Slightly deviating from our notation in previous sections, for any $s \geq 1/2$, we let  
\[
\tau_{s} = (1 - e^{-\beta_s}) e^{-\beta_s \ac^\dagger \ac}, \qquad \qquad\beta_s = \ln \frac{4s^2 + 1}{4 s^2  -1},
\]
be a thermal state.\footnote{We let $\tau_{1/2}=|0\rangle\!\langle 0|$ be the vacuum state. }
We also let $w(s)$ be a density function on $[1/2, \infty)$ satisfying 
\[
\int s^2 w(s) \dd s = 1.
\]
We then define $\rho$ as a classical mixture of thermal states, according to the density function $w(s)$:
\begin{equation}\label{eq:DefExam}
    \rho := \int_{1/2}^{+\infty} w(s) \tau_s \dd s.
\end{equation}
First, we note that since each thermal state $\tau_s$ is centered, $\rho$ is a centered quantum state. Moreover, regarding the second-order moments of $\rho$, we have
\begin{align}
    \tr (\rho \ac^\dagger \ac) &= \int_{1/2}^{+\infty} w(s) \tr (\tau_s \ac^\dagger \ac) \dd s \nonumber\\
    &= \int_{1/2}^{+\infty}  \Big(2 s^2 - \frac{1}{2}\Big) w(s) \dd s = 2 - \frac 12.\label{eq:example-second-order-m}
\end{align}
Thus, the Gaussification of $\rho$ is equal to $\rho_G=\tau_1$. To construct our desired examples, we aim to identify suitable candidates for the density function $w(s)$. To this end, as a crucial step in our arguments, we establish a bound on the moments of $\rho$ in terms of those of $w(s)$.

\begin{lemma}
    Let $\rho$ be the single-mode bosonic quantum state defined in~\eqref{eq:DefExam}. If $\int_{1/2}^{+\infty} s^\kappa w(s) \dd s < +\infty$ for some $\kappa \geq 2$, then $\rho$ has finite moments of order $\kappa$. 
\end{lemma}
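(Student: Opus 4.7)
The plan is to pass from a bound on the moments of the mixture to a bound on the moments of each component thermal state, and then integrate against $w(s)$. Concretely, since $w(s) \geq 0$ and $\tau_s \geq 0$, Tonelli's theorem gives
\[
M_\kappa(\rho) \;=\; \tr\bigl(\rho (N_1+1)^{\kappa/2}\bigr) \;=\; \int_{1/2}^{+\infty} w(s)\, M_\kappa(\tau_s)\, \dd s,
\]
so everything reduces to showing $M_\kappa(\tau_s) = \mathcal{O}(s^\kappa)$ as $s\to\infty$, uniformly in $s\in[1/2,\infty)$ in the sense that there is a constant $C_\kappa$ with $M_\kappa(\tau_s)\leq C_\kappa(1+s^\kappa)$.

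To produce this pointwise-in-$s$ bound, I would first handle the case of an even integer $\kappa = 2r$, then reduce general $\kappa$ to this case. For integer $r$, the thermal state $\tau_s$ is diagonal in the Fock basis with eigenvalues $(1-p_s)p_s^k$, where $p_s = e^{-\beta_s} = \frac{4s^2-1}{4s^2+1}$, so that $1-p_s = \tfrac{2}{4s^2+1}$ and the mean photon number is $\bar n_s = \tfrac{p_s}{1-p_s} = 2s^2-\tfrac12$. A direct computation with the geometric generating function $\sum_{k\ge 0} p_s^k = \frac{1}{1-p_s}$, differentiated $r$ times, expresses
\[
M_{2r}(\tau_s) \;=\; (1-p_s)\sum_{k=0}^\infty p_s^k (k+1)^r
\]
as a polynomial in $\bar n_s$ of degree $r$ with coefficients depending only on $r$. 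Hence $M_{2r}(\tau_s) = \mathcal{O}(\bar n_s^{\,r}) = \mathcal{O}(s^{2r})$. For arbitrary $\kappa\ge 2$, pick $r = \lceil \kappa/2\rceil$; Lemma~\ref{lem:moments-monotone} then yields
\[
M_\kappa(\tau_s) \;\leq\; M_{2r}(\tau_s)^{\kappa/(2r)} \;=\; \mathcal{O}(s^\kappa).
\]

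Plugging this pointwise bound back into the Tonelli identity,
\[
M_\kappa(\rho) \;\leq\; C_\kappa \int_{1/2}^{+\infty}\bigl(1+s^\kappa\bigr) w(s)\, \dd s,
\]
which is finite because $\int w\,\dd s = 1$ (since $\int s^2 w\,\dd s = 1$ and the support is $[1/2,\infty)$, one gets $\int w\,\dd s \leq 4$) and $\int s^\kappa w(s)\,\dd s<\infty$ by assumption. This completes the proof.

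The only genuinely technical step is the explicit polynomial bound on $M_{2r}(\tau_s)$; I do not anticipate an obstacle there since it is a standard computation for geometric distributions, and one could alternatively get it by noting that $\tr(\tau_s (\ac^\dagger)^k \ac^k) = k!\,\bar n_s^k$ and expanding $(N_1+1)^r$ in normally ordered monomials, each of which contributes a polynomial in $\bar n_s$ of degree at most $r$.
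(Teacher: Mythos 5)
Your proposal is correct, and its skeleton matches the paper's: both arguments reduce the claim, via the positivity of $w$ and $\tau_s$ (Tonelli), to the pointwise bound $M_\kappa(\tau_s)=\mathcal{O}(s^\kappa)$ on $[1/2,\infty)$ and then integrate against $w$. Where you genuinely diverge is in how that pointwise bound is produced. The paper treats all real $\kappa\geq 2$ in a single stroke by comparing the Fock-basis sum $(1-e^{-\beta_s})\sum_k (k+1)^{\kappa/2}e^{-\beta_s k}$ with the integral $\int_0^\infty x^{\kappa/2}e^{-\beta_s x}\,\dd x=\Gamma(\kappa/2+1)\beta_s^{-\kappa/2-1}$, and then converting $\beta_s$-dependence into $s$-dependence via $1-e^{-\beta_s}=\frac{2}{4s^2+1}\geq \frac{1}{4s^2}$. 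You instead compute the even-order moments exactly: for $\kappa=2r$ the rising-factorial identities $(1-p)\sum_k\binom{k+r}{r}p^k=(1-p)^{-r}=(1+\bar n_s)^r$ show $M_{2r}(\tau_s)$ is a degree-$r$ polynomial in $\bar n_s=2s^2-\tfrac12$, and you then interpolate to non-integer $\kappa$ via Lemma~\ref{lem:moments-monotone} with $r=\lceil\kappa/2\rceil$, which is legitimate since $M_\kappa(\tau_s)\leq \bigl(C_r(1+s^{2r})\bigr)^{\kappa/(2r)}=\mathcal{O}(1+s^\kappa)$ as $\kappa/(2r)\leq 1$. Both routes are sound; the paper's integral comparison is self-contained and handles arbitrary real order uniformly, while yours buys exact constants and structure for even moments (and a second derivation through normally ordered moments $\tr(\tau_s(\ac^\dagger)^k\ac^k)=k!\,\bar n_s^k$) at the cost of invoking the Araki--Lieb--Thirring monotonicity lemma, which the paper's proof of this particular statement does not need. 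One cosmetic remark: $\int w\,\dd s=1$ holds by definition since $w$ is a density, so your detour through $\int w\,\dd s\leq 4$ is unnecessary, though harmless.
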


\begin{proof}
    Starting with the moment of $\tau_s$,  we can write
    \begin{align*}
        M_\kappa (\tau_s) &= \tr \Big( \tau_s \big(\ac^\dagger \ac +1\big)^{\frac{\kappa}{2}}\Big) \\
        & = (1 - e^{-\beta_s}) \sum_{k=0}^\infty (k+1)^{\kappa/2} e^{-\beta_s k} \\
        &\leq (1 - e^{-\beta_s}) \int_{1}^\infty x^{\kappa/2} e^{-\beta_s (x-2)} \dd x \\
        &\leq e^{2\beta_s} (1 - e^{-\beta_s}) \int_{0}^\infty x^{\kappa/2} e^{-\beta_s x} \dd x \\
        &= \Gamma(\kappa/2+1) e^{2\beta_s} (1 - e^{-\beta_s}) \beta_s^{-\kappa/2 -1} \\
        &\leq \Gamma(\kappa/2+1) e^{2\beta_s} \frac{1}{(1 - e^{-\beta_s})^{\kappa/2}},
    \end{align*}
where $\Gamma(\cdot)$ denotes the gamma function and the last line follows from $1 - e^{-\beta_s} \leq \beta_s$. Now, using the fact that for $s \geq 1/2$, it holds that $1 - e^{-\beta_s} = 1 - \frac{4s^2 - 1}{4 s^2+1} \geq \frac{1}{4s^2}$, we have
    \[
    M_\kappa (\tau_s) \leq \Gamma(\kappa/2+1) \Big(\frac{4s^2 + 1}{4 s^2-1} \Big)^2 (4 s^2)^{\kappa/2}.
    \]
    We therefore have
    
    \begin{align*}
        M_\kappa (\rho) 
        = \int_{1/2}^\infty w(s) M_\kappa (\tau_s) \dd s 
         \leq \max_{1/2\leq s\leq 1} M_\kappa (\tau_s)  + 2^{\kappa+2}\Gamma(\kappa/2+1) \int_{1}^\infty  s^\kappa w(s)  \dd s.
    \end{align*}
    Then, the desired inequality $M_\kappa (\rho)<+\infty$ follows once we note that $M_\kappa (\tau_s)$ is uniformly bounded for $s\in [1/2,1]$.
\end{proof}

The characteristic function of $\rho$ and its Gaussification $\rho_G=\tau_1$ are easily computed in terms of the characteristic function of thermal states:
\begin{equation}\label{eq:CharExample}
    \chi_\rho(z) = \int w(s) e^{-2 s^2 |z|^2} \dd s, \quad\qquad  \chi_{\rho_G}(z) = e^{-2 |z|^2}.
\end{equation}
Then, using the convexity of the exponential function, and the fact that $w(s)$ is supported in $[1/2, +\infty)$, we have
\begin{equation}\label{eq:TriBoundChar}
    e^{-2 |z|^2} \leq \chi_\rho(z) \leq e^{-\frac 12 |z|^2}.
\end{equation}

\subsection{Proof of part (i) of Theorem~\ref{thm:examples}}\label{secMinAssumption-sub1}

Fix $0<\theta< 1$, and let $w(s)$ be a probability distribution on $[1/2, +\infty)$ such that it has a finite moment of order $\kappa$  for any $\kappa < 3-\theta$,  but its moment of order $3 - \theta$ is infinite, meaning that
\begin{equation}\label{eq:unboundedMoment}
    \int |s|^{3 - \theta} w(s) \dd s = +\infty.
\end{equation}
One way of constructing such a distribution is to take
\[
    w(s) = \begin{cases}
\frac{a(3-\theta)}{s^{4 - \theta}} \quad~~~ s\geq 1 \\
\nu(s) \quad~~~ 1 > s \geq 1/2 \\
0  \quad \quad~~~~ \text{otherwise},
    \end{cases}
\]
where the constant $a > 0$ and the bounded function $\nu: [1/2, 1) \to \mathbb{R}^+$ are chosen in such a way that the assumptions $\int w(s) \dd s = \int s^2 w(s) \dd s = 1$ are satisfied. It is straightforward to verify that $w(s)$ has a finite moment of order $\kappa$ whenever $\kappa<3-\theta$, while its moment of order $(3 - \theta)$ is infinite.

Let $\rho$ be the centered, single-mode bosonic quantum state constructed using $w(s)$ according to~\eqref{eq:DefExam}. We show that $\big\| \rho^{\boxplus n} - \rho_G\big\|_1$ cannot be of order $\mathcal{O}\big(1/\sqrt{n}\big)$. To prove this by contradiction, we assume that 
\begin{align}\label{eq:CE-tr-assumption}
\big\| \rho^{\boxplus n} - \rho_G \big\|_1 = \mathcal O\Big(\frac{1}{\sqrt n}\Big).
\end{align}

Define the function $f : \mathbb C \to \mathbb R$ by
\[
f(z) = \begin{cases}
    e^{2 |z|^2} \quad ~~|z| \leq 1, \\
    0  \quad\quad ~~ \text{ otherwise},
\end{cases}
\]
and let $F$ be the operator with characteristic function $\chi_F(z)=f(z)$. We note that by triangle's inequality, $F$ is a bounded operator since
$$
\|F\| = \bigg\|\frac{1}{\pi} \int_{\mathbb{C}} f(z) D_{-z} \dd^{2} z\bigg\| \leq  \frac{1}{\pi} \int_{\mathbb{C}} |f(z)|\dd^{2} z<+\infty.
$$
Moreover, $F$ is Hermitian since $f(z)$ is real and $f(-z)=f(z)$. Therefore, by the Plancherel identity and H\"{o}lder's inequality, we can write
\begin{align*}
    \bigg| \int_{\mathbb C} \Big(\chi_{\rho^{\boxplus n}}(z) - \chi_{\rho_G}(z)\Big) f(z) \dd^2 z\bigg| & = \pi \tr \Big( F \big(\rho^{\boxplus n} - \rho_G\big)\Big) \\
    &\leq \pi \| F\| \cdot \big\| \rho^{\boxplus n} - \rho_G\big\|_1\\
    &  = \mathcal O \Big(\frac{1} {\sqrt n}\Big),
\end{align*}
where in the last line we use the assumption~\eqref{eq:CE-tr-assumption}. 
Define $\eta(z)$ by
$$\chi_\rho(z) = e^{-2 |z|^2 (1+ \eta(z))}.$$ 
Then, the above inequality can equivalently be written as
\[
\bigg| \int_{| z| \leq 1} \Big(e^{-2 |z|^2 \eta(z/\sqrt n)} -1 \Big) \dd^2 z\bigg| = \mathcal O \Big(\frac{1} {\sqrt n}\Big).
\]
Using~\eqref{eq:TriBoundChar}, it is straightforward to see that $\eta(z) \leq 0$ for any $z \in \mathbb C$. Therefore, applying $e^x-1 \geq x$, the above bound yields 
\[
\int_{| z| \leq 1} |z|^2 \cdot \big|\eta(z/\sqrt n)\big|  \dd^2 z = \mathcal O \Big(\frac{1} {\sqrt n}\Big),
\]
or equivalently
\[
 \int_{| z| \leq 1/\sqrt{n}} |z|^2 \cdot |\eta(z)| \dd^2 z = \mathcal O \Big(\frac{1} {n^2 \sqrt n}\Big).
\]
Thus, we can write
\begin{equation}\label{eq:AuxExa1}
     \int_{| z| \leq \epsilon} |z|^2 |\eta(z)| \dd^2 z = \mathcal O (\epsilon^5),\qquad \quad  \text{ as }  \epsilon \to 0.
\end{equation}
By~\eqref{eq:BoundDerivative-Moments} and~\eqref{eq:example-second-order-m}, the characteristic function  $\chi_\rho(z)$ is twice differentiable near the origin. Thus, using the Taylor expansion of $\chi_\rho(\cdot)$  at the origin and  $\chi_{\rho_G}(z) = e^{-2 |z|^2}$ we find that
\[
\chi_\rho(z)= 1 - 2 |z|^2 +  o(|z|^2) = 1 - 2 |z|^2 (1 + o(1)), \qquad \quad \text{ as }  |z| \to 0.
\]
As a result, as $|z|\to 0$  we have
\begin{align*}
2 |z|^2 |\eta(z)| &= \ln\chi_{\rho}(z) +2|z|^2\\
& = \chi_{\rho}(z)-1 + \mathcal O(|\chi_{\rho}(z)-1|^2)+2|z|^2\\
& = \chi_\rho(z) - 1 + 2 |z|^2 + \mathcal{O}(|z|^4).
\end{align*}
Hence, as a consequence of~\eqref{eq:AuxExa1} and the above expression, we have
\begin{equation}\label{eq:AuxExa2}
     \int_{| z| \leq \epsilon} \Big( \chi_\rho(z) - 1 + 2 |z|^2 \Big) \dd^2 z = \mathcal O (\epsilon^5),  \qquad \quad \text{ as } \epsilon\to 0.
\end{equation}
We note that, based on~\eqref{eq:CharExample}, $\chi_\rho(z)$ is a real-valued function that depends only on $|z|$. Thus, we can write
\begin{align*}
    \int_{| z| \leq \epsilon} \chi_\rho(z) \dd^2 z &= 2 \pi \int_{0}^\epsilon r\chi_\rho(r)  \thinspace \dd r \\
    &= 2 \pi \int_{0}^\epsilon \int_{\mathbb C} rW_\rho(u)  e^{-r(u - \bar u)}  \thinspace \dd^2 u\thinspace \dd r \\
    &= 2\pi \int_{0}^\epsilon \int_{-\infty}^{+\infty}\int_{-\infty}^{+\infty} r  W_\rho(y+ix)  \cos(2 r x ) \thinspace \dd y \thinspace \dd x \thinspace \dd r,
\end{align*}
where $W_\rho(\cdot)$ is the Wigner function of $\rho$, and the second line follows from~\eqref{eq:InvWignerFunc}. On the other hand, using~\eqref{eq:DefExam} and~\eqref{eq:Gaussian-characteristic-Wigner}, we have
\[
W_\rho (y + ix ) = \int_{1/2}^{+\infty}  \frac{1}{2 \pi s^2}e^{-\frac{1}{2 s^2} (x^2 + y^2)} w(s)  \dd s.
\]
Thus, we can write
\[
\int_{| z| \leq \epsilon} \chi_\rho(z) \dd^2 z = 2\pi \int_{0}^\epsilon \int_{-\infty}^{+\infty} r  p(x)  \cos(2 r x ) \dd x \thinspace \dd r, \qquad \quad p(x) = \int_{1/2}^{+\infty} w(s) \frac{1}{\sqrt{2 \pi} s} e^{-\frac{x^2}{2 s^2}} \dd s.
\]
Changing the order of the integrals and computing the inner one, we find that
\[
\int_{| z| \leq \epsilon} \chi_\rho(z) \dd^2 z = 2\pi \int_{-\infty}^{+\infty} p(x)  \bigg( \frac{\epsilon}{2x} \sin(2 \epsilon x) + \frac{1}{4 x^2} \big( \cos(2 \epsilon x) - 1 \big)\bigg) \dd x.
\]
Therefore,~\eqref{eq:AuxExa2} implies
\begin{equation}
    \int_{-\infty}^{+\infty} p(x)  \bigg(  \frac{\epsilon}{x} \sin(2 \epsilon x) + \frac{1}{2 x^2} \big( \cos(2 \epsilon x) - 1 \big) - \epsilon ^2 + \epsilon^4\bigg) \dd x = \mathcal{O}(\epsilon^5), \qquad \quad \text{ as } \epsilon \to 0,
\end{equation}
which, using the fact that $\int p(x) x^2 \dd x = \int w(s) s^2 \dd s = 1$, can be written as 
\begin{equation}\label{eq:AuxExa3}
    \int_{-\infty}^{+\infty} p(x)  \bigg(  \frac{\sin(2 \epsilon x)}{2 \epsilon x} + \frac{\cos(2 \epsilon x) - 1}{(2 \epsilon x)^2} - \frac 12 + \frac{(\epsilon x)^2}{2}\bigg) \dd x = \mathcal{O}(\epsilon^3),  \qquad \quad \text{ as } \epsilon \to 0.
\end{equation}
It is not hard to see that the function 
$$h(t) =  \frac{\sin(t)}{t} + \frac{\cos(t) - 1}{t^2} - \frac 12 + \frac{t^2}{8},$$ 
is non-negative for any $t \in \mathbb R$, and there exists a constant $c >0$ such that $h(t) \geq \frac{1}{16} t^2$ for $|t| \geq c$. This is stated in detail in Lemma~\ref{lem:app-function-h(t)} in Appendix~\ref{App:Example}. Hence,~\eqref{eq:AuxExa3} yields
\[
\int_{|\epsilon x| \geq c} (\epsilon x)^2 p(x)   \dd x = \mathcal{O}(\epsilon^3),  \qquad \quad \text{ as } \epsilon \to 0
\]
or equivalently 
\[
\int_{|\epsilon x| \geq 1} x^2 p(x) \dd x \leq C \epsilon, \qquad \quad \forall \epsilon >0,
\]
for some constant $C>0$. The above equation implies that the moment of order $3 - \delta$ of $p(x)$ is finite for any $0 < \delta \leq 1$. To prove this we write
\begin{align*}
    \int_{-\infty}^{+\infty} |x|^{3-\delta} p(x) \dd x &= \int_{|x|\leq 1} |x|^{3-\delta} p(x) \dd x + \sum_{k=0}^\infty \int_{2^k \leq |x| \leq 2^{k+1}} |x|^{3-\delta} p(x) \dd x \\
    & \leq 1 + \sum_{k=0}^\infty 2^{(k+1)(1 - \delta)}\int_{|x| \geq 2^k} |x|^{2} p(x) \dd x \\
    & \leq 1 + C\sum_{k=0}^\infty  2^{(k+1)(1 - \delta)} \cdot 2^{-k} \\
    &< +\infty.
\end{align*}
Now using the fact that $p(x) = \int w(s) \frac{1}{\sqrt{2 \pi} s} e^{-\frac{x^2}{2 s^2}} \dd s$, for some constant $A_{\delta}>0$ we have
\begin{align*}
    \int_{-\infty}^{+\infty} |x|^{3-\delta} p(x) \dd x &=  \int_{-\infty}^{+\infty} |x|^{3-\delta} \int_{1/2}^{+\infty} w(s) \frac{1}{\sqrt{2 \pi} s} e^{-\frac{x^2}{2 s^2}} \dd s \dd x \\
    & = \int_{-1/2}^{+\infty} w(s)  \int_{-\infty}^{+\infty} |x|^{3-\delta} \frac{1}{\sqrt{2 \pi} s} e^{-\frac{x^2}{2 s^2}} \dd x \dd s \\
    & = A_{\delta} \int_{1/2}^{+\infty} w(s) |s|^{3-\delta} \dd s. 
\end{align*}
Therefore, for any $0 < \delta \leq 1$, we have $\int w(s) |s|^{3-\delta} \dd s < +\infty$, which contradicts our initial assumption~\eqref{eq:unboundedMoment}. As a conclusion, 
$$\sup_n \sqrt n\big\| \rho^{\boxplus n} - \rho_G\big\|_1=+\infty.$$

\subsection{Proof of part (ii) of Theorem~\ref{thm:examples} }

We start with the following lemma that is a quantum counterpart of~\cite[Proposition 7.1]{bobkov2013rate} and plays a crucial role in our argument. It provides a fine estimation for $\rho^{\boxplus n}$ in terms of $\rho_G$ and thermal states.

\begin{restatable}{lemma}{primelemma}
\label{lemrestatable}
  Let $w(s)$ be a probability distribution supported on $[1/2, +\infty)$ satisfying $\int s^2 w(s) \dd s = 1$, that has a finite moment of order $(4 - \delta)$ for some $0<\delta <1$. Let $\rho$ the single-mode bosonic quantum state defined in~\eqref{eq:DefExam}. Then, we have
\begin{align}\label{eq:ExaEntropy1}
    \bigg\|\rho^{\boxplus n} -\bigg( \rho_G + n \int_{1/2}^{+\infty} \big(\tau_{s_n} - \tau_1\big) w(s) \dd s \bigg)\bigg\|= \mathcal O\Big(\frac{1}{n^{2-\delta}}\Big),
\end{align}
where 
$$s_n = \sqrt{1+ \frac{s^2-1}{n}},$$ 
and $\|.\|$ is used to indicate the operator norm.
\end{restatable}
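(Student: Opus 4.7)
The plan is to work entirely at the level of characteristic functions. Setting
\[
I_n(z) := \int_{1/2}^{+\infty} w(s)\,e^{-2(s^2-1)|z|^2/n}\,ds,
\]
one has $\chi_\rho(z/\sqrt n) = e^{-2|z|^2/n}\,I_n(z)$, so by~\eqref{eq:char-symmetric}, $\chi_{\rho^{\boxplus n}}(z) = e^{-2|z|^2}\,I_n(z)^n$. The thermal states $\tau_{s_n}$ are designed precisely so that $\chi_{\tau_{s_n}}(z) = e^{-2|z|^2}\,e^{-2(s^2-1)|z|^2/n}$, and since $\rho_G = \tau_1$ (which follows from $\tr(\rho\,\bfa^\dagger\bfa)=\tfrac{3}{2}$ matching $\tr(\tau_1\,\bfa^\dagger\bfa)$), the approximant has characteristic function $e^{-2|z|^2}\bigl[1 + n(I_n(z)-1)\bigr]$. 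Consequently the error operator
\[
T_n := \rho^{\boxplus n} - \rho_G - n\int(\tau_{s_n} - \tau_1)\,w(s)\,\dd s
\]
has characteristic function $\chi_{T_n}(z) = e^{-2|z|^2}\bigl[\,I_n(z)^n - 1 - n(I_n(z)-1)\,\bigr]$, and the standard bound $\|T_n\| \leq \tfrac{1}{\pi}\int_{\mathbb C}|\chi_{T_n}(z)|\,\dd^2 z$ (which follows from $T_n = \tfrac{1}{\pi}\int \chi_{T_n}(z) D_{-z}\,\dd^2z$ and $\|D_{-z}\|=1$) reduces the lemma to showing the integral is $O(n^{-(2-\delta)})$.

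Since $\int w(s)(s^2-1)\,\dd s = 0$, the linear term in $(s^2-1)$ cancels and one obtains the non-negative representation
\[
y(z) := I_n(z)-1 = \int w(s)\,f\!\left(\tfrac{2(s^2-1)|z|^2}{n}\right)\,\dd s \geq 0, \qquad f(t):=e^{-t}-1+t,
\]
so $I_n^n - 1 - n(I_n-1) = \sum_{k\geq 2}\binom{n}{k}y^k \geq 0$ and the absolute value disappears. The workhorse estimate will be the elementary inequality $(1+y)^n - 1 - ny \leq \tfrac{n^2 y^2}{2}\,e^{ny}$.

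Next, I would split the plane into a bulk region $|z|^2 \leq R_n$ with $R_n := n^{(2-\delta)/(4-\delta)}$ and a tail $|z|^2 > R_n$. In the tail, the elementary estimate $\chi_\rho(z) \leq e^{-|z|^2/2}$ (valid since $s\geq 1/2$ throughout the support of $w$) gives $|\chi_{\rho^{\boxplus n}}(z)| \leq e^{-|z|^2/2}$, while the approximant is a finite linear combination of Gaussians of comparable decay, so the tail contributes $O(n\,e^{-R_n/2})$, which is sub-exponentially small in $n$ and hence negligible relative to $n^{-(2-\delta)}$.

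The main obstacle lies in the bulk estimate, where only the $(4-\delta)$-th moment of $w$ is available rather than a full fourth moment. To handle this, I split the $s$-integral defining $y(z)$ at the threshold $s_\star := \sqrt{n/|z|^2}$. For $s \leq s_\star$ the argument of $f$ stays bounded, so the fractional Taylor estimate $f(t) \leq C_\delta |t|^{2-\delta/2}$ (valid on any bounded neighbourhood of $0$) contributes
\[
\int_{1/2}^{s_\star} w(s)\,f(\cdot)\,\dd s \leq C_\delta'\,\frac{|z|^{4-\delta}}{n^{2-\delta/2}}\int w(s)|s^2-1|^{2-\delta/2}\,\dd s.
\]
For $s > s_\star$ I combine the trivial bound $f(t)\leq t$ with the interpolation $s^2 \leq s^{4-\delta}\,(|z|^2/n)^{1-\delta/2}$ valid in that range (which follows from $s^{2-\delta} > (n/|z|^2)^{1-\delta/2}$), yielding a contribution of the same order times $\int s^{4-\delta} w(s)\,\dd s$. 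Combining gives $y(z) \leq C_w\,|z|^{4-\delta}/n^{2-\delta/2}$ on the bulk, with $C_w$ depending only on the $(4-\delta)$-th moment of $w$. The choice of $R_n$ is calibrated so that $n\,y(z)$ is uniformly bounded on the bulk, making $e^{ny}=O(1)$; the workhorse then gives $|\chi_{T_n}(z)| = O\bigl(|z|^{8-2\delta} e^{-2|z|^2}/n^{2-\delta}\bigr)$ there, and integrating yields the desired $O(n^{-(2-\delta)})$. The higher-order terms $\binom{n}{k}y^k$ for $k\geq 3$ contribute strictly smaller orders and are absorbed in the same way.
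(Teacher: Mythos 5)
Your proposal is correct and follows essentially the same route as the paper's proof in Appendix~\ref{App:Example}: identifying $\chi_{T_n}(z)=e^{-2|z|^2}\big[I_n(z)^n-1-n(I_n(z)-1)\big]$, reducing the operator norm to $\frac{1}{\pi}\int|\chi_{T_n}|$, splitting into a bulk and a Gaussian tail, proving the pointwise bound $0\leq I_n(z)-1\lesssim |z|^{4-\delta}/n^{2-\delta/2}$ via an $s$-split at the threshold $\sqrt{n}/|z|$ together with the cancellation $\int(s^2-1)w(s)\,\dd s=0$ and interpolation against the $(4-\delta)$-moment, and finishing with a quadratic binomial-remainder inequality. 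The only deviations are cosmetic parameter choices: you cut the bulk at $|z|^2\leq n^{(2-\delta)/(4-\delta)}$ and use $(1+y)^n-1-ny\leq\frac{n^2y^2}{2}e^{ny}$, whereas the paper cuts at $|z|\leq C\sqrt{\ln n}$ so that $0\leq x\leq 1/n$ and uses $(1+x)^n-(1+nx)\leq 2(nx)^2$, with identical outcomes.
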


The proof of this lemma is analogous to its classical counterpart and follows by analyzing the characteristic function of $\rho^{\boxplus n}$ and estimating it in terms of those of thermal states. The detailed proof is presented in Appendix~\ref{App:Example}.

Similarly to the previous example, for a fix parameter $0<\theta<1$, we let 
\[
    w(s) = \begin{cases}
\frac{a(4-\theta)}{s^{5 - \theta}} \quad~~~ s\geq 1 \\
\nu(s) \quad~~~ 1 > s \geq 1/2 \\
0  \quad \quad~~~~ \text{otherwise},
    \end{cases}
\]
where the constant $a > 0$ and the bounded function $\nu: [1/2, 1) \to \mathbb{R}^+$ are chosen in such a way that $w(s)$ is a true probability density and  $\int s^2 w(s) \dd s = 1$ is satisfied. It is straightforward to verify that $w(s)$ has a finite moment of order $\kappa$ whenever $\kappa<4-\theta$, while its moment of order $(4 - \theta)$ is infinite. We aim to show that, for the single-mode bosonic quantum state $\rho$ defined as the classical mixture in~\eqref{eq:DefExam} with the above probability distribution $w(s)$, we have
\[
\lim_{n\to +\infty }n  D(\rho^{\boxplus n} \| \rho_G)= + \infty.
\]

Starting with the fact that thermal states are diagonal in the Fock basis, $\rho^{\boxplus n}$ is also diagonal in the Fock basis. Then, using~\eqref{eq:ExaEntropy1}, for a fixed $\delta$ satisfying  $\theta<\delta<1$ and some constant $C=C_{\delta}>0$ we can write
\begin{align}\label{eq:ExaEntropy10}
&\bigg| \bra{k} \rho^{\boxplus n} \ket{k}  -\bigg( \bra{k} \rho_G \ket{k} + n \int_{1/2}^{+\infty} \big(\bra k\tau_{s_n}\ket k - \bra k\tau_1\ket k\big) w(s) \dd s \bigg)\bigg|\nonumber\\
&=\bigg| \bra{k} \rho^{\boxplus n} \ket{k} -\bigg( \frac{2 \times 3^k}{5^{k+1}} + n \int_{1/2}^{+\infty} \bigg( \frac{2}{4 s_n^2 + 1}\Big (\frac{4 s_n^2-1}{4 s_n^2-1}\Big)^k -  \frac{2 \times 3^k}{5^{k+1}}\bigg) w(s) \dd s\bigg)\bigg|\nonumber\\
& \leq \frac{C}{n^{2-\delta}},
\end{align}
for any Fock state $\ket{k}$. Recalling the definition of the quantum relative entropy for diagonal states, we have
\begin{align}
    D\big(\rho^{\boxplus n}\big \| \rho_G\big) &= \sum_{k=0}^\infty \bra{k} \rho^{\boxplus n} \ket{k} \ln \frac{\bra{k} \rho^{\boxplus n} \ket{k}}{\bra{k} \rho_G\ket{k}} \nonumber \\
    &= \underbrace{\sum_{k\leq 10  \ln n} \bra{k} \rho^{\boxplus n} \ket{k} \ln \frac{\bra{k} \rho^{\boxplus n} \ket{k}}{\bra{k} \rho_G\ket{k}}}_{I_n} + \underbrace{\sum_{k > 10  \ln n} \bra{k} \rho^{\boxplus n} \ket{k} \ln \frac{\bra{k} \rho^{\boxplus n} \ket{k}}{\bra{k} \rho_G\ket{k}}}_{J_n}. \label{eq:ExaEntSplit}
\end{align}
In the following we analyze the two terms $I_n, J_n$, separately. 

\bigskip
\begin{claim} \label{claim:In} 
For sufficiently large $n$, we have
\begin{align}
    I_n \geq  -\frac 14  n \ln n  \int_{1/2}^{n^{1/2+\gamma}} \bigg(\Big( \frac{4 s_n^2-1}{4 s_n^2+1}\Big)^{10  \ln n +1 }  - \Big( \frac{4 s_n^2-1}{4 s_n^2+1}\Big)^{2 n^{3\gamma} } \bigg) w(s) \dd s - \frac{2a}{n},\label{eq:FirstTermExaEnt}
\end{align}
where $\gamma = \frac{\theta}{2 (4-\theta)}$. 
\end{claim}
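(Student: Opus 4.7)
The plan is to apply the elementary inequality $p \ln(p/q) \geq p - q$ termwise to the definition of $I_n$. Setting $p_k := \bra k \rho^{\boxplus n} \ket k$ and $q_k := \bra k \rho_G \ket k = \tfrac{2}{5}(3/5)^k$, this gives
\[
I_n \geq \sum_{k \leq 10 \ln n}(p_k - q_k) = \sum_{k > 10 \ln n} q_k - \sum_{k > 10 \ln n} p_k,
\]
where the second equality uses $\sum_k p_k = \sum_k q_k = 1$. The first tail equals $(3/5)^{10\ln n + 1}$, so after multiplication by $n$ it is $o(1/n)$ and negligible in the final accounting.

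Next I would use Lemma~\ref{lemrestatable} to approximate the second tail. Letting $\tilde p_k := q_k + n \int_{1/2}^{+\infty}(\bra{k}\tau_{s_n}\ket{k} - q_k) w(s) \dd s$, the lemma gives $|p_k - \tilde p_k| = \mathcal O(n^{-(2-\delta)})$ uniformly in $k$. Since $\sum_k \tilde p_k = 1$ as well, one has $\sum_{k > 10 \ln n}(p_k - \tilde p_k) = -\sum_{k \leq 10 \ln n}(p_k - \tilde p_k)$, which involves at most $10 \ln n + 1$ terms and is therefore of size $\mathcal O(\ln n/n^{2-\delta}) = o(1/n)$. Using the closed form $\sum_{k > 10 \ln n}\bra k \tau_{s_n}\ket k = \big((4 s_n^2-1)/(4 s_n^2+1)\big)^{10\ln n + 1}$ and assembling, one obtains
\[
I_n \geq -n\int_{1/2}^{+\infty}\bigg(\Big(\tfrac{4s_n^2-1}{4s_n^2+1}\Big)^{10\ln n + 1} - (3/5)^{10\ln n + 1}\bigg) w(s) \dd s - o(1/n).
\]

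The final step is to relax this to the form in the claim. Writing $x := (4s_n^2-1)/(4s_n^2+1)$, the $n(3/5)^{10\ln n+1}$ contribution is absorbed into the error, and the remaining integral is split at $s = n^{1/2+\gamma}$. The choice $\gamma = \theta/(2(4-\theta))$ makes $(1/2+\gamma)(4-\theta)=2$, so the tail contribution $s \geq n^{1/2+\gamma}$ gives $n \int_{n^{1/2+\gamma}}^{\infty}w(s)\dd s = a/n$, comfortably within the $2a/n$ budget. On the inner range $s \in [1/2, n^{1/2+\gamma}]$, a pointwise comparison is needed: since $s_n^2 \leq 1 + n^{2\gamma}$ yields $\ln(1/x) \geq 1/(5 n^{2\gamma})$ for large $n$, which dominates $\ln 2 / (2 n^{3\gamma})$ once $n^{\gamma}$ is large, we obtain $x^{2n^{3\gamma}} \leq \tfrac{1}{2} x^{10\ln n + 1}$. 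This implies
\[
\int_{1/2}^{n^{1/2+\gamma}} x^{10\ln n + 1} w(s) \dd s \leq 2\int_{1/2}^{n^{1/2+\gamma}}\!\big(x^{10\ln n+1} - x^{2n^{3\gamma}}\big) w(s)\dd s,
\]
and the factor $2$ is absorbed into $\tfrac{\ln n}{4}$ for all $n \geq e^8$.

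The main delicacy lies in the calibration of scales. The cutoff $n^{1/2+\gamma}$ is chosen to make the tail of $w$ contribute exactly $\mathcal O(1/n)$, while the dummy exponent $2n^{3\gamma}$ must be super-polynomially larger than $10 \ln n$ yet small enough that the pointwise comparison $x^{2n^{3\gamma}} \leq \tfrac{1}{2}x^{10\ln n+1}$ is verifiable uniformly on the inner range. The specific value $\gamma = \theta/(2(4-\theta))$ is what makes both constraints hold simultaneously, and it is precisely this tuning that will be exploited in the subsequent bound on $J_n$ to show that $n D(\rho^{\boxplus n}\|\rho_G)$ is unbounded.
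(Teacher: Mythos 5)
Your proof is correct and follows essentially the same route as the paper's: the inequality $p\ln(p/q)\geq p-q$, the uniform $\mathcal{O}(n^{-(2-\delta)})$ approximation from Lemma~\ref{lemrestatable} accumulated over the $\mathcal{O}(\ln n)$ head terms, the cutoff at $s=n^{1/2+\gamma}$ calibrated so that $(1/2+\gamma)(4-\theta)=2$ yields the $\mathcal{O}(1/n)$ tail, and the uniform pointwise comparison $x^{2n^{3\gamma}}\leq \tfrac12 x^{10\ln n+1}$ on the inner range. Your passage to complementary tails via the normalization $\sum_k \tilde p_k = 1$ is an equivalent reorganization of the paper's direct bound on the head sum (and you harmlessly skip the paper's intermediate restriction to $s\geq 1$, which it undoes at the end anyway), so the two arguments coincide in substance.
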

\bigskip

To prove this claim, using the inequality $x \ln x \geq x-1$ and~\eqref{eq:ExaEntropy10} we can write 
\begin{align}
    I_n &\geq \sum_{k\leq 10  \ln n} \Big( \bra{k} \rho^{\boxplus n} \ket{k} - \bra{k} \rho_G \ket{k}\Big) \nonumber \\
    &\geq \sum_{k\leq 10  \ln n} n \int_{1/2}^{+\infty} \Big( \bra{k} \tau_{s_n} \ket{k} - \bra{k} \tau_1 \ket{k}\Big) w(s) \dd s - C \frac{\ln n}{n^{2-\delta}}. \label{eq:ExaEnt1}
\end{align}
 We note that, for any integer $K$, we have 
\begin{align}\label{eq:sum-tau-s-K}
    \sum_{k\leq K} \bra{k} \tau_{s_n} \ket{k} = 1 - \Big( \frac{4 s_n^2 - 1}{4 s_n^2 +1}\Big)^{K+1}.
\end{align}
We also observe that if $s \leq 1$, then $s_n = \sqrt{1 + \frac{s^2-1}{n}} \leq 1$ and $\frac{4 s_n^2 - 1}{4 s_n^2 + 1} \leq  \frac{3}{5}$. This implies that $\sum_{k\leq K} \bra{k} \tau_{s_n} \ket{k} \geq \sum_{k\leq K} \bra{k} \tau_1 \ket{k}$ if $s\leq 1$. Thus, using~\eqref{eq:ExaEnt1}, we get
\begin{equation}\label{eq:ExaEnt2}
    I_n  \geq \sum_{k\leq 10  \ln n} n \int_1^\infty \Big( \bra{k} \tau_{s_n} \ket{k} - \bra{k} \tau_1 \ket{k}\Big) w(s) \dd s - C \frac{\ln n}{n^{2-\delta}}.
\end{equation}
Also, for $\gamma = \frac{\theta}{2 (4-\theta)}$ given in the statement of the claim, we can write
\begin{align}
    &\bigg| \sum_{k\leq 10  \ln n}  n \int_{n^{1/2 + \gamma}}^\infty \Big( \bra{k} \tau_{s_n} \ket{k} - \bra{k} \tau_1 \ket{k}\Big) w(s) \dd s \bigg| \nonumber \\
    &\leq 
     \sum_{k\leq 10  \ln n} n \int_{n^{1/2 + \gamma}}^\infty \Big| \bra{k} \tau_{s_n} \ket{k} - \bra{k} \tau_1 \ket{k}\Big| w(s) \dd s \nonumber \\
     &\leq 2  n  \int_{n^{1/2 + \gamma}}^\infty w(s) \dd s \nonumber\\
     & = \frac{2a}{n},  \label{eq:ExaEnt3}
\end{align}
where in the last line we compute the integral using the explicit form of $w(s)$ given above.
Hence, using~\eqref{eq:ExaEnt2} and~\eqref{eq:ExaEnt3}, for sufficiently large $n$, we have
\begin{align*}
    I_n  &\geq \sum_{k\leq 10  \ln n} n \int_1^{n^{1/2+\gamma}} \Big( \bra{k} \tau_{s_n} \ket{k} - \bra{k} \tau_1 \ket{k}\Big) w(s) \dd s - \frac{2a}{n} \nonumber \\
    & = -n \int_1^{n^{1/2+\gamma}} \bigg( \Big( \frac{4 s_n^2-1}{4 s_n^2+1}\Big)^{10  \ln n +1 } - \Big( \frac 34\Big)^{10  \ln n +1 }\bigg) w(s) \dd s - \frac{2a}{n} \nonumber \\
    &> -n \int_1^{n^{1/2+\gamma}}  \Big( \frac{4 s_n^2-1}{4 s_n^2+1}\Big)^{10  \ln n +1 } w(s) \dd s - \frac{2a}{n}\nonumber \\
        &> -n \int_{1/2}^{n^{1/2+\gamma}}  \Big( \frac{4 s_n^2-1}{4 s_n^2+1}\Big)^{10  \ln n +1 } w(s) \dd s - \frac{2a}{n}.
\end{align*}
Finally, we note that for $s \leq n^{1/2+ \gamma}$ and sufficiently large $n$, we have $s_n^2=1+\frac{s^2-1}{n}< 2n^{2\gamma}$ and  
\begin{align*}
\Big( \frac{4 s_n^2-1}{4 s_n^2+1}\Big)^{2 n^{3\gamma} } &= \Big( 1- \frac{2}{4 s_n^2+1}\Big)^{2 n^{3\gamma} } \\
& \leq \Big( \frac{4 s_n^2-1}{4 s_n^2+1}\Big)^{10\ln n+1 } \Big( 1-\frac{2}{8 n^{2\gamma}+1}\Big)^{ n^{3\gamma} }\\
& \leq \Big( \frac{4 s_n^2-1}{4 s_n^2+1}\Big)^{10\ln n+1 }  e^{-2\frac{n^{3\gamma}}{8n^{2\gamma}+1}}\\
&\leq \frac 12 \Big( \frac{4 s_n^2-1}{4 s_n^2+1}\Big)^{10\ln n+1 }.
\end{align*}
Using this in the previous bound, we obtain the desired bound on $I_n$.

\bigskip
\begin{claim} \label{claim:Jn} 
For sufficiently large $n$, we have
\begin{align}
    J_n \geq  \frac 12  n \ln n  \int_{1/2}^{n^{1/2+\gamma}} \bigg( \Big( \frac{4 s_n^2-1}{4 s_n^2+1}\Big)^{10  \ln n +1 }  - \Big( \frac{4 s_n^2-1}{4 s_n^2+1}\Big)^{2 n^{3\gamma} } \bigg) w(s) \dd s, \label{eq:SecTermExamEnt}
\end{align}
where $\gamma = \frac{\theta}{2 (4-\theta)}$. 
\end{claim}
\bigskip

To prove this, we first observe that for $s \geq 1$, the map
\[
k \mapsto \frac{\bra{k} \tau_{s_n} \ket{k}}{\bra{k} \tau_1 \ket{k}} = \frac{\frac{2}{4s_n^2+1}  \big(\frac{4s_n^2-1}{4s_n^2+1}\big)^k}{\frac 25 \big(\frac 35\big)^k},
\]
is non-decreasing. Thus, for $s \geq \sqrt{3n+1}$, which ensures $s_n \geq 4$, and $k > 10  \ln n$, it holds that
\[
\frac{\bra{k} \tau_{s_n} \ket{k}}{\bra{k} \tau_1 \ket{k}}  \geq \frac{\frac{2}{4s_n^2+1}  \big(\frac{4s_n^2-1}{4s_n^2+1}\big)^{10\ln n}}{\frac 25 \big(\frac 35\big)^{10\ln n}}\geq \frac{5}{4s^2+1}   \Big(\frac{5\times 15}{3\times 17}\Big)^{10\ln n} \geq \frac{1}{s^2}  n^{10/3}.
\]
Thus, for $k > 10  \ln n$ and sufficiently large $n$, we have
\begin{align}
    g_n(k) := \int_{1/2}^{+\infty} \frac{\bra{k} \tau_{s_n} \ket{k}}{\bra{k} \tau_1 \ket{k}} w(s) \dd s &\geq \int_{\sqrt{3n+1}}^{+\infty}\frac{\bra{k} \tau_{s_n} \ket{k}}{\bra{k} \tau_1 \ket{k}} w(s) \dd s \nonumber\\
    &\geq n^{10/3} \int_{\sqrt{3n+1}}^{+\infty} \frac{1}{s^2} w(s) \dd s\nonumber\\
    &= \frac{a(4-\theta)}{6-\theta}   \frac{n^{10/3}}{(3n+1)^{\frac{6-\theta}{2}}}. \label{eq:gn(k)-bound-2}
\end{align}
This, in particular, implies that for sufficiently large $n$ and $k > 10  \ln n$, we have $g_n(k) \geq 3$. Hence, using~\eqref{eq:ExaEntropy10}, we find that
\[
\frac{\bra{k} \tau_{s_n} \ket{k}}{\bra{k} \tau_1 \ket{k}} = 1 + n (g_n(k) -1) -\frac{C}{n^{2-\delta}} \geq n,
\]
for sufficiently large $n$.
This gives a lower bound on $J_n$ as we have
\begin{align}
    J_n & =  \sum_{k > 10 \ln n} \bra{k} \rho^{\boxplus n} \ket{k} \ln \frac{\bra{k} \rho^{\boxplus n} \ket{k}}{\bra{k} \rho_G \ket{k}} \nonumber \\
    &\geq \ln n  \sum_{k > 10 \ln n} \bra{k} \rho^{\boxplus n} \ket{k},\label{eq:ExaEnt4}
\end{align}
for sufficiently large $n$. Also, once again using~\eqref{eq:ExaEntropy10}, for $k > 10  \ln n$ and sufficiently large $n$ we have
\begin{align}
    \bra{k} \rho^{\boxplus n} \ket{k} &\geq - (n-1) \frac 25\Big(\frac 35\Big)^k + n \int_{1/2}^{+\infty} \bra{k} \tau_{s_n} \ket{k} w(s) \dd s -\frac{C}{n^{2-\delta}} \nonumber \\
    &\geq - n  \frac 25  n^{10  \ln\big(\frac 35 \big)} + n  \int_{1/2}^{+\infty} \bra{k} \tau_{s_n} \ket{k} w(s) \dd s  -  \frac {C}{n^{2-\delta}}  \nonumber \\
    &\geq n  \int_{1/2}^{+\infty} \bra{k} \tau_{s_n} \ket{k} w(s) \dd s -  \frac {2 C}{n^{2-\delta}} \nonumber \\
    &\geq \frac 12  n  \int_{1/2}^{+\infty} \bra{k} \tau_{s_n} \ket{k} w(s) \dd s,\label{eq:ExaEnt5}
\end{align}
where the last line follows from~\eqref{eq:gn(k)-bound-2}.  
Then, putting~\eqref{eq:ExaEnt4} and~\eqref{eq:ExaEnt5} together, yields
\begin{align*}
    J_n 
    &\geq \frac 12  n \ln n \sum_{ k > 10  \ln n} \int_{1/2}^{+\infty} \bra{k} \tau_{s_n} \ket{k} w(s) \dd s \\
    &\geq \frac 12  n \ln n \sum_{10  \ln n<k< 2n^{3\gamma}} \int_{1/2}^{+\infty} \bra{k} \tau_{s_n} \ket{k} w(s) \dd s \\
    &= \frac 12  n \ln n  \int_{1/2}^{+\infty} \bigg( \Big( \frac{4 s_n^2-1}{4 s_n^2+1}\Big)^{10  \ln n +1 }  - \Big( \frac{4 s_n^2-1}{4 s_n^2+1}\Big)^{2 n^{3\gamma} } \bigg) w(s) \dd s\\
    &= \frac 12  n \ln n  \int_{1/2}^{n^{1/2+\gamma}} \bigg( \Big( \frac{4 s_n^2-1}{4 s_n^2+1}\Big)^{10  \ln n +1 }  - \Big( \frac{4 s_n^2-1}{4 s_n^2+1}\Big)^{2 n^{3\gamma} } \bigg) w(s) \dd s ,
\end{align*}
where the penultimate line follows from~\eqref{eq:sum-tau-s-K}. This gives the desired bound on $J_n$.

\bigskip

Now we can put the  bounds~\eqref{eq:FirstTermExaEnt} and~\eqref{eq:SecTermExamEnt} on $I_n$ and $J_n$ together to conclude that for sufficiently large $n$ we have
\begin{align*}
    D(\rho^{\boxplus n} \| \rho_G) = I_n + J_n &\geq -\frac 14  n \ln n  \int_{1/2}^{n^{1/2+\gamma}} \bigg(\Big( \frac{4 s_n^2-1}{4 s_n^2+1}\Big)^{10  \ln n +1 }  - \Big( \frac{4 s_n^2-1}{4 s_n^2+1}\Big)^{2 n^{3\gamma} } \bigg) w(s) \dd s-\frac{2a}{n} \\
    &\quad + \frac 12  n \ln n  \int_{1/2}^{n^{1/2+\gamma}} \bigg( \Big( \frac{4 s_n^2-1}{4 s_n^2+1}\Big)^{10  \ln n +1 }  - \Big( \frac{4 s_n^2-1}{4 s_n^2+1}\Big)^{2 n^{3\gamma} } \bigg) w(s) \dd s  \\
    &= \frac 14  n \ln n  \int_{1/2}^{n^{1/2+\gamma}} \bigg( \Big( \frac{4 s_n^2-1}{4 s_n^2+1}\Big)^{10  \ln n +1 }  - \Big( \frac{4 s_n^2-1}{4 s_n^2+1}\Big)^{2 n^{3\gamma} } \bigg) w(s) \dd s  - \frac{2a}{n} \\
   & \geq \frac 14 n \ln n  \int_{\sqrt{n\ln n}}^{\sqrt{2n\ln n}} \bigg( \Big( \frac{4 s_n^2-1}{4 s_n^2+1}\Big)^{10  \ln n +1 }  - \Big( \frac{4 s_n^2-1}{4 s_n^2+1}\Big)^{2 n^{3\gamma} } \bigg) w(s) \dd s  - \frac{2a}{n}.
\end{align*}
We observe that for sufficiently large $n$, and for $s \geq \sqrt{n \ln n}$, we have 
\begin{align*}
\Big( \frac{4 s_n^2-1}{4 s_n^2+1}\Big)^{10  \ln n +1 }  = \bigg( 1-\frac{2}{4 \big(1+\frac{s^2-1}{n}\big)+1}\bigg)^{10  \ln n +1 }
 \geq \Big( 1-\frac{1}{2\ln n}\Big)^{11  \ln n }\geq \Big(1-\frac{1}{2}\Big)^{11/2}>2^{-6}. 
\end{align*}
On the other hand, if $s\leq \sqrt{2n\ln n}$, for sufficiently large $n$ we have
\begin{align*}
\Big( \frac{4 s_n^2-1}{4 s_n^2+1}\Big)^{2n^{3\gamma} }  = \bigg( 1-\frac{2}{4 \big(1+\frac{s^2-1}{n}\big)+1}\bigg)^{2n^{3\gamma} } 
 \leq \Big( 1-\frac{1}{5\ln n}\Big)^{2n^{3\gamma} }\leq e^{-\frac{2n^{3\gamma}}{5\ln n}}<2^{-12}. 
\end{align*}
Using these in the previous bound, we arrive at 
\begin{align*}
    D(\rho^{\boxplus n} \| \rho_G)     &\geq 2^{-9} n \ln n  \int_{\sqrt{n \ln n}}^{\sqrt{2 n \ln n}} w(s) \dd s -  \frac{2a}{n} \\
    & = 2^{-9}\big(1-2^{-(4-\theta)/2}\big)a(n\ln n)^{1- \frac 12(4-\theta)} -  \frac{2a}{n}\\
    & = 2^{-9}\big(1-2^{-(2-\theta/2)}\big)a(n\ln n)^{-(1-\theta/2)} -  \frac{2a}{n}.
\end{align*}
Therefore, since $\theta>0$, we have $\lim_{n\to +\infty }n D(\rho^{\boxplus n} \| \rho_G) = +\infty$.

\section{Final remarks} \label{secCo}

Throughout this paper, we established the optimal convergence rates for the quantum CLT in terms of trace distance and relative entropy under minimal assumptions. In Theorem~\ref{MainTheoremTrace}, we demonstrated that for any centered \( m \)-mode quantum state \( \rho \) with a finite third-order moment,
\[
{\big\| \rho^{\boxplus n} -  \rho_G \big\|}_1 = \mathcal{O} \left( \frac{1}{\sqrt{n}} \right) \quad \text{as } n \rightarrow \infty.
\]

Additionally, concerning relative entropy, we proved in Theorem~\ref{MainTheorem} that 
\[
D\big(\rho^{\boxplus n} \big\| \rho_{G}\big) = \mathcal{O}\left( \frac{1}{n} \right) \quad \text{as } n \rightarrow \infty,
\]
for any centered \( m \)-mode quantum state \( \rho \) with a finite moment of order \( 4 + \delta \), where \( \delta > 0 \) in the multi-mode case and \( \delta = 0 \) if \( m = 1 \).

It is worth noting that in~\cite{beigi2023towards}, it has been shown that these convergence rates cannot be improved, even assuming the finiteness of all moments of the quantum state $\rho$. This is why we refer to them as optimal. Also, as shown in Theorem~\ref{thm:examples}, we prove our results under essentially minimal assumptions that match the best know results in the classical case.

In the course of our arguments, we developed the notion of an Edgeworth-type expansion for quantum states as a tool that can be useful in the study of quantum CLT under various metrics. Beyond its role as a technical tool, this expansion provides valuable insights into the quantum CLT.

Theorem~\ref{boundEntropy-Char} is another technical tool that we developed along our arguments. This theorem provides a method to establish an upper bound on the relative entropy between an arbitrary quantum state and a Gaussian state in terms of the Hilbert--Schmidt distance. We expect this theorem to be of interest across various subjects, particularly when approximating the distance of an arbitrary quantum state from a Gaussian one.

Although we establish optimal convergence rates for the quantum CLT under essentially minimal moment assumptions, several directions remain open. First, while the rates in Theorems \ref{MainTheoremTrace} and \ref{MainTheorem} are optimal with respect to $n$, the number of convoluted states, they are likely not optimal with respect to the number of modes $m$ or the moments of the initial quantum state $\rho$. An important direction for future work is therefore to derive convergence rates that are simultaneously optimal in $n$, $m$, and the relevant moment parameters.

Another natural direction is to investigate convergence rates with respect to other distance measures. In the classical setting, convergence in Rényi relative entropies has been studied in \cite{bobkov2019renyi}. It would be interesting to determine whether quantum analogues of these results hold in the setting of quantum central limit theorems.

\paragraph{Acknowledgements.} 
SB is supported by Iran National Science Foundation (INSF) under project No.\ 4031370. MMG is supported by the National Research Foundation, Singapore and A*STAR under its Quantum Engineering Programme NRF2021-QEP2-02-P05.

{\small
\bibliographystyle{abbrvurl}
\bibliography{CLTBIB}
}

\appendix

\section{Proof of~\eqref{eq:prop1-third-term}}\label{app:f-g-estimate}

In this appendix we show that 
\begin{align}\label{eq:bound-g-m-t}
\sum_{\beta\cdot k> t} (k_1+1) e^{-\beta\cdot k}\leq \frac{2^m (\beta_1+1)^2\cdots (\beta_m+1)^2}{\nu_\beta^2} (t+1)^m e^{-t},
\end{align}
where $\nu_\beta= \prod_{j=1}^m (1-e^{-\beta_j})$.
To this end, we first define
$$f_{m, t} = \sum_{\beta\cdot k> t} e^{-\beta\cdot k},$$
and by induction on $m$ show that 
\begin{align}\label{eq:bound-f-m-t}
f_{m, t} \leq \frac{2^m(\beta_1+1)\cdots (\beta_m+1)}{\nu_\beta} (t+1)^{m-1}e^{-t}.
\end{align}
The base of induction $m=1$ holds since 
$$f_{1, t} = \sum_{ k> t/\beta_1} e^{-\beta_1k_1}\leq \frac{1}{1-e^{-\beta_1}}e^{-t}\leq \frac{2(\beta_1+1)}{1-e^{-\beta_1}}e^{-t}.$$
For the induction step, we compute
\begin{align*}
f_{m, t} &= \sum_{k_1=0}^{\lfloor t/\beta_1\rfloor} e^{-\beta k_1} \sum_{\beta_2k_2+\cdots+\beta_mk_{m}\geq t-\beta_1k_1} e^{-(\beta_2k_2+\cdots +\beta_mk_{m})}\\
&\quad+ \sum_{k_1> \lfloor t/\beta_1\rfloor} e^{-\beta_1 k_1} \sum_{k_2, \dots, k_{m}=0}^\infty e^{-(\beta_2k_2+\cdots+\beta_mk_{m})}\\ 
&\leq \frac{2^{m-1}(\beta_2+1)\cdots (\beta_m+1)}{\prod_{j=2}^m (1-e^{-\beta_j})}\sum_{k_1=0}^{\lfloor t/\beta_1\rfloor} e^{-\beta k_1} (t-\beta_1k_1+1)^{m-2}e^{-(t-\beta_1k_1)} \\
&\quad + \frac{1}{\prod_{j=2}^m (1-e^{-\beta_j})}\sum_{k_1> \lfloor t/\beta_1\rfloor} e^{-\beta_1 k_1} \\ 
&\leq \frac{2^{m-1}(\beta_2+1)\cdots (\beta_m+1)}{\prod_{j=2}^m (1-e^{-\beta_j})} e^{-t}\sum_{k_1=0}^{\lfloor t/\beta_1\rfloor} (t+1)^{m-2}  + \frac{1}{\nu_\beta}e^{-t} \\ 
&\leq \frac{2^{m-1}(\beta_2+1)\cdots (\beta_m+1)}{\prod_{j=2}^m (1-e^{-\beta_j})} e^{-t} (t/\beta_1+1) (t+1)^{m-2}  + \frac{1}{\nu_\beta}e^{-t}.
\end{align*}
Then, the desired bound~\eqref{eq:bound-f-m-t} is derived once we note that $(t+\beta_1)\leq (\beta_1+1)(t+1)$ and $\frac 1{\beta_1}\leq \frac{1}{1-e^{-\beta_1}}$.

We now apply the same ideas to establish~\eqref{eq:bound-g-m-t}:
\begin{align*}
\sum_{\beta\cdot k> t} (k_1+1) e^{-\beta\cdot k} &= \sum_{k_1=0}^{\lfloor t/\beta_1\rfloor} (k_1+1) e^{-\beta k_1} \sum_{\beta_2k_2+\cdots+\beta_mk_{m}\geq t-\beta_1k_1} e^{-(\beta_2k_2+\cdots +\beta_mk_{m})}\\
&\quad+ \sum_{k_1> \lfloor t/\beta_1\rfloor} (k_1+1)e^{-\beta_1 k_1} \sum_{k_2, \dots, k_{m}=0}^\infty e^{-(\beta_2k_2+\cdots+\beta_mk_{m})}\\ 
&\leq \frac{2^{m-1}(\beta_2+1)\cdots (\beta_m+1)}{\prod_{j=2}^m (1-e^{-\beta_j})}\sum_{k_1=0}^{\lfloor t/\beta_1\rfloor} (k_1+1)e^{-\beta k_1} (t-\beta_1k_1+1)^{m-2}e^{-(t-\beta_1k_1)} \\
&\quad + \frac{1}{\prod_{j=2}^m (1-e^{-\beta_j})}\sum_{k_1> \lfloor t/\beta_1\rfloor} (k_1+1)e^{-\beta_1 k_1} \\ 
&\leq \frac{2^{m-1}(\beta_2+1)\cdots (\beta_m+1)}{\prod_{j=2}^m (1-e^{-\beta_j})} e^{-t}(t/\beta_1+1)^2 (t+1)^{m-2}\\
&\quad + \frac{1}{\prod_{j=2}^m (1-e^{-\beta_j})} e^{-t}  \sum_{k_1=0}^\infty (k_1+\lceil t/\beta_1\rceil+1)e^{-\beta_1 k_1} \\ 
&\leq \frac{2^{m-1}(\beta_1+1)^2(\beta_2+1)\cdots (\beta_m+1)}{\nu_\beta} (t+1)^{m}e^{-t} \\
&\quad + \frac{1}{\prod_{j=2}^m (1-e^{-\beta_j})} e^{-t}  \Big(   (t/\beta_1+1) \frac{1}{1-e^{-\beta_1}}  + \frac{1}{(1-e^{-\beta_1})^2}  \Big).
 \end{align*}
This implies~\eqref{eq:bound-g-m-t}.

\section{Lemmata needed in Section~\ref{secMinAssumption}}\label{App:Example}

This appendix is devoted to some details required for the proof of Theorem~\ref{thm:examples} in Section~\ref{secMinAssumption}. 

\begin{lemma}\label{lem:app-function-h(t)}
The function
$$h(t) =  \frac{\sin(t)}{t} + \frac{\cos(t) - 1}{t^2} - \frac 12 + \frac{t^2}{8},$$
is non-negative for any $t \in \mathbb R$. Moreover, there exists a constant $c >0$ such that $h(t) \geq \frac{1}{16} t^2$ if $|t| \geq c$. Here, $h(0)$ is defined by $h(0)=\lim_{t\to 0} h(t)$.
\end{lemma}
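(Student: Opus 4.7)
The function $h$ is even in $t$, so it suffices to treat $t \geq 0$. The strategy is to split $\mathbb{R}$ at a conveniently chosen cutoff $c > 0$, proving non-negativity on $[-c, c]$ via a small-$t$ power-series expansion, and proving the stronger bound $h(t) \geq t^2/16$ for $|t| \geq c$ via elementary tail estimates. Combining the two ranges gives both conclusions of the lemma.

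For the small-$t$ regime, I would expand $\sin(t)/t$ and $(\cos(t)-1)/t^2$ into their Maclaurin series around the origin. The counterterm $-\tfrac{1}{2} + \tfrac{t^2}{8}$ is precisely calibrated so that the constant and quadratic coefficients of the combined series cancel, leaving
\[
h(t) \;=\; \sum_{n \geq 2} (-1)^{n}\,\frac{2n+1}{(2n+2)!}\,t^{2n}.
\]
Computing the ratio of successive absolute terms shows that, for $t^2$ below a computable threshold (roughly $t^2 \leq 40$), the series is alternating with monotonically decreasing terms starting from $n = 2$. Leibniz's criterion then yields $h(t) \geq \tfrac{t^4}{144} - \tfrac{t^6}{5760} \geq 0$ throughout this interval, which disposes of the central region.

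For the large-$t$ regime, the trivial uniform bounds $|\sin(t)/t| \leq 1/|t|$ and $|\cos(t)-1|/t^2 \leq 2/t^2$ give immediately
\[
h(t) \;\geq\; \frac{t^2}{8} - \frac{1}{2} - \frac{1}{|t|} - \frac{2}{t^2},
\]
in which the $t^2/8$ term dominates the remainder as $|t| \to \infty$. Rearranging, one sees that $h(t) \geq t^2/16$ precisely when $t^2/16 \geq 1/2 + 1/|t| + 2/t^2$, which holds once $|t|$ exceeds a computable threshold.

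The only subtlety, and the main point where care is required, is matching the two regimes: the chosen $c$ must lie both inside the radius where the alternating-series bound applies and inside the region where the quadratic dominates the bounded remainder. Taking $c = \sqrt{40}$ (or any slightly smaller value for which both arguments are in force) satisfies both constraints comfortably, completing the lemma.
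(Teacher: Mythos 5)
Your proposal is correct and takes essentially the same route as the paper: both cancel the constant and quadratic Taylor coefficients via the counterterm $-\tfrac12+\tfrac{t^2}{8}$, handle the regime $t^2\leq 40$ through the alternating tail of the resulting series (the paper pairs consecutive terms into nonnegative brackets, while you invoke Leibniz's criterion to keep the first two terms $\tfrac{t^4}{144}-\tfrac{t^6}{5760}$ --- the same estimate, and both break exactly at $t^2=40$), and for $t^2>40$ both use crude uniform bounds on $\sin(t)/t$ and $(\cos(t)-1)/t^2$ so that $t^2/8$ dominates and gives $h(t)\geq \tfrac{1}{16}t^2$. No gaps.
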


\begin{proof}
Employing the Taylor expansions of the sine and cosine functions, we have
\begin{align*}
h(t) & = \Big(1 - \frac{1}{3!}t^2 + \frac{1}{5!}t^4-\frac{1}{7!}t^6+\cdots\Big) + \Big(-\frac 12 + \frac{1}{4!}t^2 - \frac{1}{6!}t^4+\frac{1}{8!}t^6-\cdots\Big)  - \frac 12 +\frac 18 t^2\\
& = \Big(\frac{1}{5!} - \frac{1}{6!}\Big)t^4 - \Big(\frac{1}{7!} - \frac{1}{8!}\Big)t^6 + \Big(\frac{1}{9!} - \frac{1}{10!}\Big)t^8-\Big(\frac{1}{11!} - \frac{1}{12!}\Big)t^{10} +\cdots\\
& = \frac{5}{6!}t^4 - \frac{7}{8!}t^6 + \frac{9}{10!}t^8-\frac{11}{12!}t^{10}+\cdots\\
& = \sum_{k=1}^{+\infty} \frac{1}{(4k+2)!} \Big((4k+1) - \frac{1}{4(k+1)}t^2\Big)t^{4k}.
\end{align*}
Therefore, $h(t)\geq 0$ if $t^2\leq 40$. On the other hand, if $t^2>40$, then
\begin{align*}
h(t)& \geq -\frac{1}{\sqrt{40}} -\frac{2}{40} - \frac 12 + \frac 18t^2 > -1 + \frac 18t^2 > \frac{1}{16}t^2.
\end{align*}
\end{proof}

Next, we present the proof of Lemma~\ref{lemrestatable}, copied here for reader's convenience.  This lemma is used in the proof of part (ii) of Theorem~\ref{thm:examples}. 

\primelemma*

\begin{proof}
Following the notation and arguments in Subsection~\ref{secMinAssumption-sub1} we write
    \begin{align}\label{eq:app-char-rho-w-2}
	\chi_\rho(z) =\exp(-2 |z|^2(1 + \eta(z)))= \int_{1/2}^{+\infty} \omega(s) e^{-2 s^2 |z|^2} \dd s,
    \end{align}
which gives
    \[
        e^{-2 |z|^2 \eta(z)} - 1 = \int w(s) \Big( e^{-2 (s^2-1) |z|^2} - 1\Big) \dd s.
    \] 
Also, recall that
    \begin{equation}\label{eq:ExampleCharBound}
        e^{-2 |z|^2} \leq \chi_\rho(z) \leq e^{-\frac{|z|^2}{2}}.
    \end{equation}

    We observe that if $s \geq   1$, we have $e^{-2 (s^2-1) |z|^2} - 1 \leq 0$. Thus, for $|z| \leq 1/\sqrt 2$, we can write
    \begin{equation}\label{eq:AppExa1}
        e^{-2 |z|^2 \eta(z)} - 1 \leq \int_{s \leq \frac{1}{\sqrt 2 |z|}} w(s) \Big( e^{-2 (s^2-1) |z|^2} - 1\Big) \dd s.
    \end{equation}
    On the other hand, if $1/2\leq s \leq \frac{1}{\sqrt 2 |z|}$, we have $ -2|z|^2 \leq 2 (s^2-1) |z|^2  \leq 1-2|z|^2$ and then $\big|-2 (s^2-1) |z|^2\big| \leq 1$. Thus, using the inequality $e^x -1\leq  x+ x^2$ for $|x| \leq 1$,  for $|z| \leq 1/\sqrt{2}$, and~\eqref{eq:ExampleCharBound}, yields
    \allowdisplaybreaks\begin{align}
    0&\leq e^{2 |z|^2} \chi_\rho(z) - 1\nonumber\\
    & =e^{-2 |z|^2 \eta(z)} - 1\nonumber \\
    &\leq \int_{s \leq \frac{1}{\sqrt 2 |z|}} w(s) \Big( -2 (s^2-1) |z|^2 + 4 (s^2-1)^2 |z|^4\Big) \dd s \nonumber\\
    & = 2 |z|^2  \int_{s \geq \frac{1}{\sqrt 2 |z|}} (s^2-1) w(s) \dd s + 4 |z|^4 \int_{s \leq \frac{1}{\sqrt 2 |z|}} (s^2-1)^2w(s) \dd s \nonumber\\
    & \leq 2 |z|^2  \int_{s \geq \frac{1}{\sqrt 2 |z|}} s^2 w(s) \dd s + 4 |z|^4 \bigg( \int_{s \leq \frac{1}{\sqrt 2 |z|}} s^4 w(s) \dd s + \int_{s \geq \frac{1}{\sqrt 2 |z|}} s^2 w(s) \dd s\bigg) \nonumber\\
    &\leq 4 |z|^2 \int_{s \geq \frac{1}{\sqrt 2 |z|}} s^2 w(s) \dd s + 4 |z|^4  \int_{s \leq \frac{1}{\sqrt 2 |z|}} s^4 w(s) \dd s \nonumber\\
    &= 2 \int_{s \geq \frac{1}{\sqrt 2 |z|}} (\sqrt 2 |z| s)^2 w(s) \dd s +   \int_{s \leq \frac{1}{\sqrt 2 |z|}} (\sqrt 2 |z| s)^4 w(s) \dd s \nonumber\\
    &\leq 2 \int_{s \geq \frac{1}{\sqrt 2 |z|}} (\sqrt 2 |z| s)^{4-\delta} w(s) \dd s +   \int_{s \leq \frac{1}{\sqrt 2 |z|}} (\sqrt 2 |z| s)^{4-\delta} w(s) \dd s \nonumber\\
    &\leq 10  M_{4-\delta}  |z|^{4-\delta}. \label{eq:AppUpperBound1}
    \end{align}
    Here, the second line follows from $\int s^2 w(s) \dd s = \int w(s) \dd s = 1$ which implies $\int_{s \leq \frac{1}{\sqrt 2 |z|}} (s^2-1)w(s)\dd s = -\int_{s \geq \frac{1}{\sqrt 2 |z|}} (s^2-1)w(s)\dd s$. Applying the same trick and ignoring some negative terms, we obtain the third line. In the fourth line, we use $4 |z|^4 \leq 2|z|^2$ for $|z| \leq 1/\sqrt{2}$. Also, in the last line, $M_{4-\delta}$ is the moment of order $4-\delta$ of $w(\cdot)$ which by assumption is finite. 
    
Now, to derive the representation in~\eqref{eq:ExaEntropy1}, we write
\begin{align}
    \rho^{\boxplus n} &= \frac 1\pi \int_{\mathbb C} \chi_{\rho^{\boxplus n}}(z) D_{-z} \dd^2 z \nonumber\\
    &= \frac 1\pi \int_{\mathbb C} \chi_\rho\Big(\frac{z}{\sqrt n}\Big)^n D_{-z} \dd^2 z \nonumber \\
    &= \underbrace{\frac 1\pi \int_{|z| \leq C \sqrt {\ln n}} \bigg( 1 + \Big( e^{2 \big|\frac{z}{\sqrt n}\big|^2}\chi_\rho\Big(\frac{z}{\sqrt n}\Big) - 1\Big)\bigg)^n e^{-2 |z|^2} D_{-z} \dd^2 z}_{A_n} + \underbrace{\frac 1\pi \int_{|z| > C \sqrt {\ln n}} \chi_\rho\Big(\frac{z}{\sqrt n}\Big)^n D_{-z} \dd^2 z}_{B_n},\label{eq:SplitApp1}
\end{align}
where $C>0$ is a sufficiently large constant to be determined. For the second term in~\eqref{eq:SplitApp1}, using~\eqref{eq:ExampleCharBound} we have 
\begin{equation}\label{eq:AppSecondTerm}
   \|B_n\|=\bigg\|\frac 1\pi \int_{|z| > C \sqrt {\ln n}} \chi_\rho\Big(\frac{z}{\sqrt n}\Big)^n D_{-z} \dd^2 z\bigg\| \leq \frac 1\pi \int_{|z| > C \sqrt {\ln n}} e^{-\frac{|z|^2}{2}} \dd^2 z \leq 4 n^{-\frac 14C^2}.
\end{equation}
For the first term in~\eqref{eq:SplitApp1}, we observe that for sufficiently large $n$, we have $|z/\sqrt{n}| \leq 1/\sqrt 2$ since $|z| \leq C \sqrt{\ln n}$. Thus, using~\eqref{eq:AppUpperBound1}, for the region $|z| \leq C \sqrt{\ln n}$ and sufficiently large $n$, we find that 
\begin{equation}\label{eq:ezchi-1/n-bound}
    0\leq e^{2\big|\frac{z}{\sqrt n}\big|^2}\chi_\rho\Big(\frac{z}{\sqrt n}\Big) - 1 \leq 10 M_{4-\delta}  \frac{|z|^{4-\delta}}{n^{\frac{4-\delta}{2}}} \leq \frac 1n,
\end{equation}
where in the last step we once again use $z \leq C \sqrt{\ln n}$. Therefore, using the inequality $0 \leq (1+x)^n - (1 + n x) \leq 2 (n x)^2$, for $0 \leq x \leq \frac 1n$, we arrive at
\begin{align*}
    0 &\leq \bigg( 1 + \Big( e^{2 \big|\frac{z}{\sqrt n}\big|^2}\chi_\rho\Big(\frac{z}{\sqrt n}\Big) - 1\Big)\bigg)^n  - \bigg( 1 + n  \Big( e^{2 \big|\frac{z}{\sqrt n}\big|^2}\chi_\rho\Big(\frac{z}{\sqrt n}\Big) - 1\Big)\bigg)  \\
    &\leq 2 n^2 \bigg( e^{2 \big|\frac{z}{\sqrt n}\big|^2}\chi_\rho\Big(\frac{z}{\sqrt n}\Big) - 1\bigg)^2  \\
    &\leq 200 M_{4-\delta}^2  \frac{|z|^{2 (4-\delta)}}{n^{2-\delta}},
\end{align*}
where in the last line we once again use~\eqref{eq:ezchi-1/n-bound}. 
Now by the above inequality, and employ the triangle's inequality, we derive
\begin{align}
   &\bigg\| A_n- \frac 1\pi \int_{|z| \leq C \sqrt {\ln n}} \Big( 1 + n  \bigg( e^{2 \big|\frac{z}{\sqrt n}\big|^2}\chi_\rho\Big(\frac{z}{\sqrt n}\Big) - 1\Big)\bigg) e^{-2 |z|^2} D_{-z} \dd^2 z\bigg\| \nonumber \\
    & \leq  \frac {200M_{4-\delta}}{\pi n^{2-\delta}}   \int_{z \leq C \sqrt {\ln n}}  |z|^{2 (4-\delta)} e^{-2 |z|^2}  \dd^2 z  \nonumber \\
    &= \mathcal{O}\Big(\frac{1}{n^{2-\delta}}\Big).\label{eq:AppFirstTerm}
\end{align}
We also can once again use~\eqref{eq:ExampleCharBound} to bound 
\begin{align}
    \bigg \| \frac 1\pi &\int_{|z| > C \sqrt {\ln n}} \bigg( 1 + n  \Big( e^{2 \big|\frac{z}{\sqrt n}\big|^2}\chi_\rho\Big(\frac{z}{\sqrt n}\Big) - 1\Big)\bigg) e^{-2 |z|^2} D_{-z} \dd^2 z \bigg \| \nonumber \\
    &\leq \frac 1\pi  \int_{|z| > C \sqrt {\ln n}} \Big| 1 + n  \Big( e^{2\big |\frac{z}{\sqrt n}\big|^2}\chi_\rho\Big(\frac{z}{\sqrt n}\Big) - 1\Big)\Big| e^{-2 |z|^2} \dd^2 z \nonumber \\
    &\leq \frac 1\pi  \int_{|z| > C \sqrt {\ln n}}  \Big(1 + n  \Big( e^{2 \big|\frac{z}{\sqrt n}\big|^2} e^{-\frac 12 \big|\frac{z}{\sqrt n}\big|^2} - 1\Big) \Big) e^{-2 |z|^2} \dd^2 z \nonumber \\
    &= \mathcal{O}(n^{-C/4+1}).\label{eq:AppThirdTerm}
\end{align}
Therefore, putting ~\eqref{eq:AppSecondTerm},~\eqref{eq:AppFirstTerm} and~\eqref{eq:AppThirdTerm} in~\eqref{eq:SplitApp1}, for $C=20$,  we obtain
\begin{align*}
     \bigg \| \rho^{\boxplus n}- \frac 1\pi &\int_{\mathbb C} \bigg( 1 + n  \Big( e^{2 \big|\frac{z}{\sqrt n}\big|^2}\chi_\rho\Big(\frac{z}{\sqrt n}\Big) - 1\Big)\bigg) e^{-2 |z|^2} D_{-z} \dd^2 z \bigg \| =\mathcal{O}\Big(\frac{1}{n^{2-\delta}}\Big). 
    \end{align*}
 On the other hand, employing~\eqref{eq:app-char-rho-w-2} we can verify that  the function  
\begin{align*}
\bigg( 1 + n  \Big( e^{2 \big|\frac{z}{\sqrt n}\big|^2}\chi_\rho\Big(\frac{z}{\sqrt n}\Big) - 1\Big)\bigg) e^{-2 |z|^2} & = e^{-2 |z|^2} + n \Big( e^{2 \big|\frac{z}{\sqrt n}\big|^2-2|z|^2}\chi_\rho\Big(\frac{z}{\sqrt n}\Big) - e^{-2 |z|^2}\Big)\\
& = e^{-2 |z|^2} + n \int_{1/2}^{+\infty} w(s)\Big( e^{-2s_n^2|z^2|} - e^{-2 |z|^2}\Big) \dd s,
\end{align*}
equals the characteristic function of 
     \begin{align*}
    \rho_G +  n \int (\tau_{s_n} - \tau_1) w(s) \dd s,
    \end{align*}
where $s_n = \sqrt{1+ \frac{s^2-1}{n}}$. Using this in the previous bound, gives the desired result.  
\end{proof}


\end{document}